\newcommand{\norm}[1]{\left|#1\right|_{\star}}
\newcommand{\flo}[1]{\lfloor #1\rfloor}
\newcommand{\cei}[1]{\lceil #1\rceil}
\newcommand{\cg}[0]{  g/5} 
\newcommand{\cgh}[0]{  g/10}
\newcommand{\bL}[0]{ 2^{-\cgh +2n+4}+ 40\beta\cdot  \kappa_H\cdot  2^{-r}}
\newcommand{\nI}[0]{T}
\newcommand{\tn}[1]{\left|#1\right|_{1}}
\newcommand{\sn}[2]{\left|#1\right|_{#2}}
\newcommand{\Mod}[0]{\!\!\!\mod}
\newcommand{\qpe}[0]{\bm{\rm{QPE}}}
\newcommand{\fqpe}[0]{\bm{\rm{FQPE}}}
\newcommand{\QPE}[0]{\bm{\rm{BQPE}}}
\newcommand{\sQPE}[0]{\rm{QPE}}
\newcommand{\FQPE}[0]{\bm{\rm{FBQPE}}}
\newcommand{\bb}[0]{\bm{b}}
\newcommand{\RM}[1]{\uppercase\expandafter{\romannumeral#1}}
\newcommand{\acc}[0]{{\sc{Accept}}}
\newcommand{\aacc}[0]{{\sc{AltAccept}}}
\newcommand{\rej}[0]{{\sc{Reject}}}
\newcommand{\cLd}[0]{\mathcal{L}^{\diamond}}
\newcommand{\cLs}[0]{\mathcal{L}^{(s)}}
\newcommand{\cLt}[0]{\mathcal{L}}
\newcommand{\cP}[0]{\mathcal{P}}
\newcommand{\cJ}[0]{\mathcal{J}}
\newcommand{\cJt}[0]{\mathcal{J}}
\newcommand{\cI}[0]{\mathcal{I}}
\newcommand{\cD}[0]{\mathcal{D}}
\newcommand{\cET}[0]{\mathcal{E}[\tau]}
\newcommand{\cEN}[0]{\mathcal{E}}
\newcommand{\cC}[0]{\mathcal{C}}
\newcommand{\cH}[0]{\mathcal{H}}
\newcommand{\cR}[0]{\mathcal{R}}
\newcommand{\cF}[0]{\mathcal{F}}
\newcommand{\cM}[0]{\mathcal{M}}
\newcommand{\bm}[1]{\boldsymbol{#1}}
\newcommand{\ket}[1]{\left\vert #1 \right\rangle}
\newcommand{\bra}[1]{\left\langle #1 \right\vert}
\newcommand{\ketbra}[2]{\vert#1\rangle\!\langle#2\vert}
\newcommand{\bC}[0]{\mathbb{C}}
\newcommand{\bR}[0]{\mathbb{R}}
\newcommand{\bN}[0]{\mathbb{N}}
\newtheorem{definition}{Definition}
\newtheorem{assumption}{Assumption}
\newtheorem{lemma}[definition]{Lemma}
\newtheorem{algorithm}[definition]{Algorithm}
\newtheorem{theorem}[definition]{Theorem}
\newtheorem{corollary}[definition]{Corollary}
\newenvironment{proof}{{\bf Proof:}}{\hfill\rule{2mm}{2mm}}
\begin{document}

\title{Quantum Metropolis Sampling via  Weak Measurement}

\author[1]{Jiaqing Jiang~\thanks{jiaqingjiang95@gmail.com}}
\affil[1]{Department of Computing and Mathematical Sciences, California Institute of Technology, Pasadena, CA, USA}

\author[2]{Sandy Irani~\thanks{irani@ics.uci.edu}}
\affil[2]{Computer Science Department, University of California, Irvine, CA, USA}
\date{}
\maketitle

\begin{abstract}

Gibbs sampling is a crucial computational technique  used in physics, statistics, and many other scientific fields. For classical Hamiltonians, the most commonly used Gibbs sampler  is the  Metropolis algorithm, known for having
the Gibbs state as its unique fixed point. 
For quantum Hamiltonians, designing provably correct Gibbs samplers has been more challenging.
~\cite{temme2011quantum} introduced a novel method that uses quantum phase estimation (QPE) and
the Marriot-Watrous rewinding technique to mimic the classical Metropolis algorithm for quantum Hamiltonians.
The analysis of their algorithm relies upon the use of a boosted and shift-invariant version of QPE
which may not exist \cite{chen2023quantum}.
Recent efforts to design  quantum Gibbs samplers  take a very different approach and are based on simulating Davies generators ~\cite{chen2023quantum,chen2023efficient,rall2023thermal,ding2024efficient}.
Currently, these are the only provably correct Gibbs samplers for quantum Hamiltonians.

We revisit the inspiration for the  Metropolis-style  algorithm of~\cite{temme2011quantum} and incorporate weak measurement to design a conceptually simple and provably correct quantum Gibbs sampler,  with the Gibbs state as its approximate unique fixed point. 
    Our method uses a Boosted QPE which takes the median of multiple runs of QPE, but we do not require the shift-invariant property. In addition, we do not use the Marriott-Watrous rewinding technique which  simplifies the algorithm significantly.

\end{abstract}
\newpage 
\tableofcontents

\newpage

\section{Introduction}

One of the primary motivations for building quantum computers is to simulate quantum many-body systems.
While there has been significant progress in simulating quantum  dynamics~\cite{berry2014exponential,berry2015hamiltonian,cleve2016efficient}, much less is known about  preparing
 ground states and Gibbs states, an essential task in  understanding the static properties of a system. 
 In particular, the properties of Gibbs states, which describe the thermal equilibrium of a system at finite temperature, are closely related to central topics in 
 condensed matter physics and quantum chemistry~\cite{alhambra2023quantum}, such as  the electronic binding energy and phase diagrams.
 In addition to applications in physics, Gibbs states are widely used as generative machine learning models, such as  classical and  quantum Boltzmann machines~\cite{hinton1983optimal,amin2018quantum}.
 Algorithms for preparing Gibbs states are also used as   subroutines  in other applications, such as solving semidefinite programs~\cite{brandao2019quantum,van2017quantum}.

 Typically, a good Gibbs state preparation algorithm (Gibbs sampler) should satisfy two requirements:
 it should have the Gibbs state as its \textit{(unique) fixed point} and it should be \textit{rapidly mixing}. 
 The  fixed point property ensures the \textit{correctness} of the Gibbs sampler.
An algorithm that keeps the Gibbs states invariant and shrinks any other state will eventually converge to the Gibbs states after  a sufficiently long  time. 
 Our algorithm satisfies an approximate version of the fixed point property.
 The  mixing  time determines the \textit{efficiency} of the algorithm. In particular, an algorithm is said to be fast mixing  if it   convergences to  Gibbs state  in $poly(n)$ time. 
 
 The focus of this manuscript is the correctness part, that is, designing a  Gibbs sampler which satisfies  the  fixed point property. 
 For classical Hamiltonians, like the Ising model, the  fixed point property is easily satisfied by the celebrated Metropolis algorithm~\cite{metropolis1953equation} which has become one of the most widely used algorithms throughout science. For quantum Hamiltonians, designing an algorithm which provably converges to the Gibbs state has been more challenging. As noticed in the pioneering work of ~\cite{temme2011quantum} ten years ago, designing a Metropolis-type algorithm for quantum Hamiltonians is non-trivial,  mainly due to two reasons: 

 \begin{itemize}
     \item[(1)] Quantum computers with finite resources cannot distinguish eigenvalues and eigenstates with infinite precision. 
     \item[(2)]  Mimicking the rejection process in quantum Metropolis requires reverting a quantum measurement.
 \end{itemize}

  \cite{temme2011quantum} eased the first challenge using a boosted version of QPE  which sharpens  the accuracy by taking the  median of multiple runs. They addressed the second challenge using the Marriott-Watrous rewinding technique along with a shift-invariant version of QPE in the case of a rejected move. 
  As a result, their analysis  depends on  a version of $\sQPE$
  that  is both boosted and shift-invariant. 
   Recent work~\cite{chen2023quantum} suggests  a version of $\sQPE$ with both of those properties may  be impossible. While \cite{temme2011quantum} provides many innovative ideas, a provably correct quantum Gibbs sampler remained elusive for some time. 
   Recently, Chen \textit{et.al.}~\cite{chen2023quantum} designed the first Gibbs sampler which provably satisfies the fixed point property for general Hamiltonians  based on a significantly different approach.  Their method is based on simulating   quantum master equations (Lindbladians)
  which more closely mimics the way that systems thermalize in nature. Their algorithm approximately simulates the Davies generator~\cite{davies1976quantum,davies1979generators}, which describes the evolution of quantum systems coupled to a large heat bath in the weak coupling limit. It is worth mentioning that the Davies generator by itself also assumes the ability to distinguish  eigenvalues with infinite precision, and thus cannot be efficiently simulated  by quantum computers.  To resolve this problem, \cite{chen2023quantum} devised a method to smooth the Davies’ generator by using a weighted operator Fourier Transform for Lindbladians.

 Although \cite{chen2023quantum} provides a provably correct quantum Gibbs sampler by  simulating the thermalization process occurring in  nature, it is natural to ask whether a Metropolis-style quantum Gibbs sampler can be designed. 
 Are there intrinsic reasons why  an algorithm  based on the classical Metropolis  process cannot work for quantum Hamiltonians? 
Or on the other hand,
\begin{quote}
   Is it possible to design a provably correct quantum  Gibbs sampler, which is analogous to the conceptually  simple    classical Metropolis  algorithm?
\end{quote}
\noindent
 In this manuscript, we give an affirmative answer to the above question, by designing a simple Metropolis-style Gibbs sampler. 
  Our algorithm uses many of the components of 
 \cite{temme2011quantum}, but there are some key differences:
 (1) We  use \textit{weak measurement} in determining whether to accept or reject a given move. (2) 
  We use a Boosted QPE and do not assume the shift-invariant property. (3) We do not use Marriott-Watrous Rewinding tehchnique~\cite{marriott2005quantum}, which  simplifies the algorithm considerably.

For (3), more precisely,
  recall that  the  mechanism that \cite{temme2011quantum} uses to back up in a reject case 
 requires a $poly(n)$ sequence of  forward and backward unitaries and complex measurements until the backing up process succeeds. 
 In comparison, 
 we   use only \textit{one single-qubit measurement} and \textit{one  unitary} for rewinding.  This simplification is achieved by noticing that   after one round of unsuccessful rewinding, the  state is almost equivalent to the 
 state in the accept case. This allows us to accept and conclude the iteration in one step instead of attempting to rewind again. We call this case an {\it Alternate Accept}. 
 This observation  is an essential feature of our analysis, since  we would still need to perform Marriott-Watrous rewinding  without the Alternate Accept case, even with the weak measurement.
One  remark is that while
weak measurement helps in rewinding,  it comes  at the expense of increasing  the number of iterations by a polynomial factor. Also, while our algorithm is simple, it does not have the most favorable scaling as a function of the system parameters and desired precision.

We note that weak measurement is also one of the reasons   why the  approaches based on the Davies generator  succeed, since simulating a Lindbladian requires the use of weak measurement.  Our algorithm also effectively approximates the evolution of a Lindbladian. In this sense our algorithm is conceptually similar to \cite{chen2023efficient}. The key difference is that our Gibbs sampler is directly designed from QPE instead of a Davies generator.

\section{Overview}

\subsection{Algorithm Overview}
\label{sec:algover}
Given a local Hamiltonian $H$ and an inverse temperature $\beta$, the goal is to design a quantum algorithm which prepares the Gibbs states
$\rho_\beta= \exp(-\beta H)/Z$, where $Z=tr(\exp(-\beta H))$.

Our quantum  algorithm attempts to mimic the  classical Metropolis algorithm  similar to \cite{temme2011quantum}. Recall that the classical Metropolis algorithm is a random walk whose states are eigenstates of a classical Hamiltonian. In each iteration, the algorithm starts in some state $x$ with energy $\nu_x$. A jump operator is applied to alter $x$ in some way to obtain a new state $y$ with energy $\nu_y$. Then a randomized decision is made whether to  \mbox{\acc} the move and remain in state $y$, or  \mbox{\rej}  the move and revert back to $x$. The acceptance probability is defined by a function $f_{\nu_x\nu_y}\in [0,1].$   The Metropolis acceptance rule is designed so that the random walk converges to the Gibbs state. In particular, the rule must satisfy  
$$		\exp(-\beta \nu_x )f_{\nu_x\nu_y}= \exp(-\beta \nu_y )f_{\nu_y\nu_x}.$$

	Metropolis sampling uses the following function $f$:
$$		f_{\nu_x\nu_y}:=\min \left\{1, \exp\left(\beta \nu_x-\beta\nu_y\right) \right\}.$$

 The main obstacle in adapting the classical Metropolis algorithm to the quantum setting, is that a measurement must be performed in deciding whether to accept or reject. Then   
 in the  \mbox{\rej} case, one needs to rewind back to the  state 
  before the move, thus reverting a quantum measurement, which is  hard. The algorithm presented here manages this difficulty effectively by using weak measurement in determining whether to \mbox{\acc} or \mbox{\rej}. It is worth noting that, in contrast to the classical case where we can compute the energy $\nu_x$ exactly, there are intrinsic limitations on our ability for  estimating energies  of quantum states (due to the energy-time uncertainty principle). Analyzing  the errors incurred by imperfect quantum energy estimation is non-trivial, and is one of the most technical parts in  all related works~\cite{chen2023quantum,chen2023efficient,wocjan2023szegedy,temme2011quantum}. We will explain more in the overview of techniques section.

The quantum algorithm uses four registers. The first is an $n$-qubit register which stores the current state of the algorithm. The next two each have $gr$ qubits for integers $g,r$ and are used to store the output of an application of the Boosted Quantum Phase Estimation ($\QPE$) algorithm, which provides an estimate of the state's eigenvalue.  The last register is a single qubit register which controls whether we accept or reject the new state.

The algorithm uses three different operations outlined below.
Let $H$ be an $n$-qubit local Hamiltonian.  We  use $\left\{\ket{\psi_j},E_j\right\}_j$ to denote an ortho-normal eigenbasis of $H$ and their corresponding eigenvalues. 

\begin{description}
    \item{\bf Boosted Quantum Phase Estimation ($\QPE$):}  
    $\QPE$ is a unitary on two registers  of $n$ and $gr$ qubits respectively.   If $\QPE$ starts with an eigenstate of $H$ in the first register and the  second register is initialized to $g$ copies of $\ket{0^r}$, then $\QPE$ corresponds to performing $g$ independent iterations of the standard Quantum Phase Estimation with respect to the first register and storing the result in each copy of $\ket{0^r}$. This process leaves the first register unchanged and outputs $g$ independent estimates of $E_j$ in the second register. Each $r$-bit string $\bb$ in the second register represents an energy $E(\bb)$ defined as:
    \begin{align}
 &E(\bb):= 	\kappa_H \sum_{j=1}^r b_j2^{-j}\label{eq:Ebb_main}.
 \end{align}
 where $\kappa_H$  is a power of two that upper bounds $\|H\|$.
The set $S(r):=\{E(\bb)\}_{\bb\in\{0,1\}^r}$ is  the set of energies that can be represented by $r$-bit strings, which are integer multiples of $\kappa_H \cdot 2^{-r}$.  To ease notation, we use notation $\ket{E}$ for $E \in S(r)^{\otimes g}$ as the  basis of  the second register, instead of using   strings in $\{0,1\}^{rg}$. Thus, $\QPE$ operates as:    
    \begin{align}
        \QPE	\ket{\psi_j}\ket{0^{gr}}= \ket{\psi_j} \sum_{E \in S(r)^{\otimes g}} \beta_{jE} \ket{E}\label{eq:overview_QPE}
    \end{align}
    The cost of $\QPE$ is  $g\cdot poly(2^r,n)$.

 We write $\overline{E}$ as the median of the $g$ energy estimates in $E$.  We denote  $\flo{E_j}, \cei{E_j}$ as   the  best two approximations of $E_j$ in $S(r)$, that is the 
 closet  value to $E_j$ which is an integer multiples of $\kappa_H \cdot 2^{-r}$ and is smaller/greater than $E_j$ respectively. 
More details on $\QPE$ are given in Appendix \ref{appendix:QPE}. 
 
 We also use a  variant of $\QPE$, which we call $\FQPE$ (Flipped Boosted Quantum Phase Estimation), where  the amplitudes of the output of $\FQPE$ are the complex conjugates of $\QPE$:
 $$\FQPE	\ket{\psi_j}\ket{0^{gr}} = \ket{\psi_j} \sum_{E \in S(r)^{\otimes g}} \beta^*_{jE} \ket{E}.$$
 The implementation of $\FQPE$ is a slight modification of  $\QPE$ and  is given in Appendix \ref{appendix:QPE}. It is worth noting that $\FQPE \neq \QPE^\dagger$.
    
    \item{\bf Jump operators:}
    A set of unitaries $\{C_j\}_j$ called {\it jump operators} and a distribution $\mu$ over this set.
    We require that the set $\{C_j\}_j$ is closed under adjoint. In addition, for any $C \in \{C_j\}_j$,  we require that $\mu$ chooses $C$ and $C^\dagger$ with the same probability. 
    We use $C \leftarrow \mu$ to denote a selection of $C$ drawn according to distribution $\mu$. 
    
    To ensure the uniqueness of the fixed point, we also require that the algebra generated by $\{C_j\}_j$ is equal to the full algebra, that is the set of all $n$-qubit operators. For example one can choose $\{C_j\}_j$ to be all single-qubit Paulis.
    \item{\bf Acceptance Operator ($W$):} Finally, we use a unitary which calculates the acceptance probability based on the two energies stored in registers $2$ and $3$, scaled by a factor of $\tau^2$ and rotates the last qubit by the square root of the acceptance probability. More precisely,    $W$ operates on registers $2$, $3$, and $4$ as:
	\begin{align}
		&W := \sum_{E,E'\in S(r)^{\otimes g}} \ket{EE'}\bra{EE'}\otimes 	\begin{bmatrix}
 			\sqrt{1-\tau^2  f_{EE'}}&  \tau  \sqrt{f_{EE'}}\\
 					\tau  \sqrt{f_{EE'}} & -\sqrt{1-\tau^2 f_{EE'}}
 					\end{bmatrix}.
      \end{align}
   
where $f_{EE'}$ is the Metropolis acceptance rate based on the median of energy estimates:
\begin{align} f_{EE'}:=\min \left\{1, \exp\left(\beta \overline{E}- \beta \overline{E'}\right) \right\}	
\end{align}
 We can think of $W$ as computing the median of $E$ and $E'$ to get $\overline{E}$ and $\overline{E}'$ respectively and then rotating the last qubit by $f_{EE'}$.
 The median operation $\overline{E}$ is used to boost the energy estimation, suppressing the probability of an incorrect estimate with Chernoff bounds. 
     The operator $W$ is the same as the one used in \cite{temme2011quantum} with the addition of the slow-down factor of $\tau^2$. Note that
      \begin{align}
      W \ket{EE'}\ket{0} =   \ket{EE'} \left( \sqrt{1-  \tau^2 f_{EE'}}\ket{0}+  \tau \sqrt{f_{EE'}}	\ket{1}\right)
	\end{align}
\end{description}

With those components defined, we can now describe an iteration of our algorithm depicted in Figure \ref{fig:iter}. 
The Algorithm Sketch below is a  high-level overview of the algorithm. The complete pseudo-code is given in Section \ref{sec:algo}.

\begin{figure}[h!] \centering
	\includegraphics[width=0.8\textwidth]{./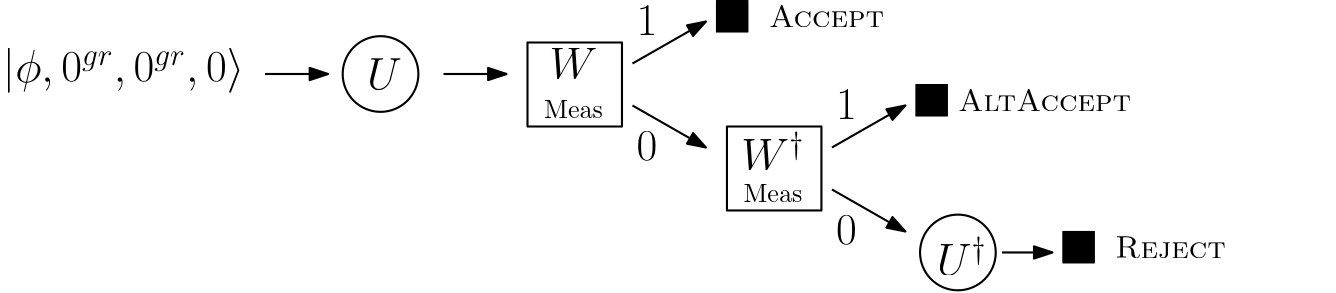}\caption{One iteration of the algorithm. The operation $U$ is $\sQPE_{1,3} \circ C \circ \sQPE_{1,2}$. The two measurements are performed on the last qubit only. The $\blacksquare$ symbol indicates that the last three registers are traced out and replaced by fresh qubits in the $\ket{0}$ state. } \label{fig:iter} 
\end{figure}

{\bf Algorithm Sketch:}  In each iteration, the algorithm chooses one of $\{\QPE,\FQPE\}$ with equal probability. The selected operation is called QPE. 
A jump operator $C \leftarrow \mu$ is also selected.
We use $\cC$ to denote the random choices for QPE and $C$ made in a particular iteration. 
The algorithm starts an iteration with a state $\ket{\phi}$ in register $1$ and all $0$'s in the other three registers. Then: 
 
\begin{itemize}
\item QPE is first applied to the current state, and the estimate of the eigenvalue is stored in register $2$. 
 Then the jump operator $C$ is applied to the state in register $1$ to obtain a new state.  QPE is then applied to the new state in register $1$ and  the estimate of its eigenvalue is stored in register $3$. We call the sequence of these three operations $U_{\cC} := \sQPE_{1,3} \circ C \circ \sQPE_{1,2}$.

\item Then $W$ is applied to registers $2$, $3$, and $4$, and the last qubit is measured to get {\sc Outcome1}.
    \item If {\sc Outcome1}  $= 1$, the algorithm accepts  the move (Case: \acc) and continues. 
    \item If {\sc Outcome1}  $= 0$, then $W^{\dag}$ is applied and the  last qubit is measured again to get {\sc Outcome2}.
    \begin{itemize}
    \item The case in which {\sc Outcome2} $ = 1$  represents an alternative way of accepting the move (Case: \aacc).
    \item If {\sc Outcome2} is $0$, then this represents a rejection of the move 
(Case: \rej), in which case  $U_\cC^{\dagger}$ is applied.
\end{itemize}
\item Finally, registers $2$, $3$, and $4$ are traced out and replaced by fresh qubits in  all $0$'s states. 
\end{itemize}

\subsection{Informal Statement of Results}

Let $\cEN(\rho)$ be the map corresponding to one iteration of the algorithm.
Recall that  $\tau$ is the parameter in the acceptance operator $W$, and $g,r$ are the precision parameters in $\QPE$. Our main result is proving that
 our algorithm approximately fixes the Gibbs states:
\begin{theorem}[Informal version of theorem \ref{thm:main}] \label{thm:intro_main} $\cEN$ can be expressed as
$$\cEN  
= \cI + \tau^2 \cLt + \tau^4 \cJt.$$
where $\cLt$ is independent of $\tau$ and approximately fixes the Gibbs state. More precisely
for any $\delta$, parameters $g$ and $r$ can be chosen so that  $g=O(n+\log 1/\delta)$ and $r= O(\log \beta +\log \kappa_H+\log 1/\delta)$, and
$$\tn{\cLt(\rho_\beta)}\leq \delta.$$ 
\end{theorem}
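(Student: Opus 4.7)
My plan is to Taylor-expand the single-iteration channel $\cEN$ in powers of $\tau$, identify the $\tau^2$ coefficient as a Lindbladian-like map $\cLt$, and argue that on $\rho_\beta$ its jump and decay pieces cancel up to the error from imperfect QPE. For fixed random choices $\cC=(\sQPE\text{ or }\FQPE,\,C)$, set $F=\sum_{E,E'}f_{EE'}\ket{EE'}\!\bra{EE'}$; since $W\ket{0}_4=\sqrt{I-\tau^2F}\,\ket{0}_4+\tau F^{1/2}\ket{1}_4$, the three branches are implemented by the following Kraus operators on registers $1,2,3$ (with register $4$ initialized to $\ket{0}$ and traced out):
\[
K_A=\tau F^{1/2}U_\cC,\quad K_{AA}=\tau\sqrt{I-\tau^2F}\,F^{1/2}U_\cC,\quad K_R=I-\tau^2 U_\cC^\dagger FU_\cC.
\]
A direct check yields $K_A^\dagger K_A+K_{AA}^\dagger K_{AA}+K_R^\dagger K_R=I$, so the channel is CPTP; each $K_\star\rho' K_\star^\dagger$ is a power series in $\tau^2$ (with $\rho'=\rho\otimes\ket{\xi}\!\bra{\xi}$ and $\ket{\xi}=\ket{0^{gr}0^{gr}}$), so $\cEN=\cI+\tau^2\cLt+\tau^4\cJt$ with $\tau$-independent $\cLt$. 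Reading off the $\tau^2$ coefficient and averaging over $\cC$,
\[
\cLt(\rho)=\mathbb{E}_\cC\Big[2\mathrm{Tr}_{2,3}\!\big[F^{1/2}U_\cC\rho'U_\cC^\dagger F^{1/2}\big]-\mathrm{Tr}_{2,3}\!\big[\{U_\cC^\dagger FU_\cC,\,\rho'\}\big]\Big],
\]
which is manifestly trace-annihilating and has the Lindbladian shape of jump minus matched anticommutator decay.

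Next I would expand $\cLt(\rho_\beta)$ in the eigenbasis using $\sQPE\ket{\psi_j}\ket{0^{gr}}=\ket{\psi_j}\sum_E\beta_{jE}\ket{E}$ and $C\ket{\psi_j}=\sum_k C_{kj}\ket{\psi_k}$. The diagonal $k=k'$ contribution of the jump term lands in $\ket{\psi_k}\!\bra{\psi_k}$ with coefficient $2\sum_{j}(e^{-\beta E_j}/Z)|C_{kj}|^2\langle f\rangle_{jk}$, where $\langle f\rangle_{jk}=\sum_{E,E'}f_{EE'}|\beta_{jE}|^2|\beta_{kE'}|^2$, and the anticommutator provides a matching outgoing rate from each $\ket{\psi_k}\!\bra{\psi_k}$. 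Because $f_{EE'}=\min(1,e^{\beta(\overline E-\overline{E'})})$ satisfies the exact Metropolis identity $e^{-\beta\overline E}f_{EE'}=e^{-\beta\overline{E'}}f_{E'E}$ and the symmetry $\mu(C)=\mu(C^\dagger)$ matches $|C_{kj}|^2$ on the forward and reverse sides, substituting $\overline E\to E_j,\overline{E'}\to E_k$ makes incoming and outgoing flows cancel term by term. The $k\neq k'$ coherences, which are absent classically, are suppressed by the QPE overlap $\sum_E\beta_{jE}^*\beta_{kE}=\langle\phi_j|\phi_k\rangle$ (nearly zero for well-separated eigenvalues), and the $\QPE\leftrightarrow\FQPE$ averaging---which complex-conjugates the $\beta_{jE}$'s---combined with $C\leftrightarrow C^\dagger$ averaging is what symmetrizes these cross contributions and cancels them against the corresponding anticommutator piece.

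The quantitative bound comes from Chernoff concentration. A single standard-QPE returns an estimate within $\kappa_H 2^{-r}$ of the true eigenvalue with constant probability, so the median of $g$ independent runs gives $\sum_{E:\,|\overline E-E_j|>\kappa_H 2^{-r}}|\beta_{jE}|^2\le e^{-\Omega(g)}$. Splitting each sum in $\cLt(\rho_\beta)$ into a ``good'' region (both medians within $\kappa_H 2^{-r}$ of their eigenvalues) and a ``bad'' region, the bad part is bounded in trace norm by $e^{-\Omega(g)}$ (using $\|F\|\le 1$, $\|C\|=1$, $\mathrm{Tr}\rho_\beta=1$); on the good part, replacing $\overline E\to E_j,\overline{E'}\to E_k$ perturbs $f_{EE'}$ by a multiplicative factor $1\pm O(\beta\kappa_H 2^{-r})$, after which detailed balance zeroes out the remainder. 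Choosing $g=O(n+\log(1/\delta))$ absorbs a union bound over the eigenbasis into the bad-part budget below $\delta/2$, and $r=O(\log\beta+\log\kappa_H+\log(1/\delta))$ keeps the good-part multiplicative error below $\delta/2$, yielding $\tn{\cLt(\rho_\beta)}\le\delta$.

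The hard part will be controlling the $k\neq k'$ coherences: the QPE-estimate registers are only approximately classically correlated with the eigenbasis labels, so the Lindbladian structure of $\cLt$ is approximate rather than exact, and carefully tracking how the $\QPE\leftrightarrow\FQPE$ complex conjugation combines with Hermitian conjugation of $C$ to squeeze these cross-eigenstate terms into the same exponentially small Chernoff tail as the diagonal error is where I expect most of the technical bookkeeping to sit.
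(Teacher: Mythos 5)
Your first half --- the Kraus decomposition $K_A=\tau F^{1/2}U_\cC$, $K_{AA}=\tau\sqrt{I-\tau^2F}\,F^{1/2}U_\cC$, $K_R=I-\tau^2U_\cC^\dagger FU_\cC$, the CPTP check, and the resulting formula $\cLt=\mathbb{E}_\cC[2\,\mathrm{tr}_{2,3}(F^{1/2}U_\cC\rho'U_\cC^\dagger F^{1/2})-\mathrm{tr}_{2,3}\{U_\cC^\dagger FU_\cC,\rho'\}]$ --- is correct and essentially identical to the paper's Section \ref{sec:CTMC}.

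The second half has a genuine gap, and it sits exactly where you flagged "most of the technical bookkeeping." Your plan is: (i) suppress the $k\neq k'$ coherences via the overlap $\sum_E\beta_{jE}^*\beta_{kE}$, which you say is "nearly zero for well-separated eigenvalues," and (ii) on the good Chernoff region, replace $\overline E\to E_j$, $\overline{E'}\to E_k$ at a multiplicative cost $1\pm O(\beta\kappa_H2^{-r})$ and let detailed balance kill the rest. Step (i) fails for general Hamiltonians: there is no separation assumption, and a Hamiltonian with $\|H\|\leq poly(n)$ has exponentially many eigenvalues, so typical gaps are exponentially small compared to the QPE resolution $\kappa_H2^{-r}=1/poly(n)$; the QPE output distributions for neighboring $\ket{\psi_j},\ket{\psi_k}$ then overlap substantially and the coherences are \emph{not} small. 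Step (ii) is incomplete for a subtler reason the paper makes explicit: after the substitution, $\cLt(\rho_{\beta})$ (more precisely its truncated version) is an error matrix with exponentially many nonzero off-diagonal entries in the eigenbasis, each of magnitude $1/poly(n)$. An entrywise bound $\tn{M}\leq\sum_{m,k}|M_{mk}|$ then costs a factor $2^{2n}$, which you can only afford for the exponentially small ``else'' tail (this is precisely why $g=O(n+\log1/\delta)$), not for the $1/poly(n)$ rounding error --- you would be forced to take $r=\Omega(n)$ and the algorithm would run in time $poly(2^r)=\exp(n)$.

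What your sketch is missing is the paper's uniform-error mechanism (Lemma \ref{lem:uniform_error}): insert the projectors $P^{(0)},P^{(1)},P^{(else)}$ after each of the four occurrences of $\QPE$, splitting $\cM^{(a)}-\cM^{(rr)}$ into a \emph{constant} number of clusters indexed by which way each QPE estimate rounds. Within a single cluster $(v,s)$ the Metropolis identity fails only by the scalar factor $e^{\beta(s-v)w}$, which is \emph{independent of the matrix indices} $m,k,j$; hence the whole cluster of the difference equals $(1-e^{\beta(s-v)w})\cdot\cM^{(a,AvsY)}(\rho_{\beta0})$, a scalar of size $O(\beta w)$ times a matrix whose trace norm is bounded by $2$ as a whole (Lemma \ref{lem:uvst_norm}), with no $2^{2n}$ blow-up. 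Without observing that the error factor is uniform per cluster --- so that it can be pulled out of the trace norm rather than bounded entry by entry --- the quantitative claim $\tn{\cLt(\rho_\beta)}\leq\delta$ with $r=O(\log\beta+\log\kappa_H+\log1/\delta)$ does not follow from your argument.
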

 Here $\tn{\cdot}$ refers to the trace norm.

 The proof that the fixed point of our algorithm is approximately the Gibbs State depends upon the assumption that
$\cLt$ is fast mixing, meaning that $\cLt$ converges to its fixed point $\rho_\cLt$ in $poly(n)$ time. More precisely,
we can combine the fast mixing property with the fact that $\tn{\cLt(\rho_\beta)}\approx 0$ from Theorem \ref{thm:intro_main} above to establish that  the fixed point of $\cLt$  is close to the true Gibbs state: $\rho_\cLt\approx \rho_\beta$. 
The next step then is to show that starting from an arbitrary state, repeated iterations of our algorithm   will result in a good approximation of $\rho_\cLt$. In particular,
the second part of Theorem \ref{thm:intro_main}, which says that 
 $\cEN=\cI+\tau^2 \cLt+\tau^4 \cJt$, implies that $\cEN^K\approx e^{K\tau^2 \cLt}$, where $\cEN^K$ corresponds to $K$ iterations of the algorithm. The error in the approximation scales as $K\tau^4$.
$K$ and $\tau$ can be chosen so that  $K\tau^2$ is polynomial in $n$ but the approximation error $K \tau^4$ is still small.  Assuming that $\cLt$ is fast mixing, we can conclude that $ e^{K\tau^2 \cLt}$  drives any states to $\rho_\cLt$. This reasoning leads to Theorem \ref{thm:intro_mix} below. 
\begin{theorem}[Informal version of Theorem \ref{thm:mix}]\label{thm:intro_mix}
    Suppose the mixing time of $\cLt$ is $poly(n)$. Assume $\beta,\kappa_H\leq poly(n)$.  For parameters  $\tau=1/poly(n)$, $g=O(n)$, $r=O(\log n)$,  $K=poly(n)$, and  for any initial state $\rho$, we have
    \begin{align}
        \tn{\cEN^K(\rho)-\rho_\beta}\leq 1/poly(n).
    \end{align}
\end{theorem}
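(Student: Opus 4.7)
The plan is to combine the decomposition $\cEN = \cI + \tau^2 \cLt + \tau^4 \cJt$ from Theorem \ref{thm:intro_main} with the fast-mixing hypothesis on $\cLt$, via the triangle inequality
\[
\tn{\cEN^K(\rho)-\rho_\beta}
\leq \tn{\cEN^K(\rho)-e^{K\tau^2\cLt}(\rho)} + \tn{e^{K\tau^2\cLt}(\rho)-\rho_\cLt} + \tn{\rho_\cLt-\rho_\beta},
\]
and then to tune $\tau$, $g$, $r$, and $K$ so that each of the three contributions is $1/poly(n)$.

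For the first term, I would Taylor-expand $e^{\tau^2\cLt}=\cI+\tau^2\cLt+\tfrac{\tau^4}{2}\cLt^2+\cdots$ and obtain $\tn{(\cEN-e^{\tau^2\cLt})(\sigma)}=O(\tau^4)\tn{\sigma}$ once $\cLt$ and $\cJt$ are shown to have polynomially bounded superoperator norms; this boundedness follows by direct inspection of the algorithm, since $\cEN$ is CPTP and its constituent operators $W$, $\QPE$, $\FQPE$, and the jump operators all have bounded norm. A telescoping identity
\[
\cEN^K-e^{K\tau^2\cLt}=\sum_{k=0}^{K-1}\cEN^{K-1-k}\bigl(\cEN-e^{\tau^2\cLt}\bigr)e^{k\tau^2\cLt},
\]
together with the trace-norm contractivity of $\cEN$ and the approximate contractivity of $e^{t\cLt}$ (as the leading-order generator of a CPTP semigroup), then yields $\tn{\cEN^K(\rho)-e^{K\tau^2\cLt}(\rho)}=O(K\tau^4)$. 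Choosing $\tau=1/poly(n)$ and $K=poly(n)$ so that $K\tau^2$ exceeds the mixing time of $\cLt$ while $K\tau^4$ remains $1/poly(n)$ controls this contribution.

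The second term is $1/poly(n)$ directly by the fast-mixing hypothesis, provided $K\tau^2$ is at least the mixing time of $\cLt$. For the third term, I would pick $g=O(n)$ and $r=O(\log n)$ so that Theorem \ref{thm:intro_main} gives $\tn{\cLt(\rho_\beta)}\leq \delta$ with $\delta=1/poly(n)$, and then compare $\rho_\beta$ to $\rho_\cLt$ by running the continuous semigroup from $\rho_\beta$ for time $T$ equal to the mixing time of $\cLt$:
\[
\tn{e^{T\cLt}(\rho_\beta)-\rho_\beta}\leq \int_0^T \tn{e^{s\cLt}\cLt(\rho_\beta)}\,ds \leq T\delta = 1/poly(n),
\]
which, combined with $\tn{e^{T\cLt}(\rho_\beta)-\rho_\cLt}\leq 1/poly(n)$ from fast mixing, yields $\tn{\rho_\cLt-\rho_\beta}\leq 1/poly(n)$ by another triangle inequality.

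The main obstacle is the delicate parameter balancing in the first step. It requires polynomial control on the superoperator norms of both $\cLt$ and $\cJt$, together with careful accounting so that the $O(K\tau^4)$ approximation error stays $1/poly(n)$ even while $K\tau^2$ is forced to exceed the (polynomial) mixing time of $\cLt$. Essentially one needs $\tau^2\|\cLt\|\ll 1$ uniformly, so that the full higher-order tail of the Taylor expansion of $e^{\tau^2\cLt}$ does not swamp the quadratic-in-$\tau$ estimate. Once those norm bounds are in place, selecting $\tau$ small enough and then $K$ large enough reduces to polynomial bookkeeping, and the three contributions combine to give the stated $1/poly(n)$ bound.
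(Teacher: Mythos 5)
Your proposal is correct and follows essentially the same route as the paper's proof of Theorem \ref{thm:mix}: the identical three-term triangle inequality, the same telescoping bound $\tn{\cEN^K(\rho)-e^{K\tau^2\cLt}(\rho)}=O(K\tau^4)$ using contractivity of the CPTP maps $\cEN$ and $e^{\tau^2\cLt}$ together with the constant bounds $\norm{\cLt},\norm{\cJt[\tau]}\leq 4$, and the same comparison of $\rho_\cLt$ to $\rho_\beta$ by evolving $\rho_\beta$ for one mixing time (your integral bound is just the continuous form of the paper's discrete telescoping sum followed by the limit $\tau\to 0$ in Lemma \ref{lem:dist_L_beta}). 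The only cosmetic difference is that the paper establishes exact CPTP-ness of $e^{t\cLt}$ rather than the ``approximate contractivity'' you invoke, which only strengthens your argument.
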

 Theorem \ref{thm:intro_mix} can also be rephrased in terms of the spectral gap  similarly as in the classical Metropolis algorithm. The formal statement is in Corollary \ref{cor:eb_spe_gap}. 

For the above choice of parameters the cost of implementing   $\cEN^K$
is $Kg\cdot poly(2^r,n)=poly(n)$, where the $poly(2^r,n)$ is mainly the cost of implementing one run of standard Quantum Phase Estimation. 

\subsection{Overview of Techniques}

Intuitively our algorithm  should approximately fix the Gibbs state, since it mimics the classical Metropolis. 
The approximation errors come from two resources, one is controlled  by the parameter $\tau$ in the acceptance operator $W$, and the other is controlled by the  $g$ and $r$ in $\QPE$.

Let us begin with the error from $\tau$. According to the algorithm, 
  $\cEN$ can be expressed as the sum of three operators representing the three cases: $\cEN^{(a)}$ (for \acc), $\cEN^{(b)}$ for (\aacc), and $\cEN^{(r)}$ (for \rej).
 So that $\cEN = \cEN^{(a)} + \cEN^{(b)} + \cEN^{(r)}$.
 Each of these can be further expanded to represent their dependence on $\tau$. 
 Mores specifically, in Section \ref{sec:CTMC}, we define additional operators, $\cM^{(a)}$, $\cM^{(r)}$, $\cJt^{(b)}$, $\cJt^{(r)}$, all with norm  bounded by a constant. We show that
\begin{align}
 \mbox{\acc:} ~~~  \cEN^{(a)} & = \tau^2 \cM^{(a)}\\
  \mbox{\aacc:} ~~~   \cEN^{(b)} & = \tau^2\cM^{(a)} + \tau^4 \cJ^{(b)}\\
  \mbox{\rej:} ~~~   \cEN^{(r)} & = \cI - \tau^2\cM^{(r)} + \tau^4 \cJ^{(r)}
\end{align}
Note that the operators for the \acc~ and \aacc~ cases only differ by an operator on the order of $\tau^4$, which means that the state in the \aacc~  is very close to the resulting state in the \acc~ case. 
By defining $\cLt = 2 \cM^{(a)} - \cM^{(r)}$ and $\cJ = \cJ^{(b)} + \cJ^{(r)}$, we have that
$$\cEN 
= \cI + \tau^2 \cLt + \tau^4 \cJt.$$

Most of the technical effort in the proof of Theorem \ref{thm:intro_main} is spent  showing that $\tn{\cLt(\rho_\beta)}\approx 0$. There are two features of $\QPE$ that make this fact challenging to prove. The first feature is that $\QPE$ cannot be made deterministic. More precisely, recall that  $\flo{E_j}$ and $\cei{E_j}$ are  the  best two approximations of $E_j$.  Without additional assumptions on the Hamiltonian,  the amplitudes  $\beta_{j\flo{E_j}}$ and $\beta_{j\cei{E_j}}$ are both non-negligible. 
We use  the boosted version of $\QPE$ which guarantees that the probability of generating an estimate $E$ that is different from $\flo{E_j}$ or $\cei{E_j}$ is negligibly small. However, the fact that the output of $\QPE$ will still be a superposition of 
$\flo{E_j}$ and $\cei{E_j}$ is unavoidable and makes the process inherently different from a classical random walk. Mathematically, this means that if  the algorithm begins in an eigenstate $\ket{\psi_j}$, after one iteration, the new state is no longer diagonal in the energy eigenbasis.
In particular,  $\cLt(\rho_\beta)$ can have  exponentially many non-zero off-diagonal entries when expressed in the energy eigenbasis.

The second feature of $\QPE$ that makes the analysis problematic  is that there are intrinsic limitations on the precision of energy estimation of quantum states. In particular, the energy/time uncertainty principle suggests that a $poly(n)$-time quantum algorithm (like $\QPE$) can only estimate the energy of a state within $1/poly(n)$ precision. This means that the off-diagonal entries of $\cLt(\rho_\beta)$ can  have magnitude on the order of $1/poly(n)$.
The fact that $\cLt(\rho_\beta)$ can  have exponentially many off-diagonal entries that have magnitude $1/poly(n)$ rules out a brute-force approach to bounding
$\tn{\cLt(\rho_\beta)}$.

To illustrate our approach in overcoming these technical difficulties, first imagine instead an ideal version of $\QPE$ which deterministically maps 
every $\ket{\psi_j}$ to $\flo{E_j}$.
 The resulting process has a fixed point which is a Gibbs state where the probability of each
state is proportional to $e^{-\beta \flo{E_j}}$ instead of $e^{-\beta E_j}$. 
Call this truncated fixed point $\rho_{\beta0}$.  
It's not too hard to show that $\rho_{\beta0} \approx \rho_{\beta}$, which implies $\cLt(\rho_{\beta0})\approx \cLt(\rho_{\beta})$, so we will focus instead on 
bounding $\tn{\cLt(\rho_{\beta0})}$.
For the practically realizable, non-deterministic $\QPE$,
our analysis effectively decomposes $\cLt(\rho_{\beta0})$ into a sum of a constant number of terms and bounds the trace norm of each term separately by a $1/poly(n)$.
The different terms are derived by inserting different projectors that separate out
the cases when $\QPE$ maps a state $\ket{\psi_j}$ to $\flo{E_j}$, $\cei{E_j}$, or some other $E$ entirely. We can represent the cases by defining projectors:
\begin{align*}
    P^{(0)} & = \sum_j \ketbra{\psi_j}{\psi_j} \otimes \sum_{E:\overline{E}=\flo{E_j}} \ketbra{E}{E}\\
    P^{(1)} & = \sum_j \ketbra{\psi_j}{\psi_j} \otimes \sum_{E:\overline{E}=\cei{E_j}} \ketbra{E}{E}.
\end{align*}
When the matrix
$$\cLt(\rho_{\beta0}) = \sum_j \frac{e^{-\beta \flo{E_j}}}{Z} \bra{\psi_m} \cLt( \ket{\psi_j} \bra{\psi_j} ) \ket{\psi_n}$$
is written out, there are four applications of phase estimation: $\QPE_{1,2}$ and $\QPE_{1,3}$ are applied to $\ket{\psi_j}$, and $\QPE^{\dag}_{1,2}$ and $\QPE^{\dag}_{1,3}$ are applied to $\bra{\psi_j}$. This results in 
 a total of $16$ terms 
depending on which of the two projectors ($P^{(0)}$ or $P^{(1)}$) is applied after each occurrence of $\QPE$. Recall that $\cLt=2\cM^{(a)}-\cM^{(r)}$. 
In most cases, we don't get exact cancellation between the accept and reject   operators because each case may have some multiplicative error of the form $e^{\pm \beta \delta}$, where $\delta = \cei{E_j} - \flo{E_j}$ is the precision of $\QPE$ (which is independent of $j$). 
The essential observation is that each separate case results in exactly the same multiplicative error. This is because in each case, all of the $\QPE$ estimates are erring in exactly the same direction and by exactly the same amount. We get that $\cLt(\rho_{\beta0})$ can be expressed as a linear combination of terms $\{N_j\}$:
\begin{align}
\label{eq:cases}
    \cLt(\rho_{\beta0})\approx \sum_{j=1}^{16} \mbox{error}_j \cdot N_j,
\end{align}
where $\mbox{error}_j = e^{\pm \beta \delta} -1$ and $\tn{N_j} = O(1)$. In the proof we manage to reduce the number of terms from $16$ to $6$ by carefully clustering terms.
Note that the parameters are chosen so that $e^{ \beta \delta} = (1 + 1/\mbox{poly}(n))$.
Equation (\ref{eq:cases}) is still approximate because we have not yet taken into account the case where
 $\QPE$  maps $\ket{\psi_j}$ to some value other than $\flo{E_j}$ or $\cei{E_j}$. For this case, we have a third projector:
$$P^{(else)} = \sum_j \ketbra{\psi_j}{\psi_j} \otimes \sum_{E: \overline{E}  \neq \flo{E_j},\cei{E_j}} 
 \ketbra{E}{E}.$$
When $P^{(else)}$ is used, the norm of the resulting operator is exponentially small
because of the use of boosted QPE. Specifically, we use the following lemma which is 
included in Appendix \ref{appendix:QPE}. 
  \begin{lemma}
\label{lem:QPEerr}
    If $\ket{\psi_j}$ is an eigenstate of $H$ with energy $E_j$, then 
\begin{align}
	&\QPE	\ket{\psi_j}\ket{0^{gr}} = \ket{\psi_j} \sum_{E \in S(r)^{\otimes g}} \beta_{jE} \ket{E},\label{eq:QPE}\\
	\text{where }& \sum_{E\in S(r)^{\otimes g}:\,\overline{E}\neq \flo{E_j},\overline{E}\neq \cei{E_j}}  |\beta_{jE}|^2  \leq 2^{-\cg }.\label{eq:peak}	
	\end{align}
\end{lemma}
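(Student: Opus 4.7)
The plan is to reduce the $gr$-qubit analysis to a classical median-of-independent-samples argument, since $\QPE$ is by construction $g$ independent copies of the standard Quantum Phase Estimation acting on $g$ disjoint $r$-qubit registers. Writing $E=(E_1,\ldots,E_g)\in S(r)^{\otimes g}$, the amplitudes factor:
\begin{align*}
\beta_{jE} \;=\; \prod_{i=1}^{g} \gamma_{j,E_i},
\end{align*}
where $\gamma_{j,E_i}$ is the amplitude produced by a single run of standard $\sQPE$ on $\ket{\psi_j}$. Hence $|\beta_{jE}|^2$ is a product distribution, and I can treat $(E_1,\ldots,E_g)$ as i.i.d.\ samples from the single-run QPE distribution $q_E:=|\gamma_{j,E}|^2$. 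The required sum then equals the classical probability that the median $\overline{E}$ of these $g$ samples lies outside $\{\flo{E_j},\cei{E_j}\}$.

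Next I would establish the key single-run tail bound: there is a universal $p<1/2$ such that $\Pr_{q}[E<\flo{E_j}]\le p$ and $\Pr_q[E>\cei{E_j}]\le p$. The starting point is the well-known formula for standard $\sQPE$,
\begin{align*}
q_{\flo{E_j}+k\cdot\kappa_H 2^{-r}} \;=\; \frac{\sin^{2}(\pi f)}{N^{2}\sin^{2}(\pi(f-k)/N)}, \qquad N=2^{r},
\end{align*}
where $f\in[0,1)$ parameterises the position of $E_j$ between $\flo{E_j}$ and $\cei{E_j}$. Summing the contributions for $k=0$ and $k=1$ gives $q_{\flo{E_j}}+q_{\cei{E_j}}\ge 8/\pi^{2}$, hence $p_{-}+p_{+}\le 1-8/\pi^{2}<1/5$, so both $p_{-}$ and $p_{+}$ are bounded by $1/5<1/2$. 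The corner case in which $E_j\in S(r)$ is trivial because $\sQPE$ is then exact and $\beta_{jE}=0$ whenever $\overline{E}\ne E_j$.

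Now I would reduce the median event to a one-sided deviation. The median of an odd number $g$ of samples is not in $\{\flo{E_j},\cei{E_j}\}$ precisely when at least $(g+1)/2$ samples fall strictly below $\flo{E_j}$, or at least $(g+1)/2$ fall strictly above $\cei{E_j}$ (treating even $g$ analogously). A Hoeffding bound applied to each of these two $\mathrm{Bin}(g,p_{\pm})$ tails with $p_{\pm}\le 1/5$ yields
\begin{align*}
\Pr\!\left[\,\overline{E}\notin\{\flo{E_j},\cei{E_j}\}\,\right] \;\le\; 2\exp\!\left(-2g(1/2-1/5)^{2}\right) \;=\; 2\,e^{-9g/50}.
\end{align*}
A final constant-tuning step shows $2e^{-9g/50}\le 2^{-g/5}$ for all $g$ above a small absolute threshold, which is consistent with $\cg=g/5$ as defined in the preamble, and for small $g$ the lemma is vacuous given the choice of parameters elsewhere in the paper.

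The main obstacle is the single-run estimate in the second step: one must handle the edge case where $E_j$ sits very close to $\flo{E_j}$ or $\cei{E_j}$ (so $f\to0$ or $f\to1$), in which one of $q_{\flo{E_j}},q_{\cei{E_j}}$ is nearly $1$ while the other is small. In this regime the uniform bound $p_{-}+p_{+}\le 1-8/\pi^{2}$ still holds, but one has to verify that neither tail $p_{-}$ or $p_{+}$ alone approaches $1/2$; this follows from the same Fejér-kernel estimate because the mass not on $\{\flo{E_j},\cei{E_j}\}$ decays as $1/(f-k)^{2}$ away from $k\in\{0,1\}$, so both one-sided tails are bounded by the same constant $<1/5$. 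Once this one-sided control is in hand, the Hoeffding step and constant-tuning are routine.
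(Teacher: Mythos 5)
Your proposal is correct and follows essentially the same route as the paper's Appendix~\ref{appendix:QPE}: the Fej\'er-kernel estimate $|\beta_{j\flo{E_j}}|^2+|\beta_{j\cei{E_j}}|^2\geq 8/\pi^2\approx 0.8$ obtained by bounding $\sin(x)\leq x$ in the denominator and minimizing over the offset at $\theta=1/2$, followed by a Chernoff/Hoeffding bound on the median of $g$ independent single-run outcomes. Your explicit reduction of the event $\overline{E}\notin\{\flo{E_j},\cei{E_j}\}$ to the two one-sided events (at least half the samples strictly below $\flo{E_j}$ or strictly above $\cei{E_j}$, each with single-run probability at most $1-8/\pi^2<1/5$) is the step the paper compresses into the phrase ``by Chernoff bound,'' and your version spells it out correctly.
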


Our proof works for general Hamiltonians and we do not assume non-physical assumptions like rounding promise~\cite{wocjan2023szegedy}.
One more remark is that $\QPE$ itself will create some biased phase in the resulting states. In our algorithm, we choose $\QPE$ and $\FQPE$ randomly to cancel this bias. 
Finally, the proof of uniqueness of the fixed point, which appears in Section \ref{sec:uniqueness}, is standard, and is based on showing that $\cLt$ is of Lindbladian form, and that the generators of the Lindbladian generate the full algebra of operators on the $n$-qubit Hilbert space.

\subsection{Related Work}

We have  recently become aware of the concurrent, independent work of \cite{GCDK24}, which also provides a quantum generalization of Glauber/Metropolis dynamics. In contrast with our algorithm, their method does not use QPE.
Instead, they construct a quantum  extension of discrete and continuous-time Glauber/Metropolis dynamics, 
in the style of a smoothed Davies generator~\cite{chen2023quantum, chen2023efficient}.  They show that their construction exactly fixes the Gibbs states and can be  efficiently implemented on a quantum computer. For the continuous-time case, their implementation is achieved by   Linbladian simulation~\cite{chen2023quantum}.
For the discrete-time case, they use
 oblivious amplitude amplification~\cite{gilyen2019quantum} in combination with techniques based on linear combinations of unitaries and the quantum singular value transform.

Before the appearance of \cite{chen2023quantum}, 
there were many previous results  on simulating Davies generators, with additional assumptions on the Hamiltonian. As mentioned before,
Davis generators cannot be implemented exactly since they require the ability to estimate  eigenvalues perfectly, which is impossible with  quantum computers with finite resources. \cite{wocjan2023szegedy} circumvents this problem by assuming a rounding promise on the Hamiltonian, which 
disallows eigenvalues from certain sub-intervals. \cite{rall2023thermal} later eliminates the  rounding promise by using randomized rounding, which at the same time incurs  an additional resource overhead.   
More recently, \cite{chen2023efficient} designed a weighted version of Davies generator which exactly fixes the Gibbs states (before truncating the infinite integral to a finite region).  \cite{ding2024efficient} 
introduces a family of quantum Gibbs samplers satisfying the Kubo-Martin-Schwinger detailed balance condition,  which includes the construction of  \cite{chen2023efficient}  as a special instance.
In addition to approaches based on the Davies generator, 
there are Grover-based approaches~\cite{poulin2009sampling,chowdhury2016quantum},  which prepare a purified version of Gibbs states. The performance of those algorithms depends on the overlap between the initial and the target state. 

All of the above approaches use either quantum phase estimation or quantum simulation, which involve large quantum circuits. 
In contrast to those approaches,
\cite{zhang2023dissipative} designs a dissipative quantum Gibbs sampler with  simple local
update rules. \cite{zhang2023dissipative} differs from the ordinary  Gibbs samplers in that the Gibbs state is not generated as the fixed point of the Markov Chain, but is generated on a conditionally stopped process.

 All the above work is focused on satisfying the fixed point requirement.
 A different but important task is to bound the mixing time, which  is  wide open with the exception of a few special cases. In particular,  \cite{kastoryano2016quantum} shows that for a commuting Hamiltonian on a lattice, there is an equivalence  between very rapid mixing (more precisely, constant spectral gap of the Linbladian)  and a certain strong form of clustering of correlations in the Gibbs state, which generalizes the classical result~\cite{stroock1992equivalence,stroock1992logarithmic} to the quantum setting. 
 \cite{bardet2023rapid} proves fast mixing for 1D commuting Hamiltonian at any temperature. Recently \cite{rouze2024efficient,bakshi2024high} prove fast mixing for non-commuting Hamiltonian at high temperature.

There are also many heuristic methods for preparing  Gibbs states. Previous proposals include methods that emulate the physical thermalization process by repeatedly coupling the systems to a thermal bath~\cite{terhal2000problem,shabani2016artificial,metcalf2020engineered}.  
There are also approaches based on quantum imaginary time evolution~\cite{motta2020determining} and variational algorithms~\cite{wang2021variational,lee2022variational,consiglio2023variational}.

\subsection{Conclusions and Future Work}

In this manuscript, we use weak measurement to design a quantum  Gibbs sampler based on the Metropolis algorithm which satisfies the approximate  fixed point property. Compared with previous work, the main advantage of our algorithm is its conceptual simplicity.
Our algorithm  uses a Boosted QPE which takes the median value of multiple independent runs of the standard QPE. We do not require a version of QPE that satisfies shift-invariance. In addition,  our algorithm is free of 
Marriott-Watrous rewinding techniques and
only performs  single-qubit measurements.  Technically, we  give a new way of bounding the errors incurred by imperfect energy estimation of QPE, by grouping the error terms into finite classes. Our proof works for general Hamiltonians and we do not assume the  rounding promise.
It might be interesting to investigate whether this technique can be applied to prove that the existing Gibbs Sampler for Hamiltonians with rounding promise~\cite{wocjan2023szegedy} in fact works for general Hamiltonians.

While our algorithm is simple, it comes at a cost of
 not having the most favorable scaling in the  parameters of the system and desired precision. In particular, our algorithm effectively corresponds to directly simulating 
 a Lindblad evolution  $e^{t\cLt}$
 by discretizing $t$, where the cost is scaled as $O(t^2/\epsilon)$ for precision parameter $\epsilon$. 
It is worth noting, however, that  our algorithm itself is designed directly from QPE, and  we use $e^{t\cLt }$ only in our analysis,
as opposed to first designing the  Lindbldian $\cLt$ and attempting to simulate it on a quantum computer.
It would be interesting to explore how more sophisticated Linbladian simulation techniques (such as ~\cite{li2022simulating,cleve2016efficient,childs2016efficient}) could be applied to our algorithmic structure to improve the dependence on parameters. Note that our $e^{t\cLt}$  circumvents the problem of precision in the Davies generator, because the operator $\cLt$ is already defined in terms of QPE of finite precision.

Another possible direction for future work is to compare the mixing time of different Gibbs samplers. In particular, it would  be interesting to compare, either numerically or theoretically, the spectral gap of the Davies generator, the Davies-generator-inspired Lindbldian in \cite{chen2023quantum,chen2023efficient}, and the Lindbladian $\cLt$ in our algorithm. Since our algorithm is similar to the classical Metropolis algorithm, it might also be interesting to explore whether classical techniques for analyzing mixing times can be generalized to the quantum setting.

\subsection{Structure of the manuscript}

The manuscript is structured as follows. In 
Section \ref{sec:pre} we give necessary  definitions and notations. In Section \ref{sec:algo} we give explicit pseudo-code of our  algorithm and express the result of each operation more formally. 
We  state our main results in Theorem \ref{thm:main} and Theorem \ref{thm:mix} at the beginning of Section \ref{sec:CTMC}.

The proofs of  Theorem \ref{thm:main} and Theorem \ref{thm:mix} are divided into several sections. Section \ref{sec:CTMC}  contains the derivation of the explicit formula for $\cLt$. In Section \ref{sec:Lind_unique} we show that $\cLt$ can be written as a Lindbladian and  has a unique fixed point. Section \ref{sec:Lind_unique} is independent of Section \ref{sec:fp} and can be skipped temporarily.
In Section \ref{sec:fp} we prove  $\cLt$ approximately fixes the Gibbs state.
  Finally in Section \ref{sec:eb_final} we prove  Theorem \ref{thm:main} and Theorem \ref{thm:mix}.

\section{Preliminary and Settings}\label{sec:pre}

\subsection{Gibbs states and Assumptions}\label{sec:pre_Gibbs}

For any $n$-qubit  Hamiltonian $H$, 
we always use $\left\{\ket{\psi_j},E_j\right\}_j$ to denote an ortho-normal set of (eigenstate, eigenvalue) for $H$.  We use  symbols different from $\ket{\psi_j}$ to denote other states.
For any
  inverse temperature $\beta\geq 0$,  we denote the Gibbs state as 
\begin{align}
   \rho_\beta(H) &:= \exp(-\beta H) / tr(\exp(-\beta H))\\
   &= \sum_j p_j \ket{\psi_j}\bra{\psi_j},\\
  \text{where } p_j &:= \exp(-\beta E_j)/ tr(\exp(-\beta H)).\label{eq:def_p}
\end{align}
We assume $ H\geq 0$, and its spectrum norm is bounded by $\|H\|\leq poly(n)$. Note that one can always add multiples of identity matrices to $H$ to ensure $H\geq 0$ and this operation does not change the Gibbs state.
To ease notation, we will fix $H$ and abbreviate $\rho_{\beta}(H)$ as $\rho_{\beta}$.

$\kappa_H=poly(n)$ is a power of two that upper bounds  
$\|H\|$.  For example, for local Hamiltonian $H=\sum_{i=1}^{m} H_i, \|H_i\|\leq 1$, one can set $\kappa_H$ to be the least integer which is a power of two and is greater than $m$. Then $\kappa_H\leq 2m$.

	For simplicity,
in this manuscript we assume that we can implement arbitrary 2-qubit gates exactly. Note that this assumption does not influence the generality of our results, since 
 the error analysis can be easily generalized to the practical case where we  approximate arbitrary 2-qubit gate to $1/poly(n)$ precision, by noticing that the $l_2$ norm $\|(U-V)\ket{\psi}\|_2$ is bounded by the spectrum norm $\|U-V\|$ for any $\ket{\psi}$.

\subsection{Notations and Norms.}
We use $\log $ for $\log_2$.
For a complex value $a\in \bC$, we use $a^*$ to represent its complex conjugate. For two numbers $x,y$, we use $\delta_{xy}$ to denote the function which equals to $1$ if $x=y$ and $0$ otherwise.
For a matrix $M$, we use $M^\dagger$ to denote its complex 
conjugate transpose. 
For two matrices $M,N$, 
we use $\{M,N\}_+ $ to denote their anti-commutator: $MN+NM$.
We use
 $\|M\|$ to denote the spectrum norm of $M$. For a vector $\ket{\phi}$ we use $\|\ket{\phi}\|_2$ to denote the $l_2$ norm.
$\ket{\phi}$ can be normalized or unnormalized.
 When it is necessary, we will use number subscripts to denote the name of the quantum registers. For example,  $\ket{\phi}_1$ means the state is in register $1$.

We use $\Xi(m)$ to denote the set of linear operators on an $m$-qubit Hilbert space. We use $\cH(m)$ to denote the set of \textit{Hermitian} linear operators on an $m$-qubit Hilbert space.  We say $\rho\in\cH(m)$ is an $m$-qubit quantum state if $\rho\geq 0,\rho=\rho^\dagger$ and $tr(\rho)=1$.
We use $I_m$ to denote the identity matrix on $m$ qubits. When $m$ is clear we abbreviate $I_m$ as $I$.
Given a set of linear operators $S=\{M_1,M_2,...\}\subseteq \Xi(m)$, the algebra generated by $S$ is the set of linear operators which is a finite sum $\sum_k \alpha_k P_k$, where $\alpha_k\in \bC$ and $P_k$ is a product of finite operators in $S$.

 We use symbols $\cR,\cEN,\cF...$ to represent linear maps from $\Xi(m)$ to $\Xi(m)$. We use $\cI$ to denote the identity map.  We say a linear map $\cEN:\cH(m)\rightarrow \cH(m)$ is Completely Positive and Trace Preserving (CPTP) if there exists a set of linear operators $\{A_u\}_u$ such that $\forall M\in \cH(m), \cEN(M) = \sum_{u}A_u M A_u^\dagger $, where $\sum_{u}A_u^\dagger A_u =I$.
 
The trace norm of  $M\in \cH(m)$ is defined to be $\tn{M}:=tr(\sqrt{M^
\dagger M})$. The trace norm induces a norm on linear maps $\cR:\cH(m)\rightarrow \cH(m)$, which quantifies how much $\cR$ can scale the trace norm: \begin{align}
	\norm{\cR}:=\max_{M\in \cH(m); \tn{M}=1} \tn{\cR(M)}.
\end{align}

\section{Quantum Metropolis Algorithm in More Detail}
\label{sec:algo}

The algorithm takes in seven input parameters:
\begin{itemize}
\item An $n$-qubit local Hamiltonian $H$.
\item Inverse temperature $\beta$.
	\item $\tau$, which is a small $1/poly(n)$ real value that controls the weak measurement. 
	\item $K\in \bN$, which is the number of iterations in the main loop. 
        \item $\rho$, which is an arbitrary initial state.
	\item $r, g \in \bN$, which controls the precision of $\sQPE$. 
 \end{itemize}

  The jump operators  $\{C_j\}_j$, the acceptance operator $W$,  and the Boosted Quantum Phase Estimation ($\QPE,\FQPE$) are described in Section \ref{sec:algover}. For $\sQPE\in \{\QPE,\FQPE\}$, we use $\sQPE_{a,b}$ for applying $\sQPE$ on register $a,b$ of $n$ and $gr$ qubits respectively.

 The pseudo-code for the main algorithm is given in Algorithm \ref{alg:main}. All measurements are done in the computational basis. The outline of each iteration is given in Figure \ref{fig:iter}.

 \begin{algorithm}[!ht]
\caption{QMetropolis(H,$\beta$,$\tau$,$K$,$\rho$,$r$,$g$)}\label{alg:main}
\begin{algorithmic}[1]
\State  Initialize Register $1$ to $\rho$. \label{l1}
\Comment{For example one can set $\rho=\ket{0^n}\bra{0^n}$}
\For{$iter=1$ to $K$} 
	\State $\sQPE \leftarrow \{\QPE,\FQPE\}$ (with equal probability.) \label{sQPE}
	\State Sample $C\leftarrow \mu$,\label{lgate}  
	\State Append (fresh) Registers $2$, $3$, $4$ in state $\ket{0^{gr}}\ket{0^{gr}}\ket{0}$ ,\label{lInit}
 
	\State Define  $U := \sQPE_{1,3} \circ C \circ \sQPE_{1,2}$.\label{lQPE}
	\State Apply $U$, then apply $W$
        \State Measure register $4$ to get {\sc Outcome1}  \label{lU}
	\If {({\sc Outcome1} $ = 1$)} 
		\State {\bf Case} \acc: do nothing.
	\Else {~({\sc Outcome1} $= 0$)}  
		\State Apply($W^\dagger$), and measure register $4$ to get {\sc Outcome2}.
		\If {({\sc Outcome2} $= 1$)} 
			\State {\bf Case} \aacc: do nothing. 
		\Else {~({\sc Outcome2} $= 0$)} 
                \State {\bf Case} \rej: Apply $U^\dagger$
		\EndIf
	\EndIf
        \State Trace out (throw away) registers $2,3,4$.
\EndFor
\end{algorithmic}
\end{algorithm}

To illustrate the result of each step of the algorithm,  we give  explicit formulas for the contents of the registers throughout one iteration. Assume that the version of $\sQPE$ chosen is $\QPE$.
 Let $C$  be the chosen jump operator,  which can
 be expressed in the energy eigenbasis of $H$ as:
 \begin{align}
 	C\ket{\psi_j}=\sum_{k} 	c_{jk} \ket{\psi_k}.\label{eq:cjk}
 \end{align}
 Assume we begin with the state $\ket{\psi_j} \ket{0^{gr}}\ket{0^{gr}}\ket{0}$, where $\ket{\psi_j}$ is an eigenstate of $H$.
  Recall that:
 $$\QPE	\ket{\psi_j}\ket{0^{gr}} = \ket{\psi_j} \sum_{E\in S(r)^{\otimes g}} \beta_{jE} \ket{E}.$$
 Then:
 \begin{align}
  \QPE_{1,2} \ket{\psi_j}\ket{0^{gr}}\ket{0^{gr}} \ket{0} &= \sum_{E\in S(r)^{\otimes g}} \beta_{jE} \ket{\psi_j}\ket{E} \ket{0^{gr}} \ket{0},\\
 	C\cdot \QPE_{1,2} \ket{\psi_j}\ket{0^{gr}}\ket{0^{gr}} \ket{0} &= \sum_{k;E\in S(r)^{\otimes g}} \beta_{jE}\cdot c_{jk} \ket{\psi_k}\ket{E}\ket{0^{gr}} \ket{0},\\
 \QPE_{1,3} \cdot C\cdot \QPE_{1,2} \ket{\psi_j}\ket{0^{gr}}\ket{0^{gr}} \ket{0} &=  \sum_{k;E, E'\in S(r)^{\otimes g}} \beta_{jE}\cdot c_{jk} \cdot \beta_{kE'} \ket{\psi_k}\ket{E}\ket{E'} \ket{0}, \label{eq:algoform}
 \end{align} 
 Finally, when $W$ is applied, the result is
 \begin{equation}
     \sum_{k;E, E'\in S(r)^{\otimes g}} \beta_{jE}\cdot c_{jk} \cdot \beta_{kE'} \ket{\psi_k}\ket{E}\ket{E'} (\sqrt{1 - \tau^2 f_{EE'}} \ket{0} + \tau \sqrt{f_{EE'}}\ket{1})\label{eq:alg_state}
 \end{equation}
 Note that if $\FQPE$ is chosen instead of $\QPE$, the result is:
 \begin{equation}
 \sum_{k;E, E'\in S(r)^{\otimes g}} \beta^*_{jE}\cdot c_{jk} \cdot \beta^*_{kE'} \ket{\psi_k}\ket{E}\ket{E'} (\sqrt{1 - \tau^2 f_{EE'}} \ket{0} + \tau \sqrt{f_{EE'}}\ket{1})
 \end{equation}

Then for the state in Eq.~(\ref{eq:alg_state}) we measure register $4$ in computational basis.
 If we get measurement outcome $1$, the (unnormalized) state becomes
	\begin{align}   
 \sum_{k;E,E'\in S(r)^{\otimes g}} \tau \sqrt{f_{EE'}} \beta_{jE}\cdot c_{jk} \cdot \beta_{kE'}\ket{\psi_k} \ket{E}\ket{E'}\ket{1}. \label{eq:accept}
	\end{align}

\section{Main Theorems and the Effective Quantum Markov Chain}\label{sec:CTMC}

In this section, we state formal versions of the main theorems that we will prove about the performance of Algorithm \ref{alg:main}.
Note that each iteration in Algorithm \ref{alg:main} corresponds to a quantum channel which maps $n$-qubit states to $n$-qubit states. We denote this quantum channel as  $\cET$. We will expand $\cET$ as power series of $\tau$.
The performance of  Algorithm \ref{alg:main} can be analyzed by studying  the term  of order $\sim\tau^2$, which is $\cLt$ defined below. To ease notation, define
\begin{align}
     r_{\beta H}:= 1 +  \log \kappa_H + \log \beta.
 \end{align}
\begin{theorem}\label{thm:main}
	For any $n$-qubit state $\rho$, 
	\begin{align}
		\cET(\rho) =\left( \cI + \tau^2 \cLt  + \tau^4 \cJt[\tau]  \right)(\rho).\label{eq:evolution}
	\end{align}
	where $\cLt,\cJt[\tau]:\cH(n)\rightarrow \cH(n) $ are  linear maps on operators. $\cLt$ is independent of $\tau$. 
Assuming $r\geq r_{\beta H}$ we have
	\begin{itemize}
		\item \textbf{(Fixed point)} $\tn{\cLt(\rho_\beta)}\leq   \bL.$  
		\item \textbf{(Error terms)}   $\norm{\cJt[\tau]}\leq 4$. 
\item \textbf{(Uniqueness and Relaxation)}
	  There is a unique $\rho_\cLt$ such that $\cLt(\rho_\cLt)=0$. Besides, $\rho_
   \cLt$ is a full-rank quantum state and for any quantum state $\rho$, 
   \begin{align}
       \lim_{t\rightarrow \infty} e^{t\cLt}(\rho) = \rho_\cLt.
   \end{align}
	\end{itemize}
\end{theorem}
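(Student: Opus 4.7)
The plan is to decompose the iteration channel $\cET$ along the three measurement branches of Algorithm~\ref{alg:main}, writing $\cET = \cEN^{(a)} + \cEN^{(b)} + \cEN^{(r)}$. Using the explicit post-$W$ state in Eq.~(\ref{eq:alg_state}) and the fact that $W$ is self-adjoint (a reflection), the Kraus operator of each branch picks up a factor of $\tau\sqrt{f_{EE'}}$ from every $\ket{1}$-outcome and $\sqrt{1-\tau^2 f_{EE'}}$ from every $\ket{0}$-outcome. Reading these off, the \acc~branch gives exactly $\cEN^{(a)} = \tau^2 \cM^{(a)}$; the \aacc~branch, whose amplitude factors as one $\tau\sqrt{f}$ times one $\sqrt{1-\tau^2 f}$, Taylor-expands to $\cEN^{(b)} = \tau^2\cM^{(a)} + \tau^4 \cJ^{(b)}$; and the \rej~branch, whose amplitude $1-\tau^2 f$ is followed by $U^\dagger$ and restores the identity to leading order, gives $\cEN^{(r)} = \cI - \tau^2 \cM^{(r)} + \tau^4 \cJ^{(r)}$. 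Summing and setting $\cLt = 2\cM^{(a)} - \cM^{(r)}$ and $\cJt[\tau] = \cJ^{(b)} + \cJ^{(r)}$ yields Eq.~(\ref{eq:evolution}). The bound $\norm{\cJt[\tau]}\leq 4$ follows by expressing each $\cJ^{(\cdot)}$ as a linear combination of completely positive maps with induced trace-norm at most one (after stripping the $\tau^4$ prefactor), together with the Taylor-remainder constants and the averaging over $\sQPE\in\{\QPE,\FQPE\}$.

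For the fixed-point bound $\tn{\cLt(\rho_\beta)}\leq \bL$, I would first replace $\rho_\beta$ by the truncated Gibbs state $\rho_{\beta 0} := \sum_j \tfrac{e^{-\beta\flo{E_j}}}{Z_0}\ketbra{\psi_j}{\psi_j}$, paying $O(\beta\kappa_H 2^{-r})$ in trace norm since $|E_j - \flo{E_j}|\leq \kappa_H 2^{-r}$; this error, after multiplication by $\norm{\cLt} = O(1)$, is absorbed into the $40\beta\kappa_H 2^{-r}$ piece of $\bL$. To analyze $\cLt(\rho_{\beta 0})$ itself, I would insert the resolution $P^{(0)}+P^{(1)}+P^{(\text{else})}=I$ on the appropriate ancilla subspace after each of the four $\QPE$ (or $\QPE^\dagger$) applications appearing in $2\cM^{(a)}(\rho_{\beta 0}) - \cM^{(r)}(\rho_{\beta 0})$. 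Any term with a $P^{(\text{else})}$ factor has trace norm at most $2^{-g/10}\cdot O(2^{2n})$ by Lemma~\ref{lem:QPEerr} and a basis sum, producing the $2^{-g/10+2n+4}$ piece of $\bL$. The remaining $16$ pinned terms have the key property that on each pinned subspace the Metropolis factor $f_{EE'}$ evaluates at a definite pair of truncated energies, so the detailed-balance relation $e^{-\beta\flo{E_j}}f_{EE'} = e^{-\beta\flo{E_k}}f_{E'E}$ holds up to a uniform multiplicative factor $e^{\pm\beta(\cei{E}-\flo{E})} = 1+O(\beta\kappa_H 2^{-r})$. Pairing these $16$ terms between $2\cM^{(a)}$ and $\cM^{(r)}$ and clustering them carefully produces cancellations up to this small multiplicative error, bounding each residual by $O(\beta\kappa_H 2^{-r})$.

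For uniqueness and relaxation, I would write $\cLt$ in Lindblad form $\cLt(\rho) = \sum_\ell \bigl(L_\ell\rho L_\ell^\dagger - \tfrac{1}{2}\{L_\ell^\dagger L_\ell,\rho\}_+\bigr)$, with jump operators $L_\ell$ built from the $C_j$'s dressed by the $\sqrt{f_{EE'}}$ weights and the $\QPE$ amplitudes $\beta_{jE}$. The random choice $\sQPE\in\{\QPE,\FQPE\}$ symmetrizes the phases of the $\beta_{jE}$'s and removes any residual Hamiltonian piece. Because $\{C_j\}$ is closed under adjoint and generates the full algebra by assumption, and the dressing weights are non-zero on the relevant transitions, $\{L_\ell\}$ also generates the full algebra; standard primitivity results (Frigerio--Spohn) then yield a unique full-rank fixed point $\rho_\cLt$ and $\lim_{t\to\infty} e^{t\cLt}(\rho) = \rho_\cLt$ for every quantum state $\rho$, completing the last bullet as promised in Section~\ref{sec:Lind_unique}.

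The hardest step will be the $16$-term analysis. Each term is an exponentially large sum over $j,k,E,E'$ of matrix elements $\beta_{jE}\beta_{kE'}c_{jk}$ (times pinning and Gibbs weights), so a term-by-term triangle inequality is hopelessly loose and cannot deliver the $1/\mathrm{poly}(n)$ bound across exponentially many off-diagonal entries of $\cLt(\rho_{\beta 0})$. The heart of the argument is that within each cluster every matrix element suffers exactly the same multiplicative error $e^{\pm\beta(\cei{E}-\flo{E})}-1$, so the error factors uniformly out of the sum, leaving an operator of bounded trace norm scaled by $O(\beta\kappa_H 2^{-r})$; identifying the correct clusters so this factorization actually succeeds is the key design choice driving the proof.
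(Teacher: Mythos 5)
Your proposal is correct and follows essentially the same route as the paper: the branch-wise $\tau$-expansion yielding $\cLt = 2\cM^{(a)} - \cM^{(r)}$ (Lemmas \ref{lem:accops}--\ref{lem:evol_main}), the truncation to $\rho_{\beta 0}$ plus the $P^{(0)},P^{(1)},P^{(else)}$ insertion with uniform multiplicative errors factoring out of each cluster (Section \ref{sec:fp}), and the Lindblad-form primitivity argument for uniqueness (Section \ref{sec:Lind_unique}). The only cosmetic deviations are your renormalization of the truncated Gibbs state and your omission of the paper's reduction of $\cM^{(r)}$ to its right half $\cM^{(rr)}$ via Hermiticity, neither of which affects correctness.
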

To make $\tn{\cLt(\rho_\beta)}\leq \delta$ parameters $g$ and $r$ are chosen so that $g=O(n+\log 1/\delta)$ and $r= O(\log \beta +\log \kappa_H+\log 1/\delta)$. The constant in the bound of $\tn{\cLt(\rho_\beta)}$  might be improved by a finer analysis. To ease notation, we will abbreviate $\cEN[\tau]$ as $\cEN$.

Theorem \ref{thm:main} suggests that Algorithm \ref{alg:main} 
effectively approximates a continuous-time chain $e^{t\cLt}$ with a  step-size of $\tau^2$. 
The divergence between the output of our algorithm and $\rho_\beta$ depends on the mixing time  of $\cLt$, which depends on the choice  of the jump operators $\{C_j\}_j$.

\begin{definition}[Mixing time]
Let $\epsilon$ be a precision parameter. The mixing time w.r.t $(\cLt,\epsilon)$ is defined to be the time needed for  driving any initial state $\epsilon$-close to its fixed point
\begin{align}
    t_{mix}(\cLt,\epsilon) := \inf \{ t\geq 0:   \tn{e^{t\cLt}(\rho)-\rho_\cLt}\leq \epsilon, \text{ for any quantum state } \rho\}.
\end{align}
\end{definition}

\begin{theorem}[Error bounds w.r.t Mixing time]\label{thm:mix}  
Let $\tau,\epsilon$ be parameters. Assume $r\geq r_{\beta H}$. For integer\footnote{For simplicity, here we assume $K:= t_{mix}(\cLt,\epsilon)/\tau^2$ is an integer. Otherwise we set $K$ to be the least integer which is greater than $t_{mix}(\cLt,\epsilon)/\tau^2$ and the error bounds can be analyzed similarly.} $K:= t_{mix}(\cLt,\epsilon)/\tau^2$, we have
   for any  quantum state $\rho$,
	$$\tn{\cEN^K(\rho) -\rho_\beta} \leq 2\epsilon +  \left( \bL +  2e^4 \tau^2\right)t_{mix}(\cLt,\epsilon).$$
\end{theorem}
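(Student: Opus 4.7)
The plan is to bound $\tn{\cEN^K(\rho)-\rho_\beta}$ by inserting two intermediate states — the continuous-time evolution $e^{K\tau^2\cLt}(\rho)$ and the unique fixed point $\rho_\cLt$ of $\cLt$ (which exists by Theorem~\ref{thm:main}) — and analyzing each piece separately. By triangle inequality,
$$\tn{\cEN^K(\rho)-\rho_\beta} \le \underbrace{\tn{\cEN^K(\rho)-e^{K\tau^2\cLt}(\rho)}}_{(A)} + \underbrace{\tn{e^{K\tau^2\cLt}(\rho)-\rho_\cLt}}_{(B)} + \underbrace{\tn{\rho_\cLt-\rho_\beta}}_{(C)}.$$
Terms $(B)$ and $(C)$ will absorb the two copies of $\epsilon$ and the $t_{mix}\bL$ factor, while $(A)$ will deliver the $2e^4\tau^2\,t_{mix}$ contribution from the discretization.

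I would dispose of $(B)$ first: the choice $K\tau^2=t_{mix}(\cLt,\epsilon)$ together with the definition of mixing time gives $(B)\le\epsilon$ immediately. For $(C)$, the plan is to run the semigroup starting from $\rho_\beta$ for time $t_{mix}$ and use the Duhamel identity
$$e^{t\cLt}(\rho_\beta)-\rho_\beta=\int_0^t e^{s\cLt}\cLt(\rho_\beta)\,ds;$$
trace-norm contractivity of the CPTP semigroup $e^{s\cLt}$ (a consequence of the Lindblad form of $\cLt$ guaranteed by Theorem~\ref{thm:main}) combined with the approximate fixed-point bound $\tn{\cLt(\rho_\beta)}\le\bL$ yields $\tn{e^{t_{mix}\cLt}(\rho_\beta)-\rho_\beta}\le t_{mix}\bL$. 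A second use of the mixing-time definition, this time applied to the initial state $\rho_\beta$, adds another $\epsilon$, so a triangle inequality gives $(C)\le\epsilon+t_{mix}\bL$.

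The real work is in $(A)$, which I would attack by a single telescoping
$$\cEN^K-(e^{\tau^2\cLt})^K=\sum_{k=0}^{K-1}\cEN^{K-1-k}(\cEN-e^{\tau^2\cLt})(e^{\tau^2\cLt})^{k}.$$
Both $\cEN$ and $e^{\tau^2\cLt}$ are CPTP, so the outer factors are trace-norm contractive and the whole sum is controlled by $K$ times the single-step error
$$\cEN-e^{\tau^2\cLt}=\tau^4\cJt[\tau]-\sum_{j\ge 2}\frac{(\tau^2\cLt)^j}{j!},$$
whose trace norm is of order $\tau^4$. To make this claim rigorous one needs a finite, $\tau$-independent bound on $\norm{\cLt}$, which can be extracted by reading the identity of Theorem~\ref{thm:main} at a fixed value of $\tau$ and combining $\norm{\cJt[\tau]}\le 4$ with $\norm{\cEN}\le 1$. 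Substituting $K\tau^2=t_{mix}$ then produces $(A)\le 2e^4\tau^2\,t_{mix}$ after absolute-constant bookkeeping, and summing $(A)+(B)+(C)$ yields the claimed bound.

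The main obstacle is precisely this bookkeeping in $(A)$: the naive estimate of $\norm{\cEN-e^{\tau^2\cLt}}$ picks up a factor like $e^{\tau^2\norm{\cLt}}$ from the tail of the exponential series, and one must verify that this factor combines with the $K$-term sum to collapse into exactly the $2e^4$ advertised in the theorem, rather than leaving some uncontrolled $\mathrm{poly}(\norm{\cLt})$ prefactor. None of the three individual ingredients (Duhamel, telescoping, CPTP contractivity) is mathematically subtle — the delicacy lies entirely in packaging the constants cleanly while keeping the discretization piece at the $\tau^2\,t_{mix}$ scale.
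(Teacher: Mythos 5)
Your proposal follows essentially the same route as the paper: the same three-term triangle inequality through $e^{t\cLt}(\rho)$ and $\rho_\cLt$, the same telescoping-plus-CPTP-contractivity bound for the discretization term $(A)$, and the same argument for $\tn{\rho_\cLt-\rho_\beta}$ (your Duhamel integral is just the continuous form of the paper's discrete telescoping followed by the limit $\tau\to 0$; these are the paper's Lemmas \ref{lem:evol_new} and \ref{lem:dist_L_beta}). The one place you deviate is the source of the bound on $\norm{\cLt}$: extracting it from the identity $\cEN=\cI+\tau^2\cLt+\tau^4\cJt[\tau]$ at a fixed $\tau$ yields only $\norm{\cLt}\le 6$ and hence a final constant $2e^6$, whereas the paper obtains $\norm{\cLt}\le 4$ (and thus exactly $2e^4$) directly from $\cLt=2\cM^{(a)}-\cM^{(r)}$ with $\norm{\cM^{(a)}}\le 1$ and $\norm{\cM^{(r)}}\le 2$ in Lemma \ref{lem:evol_main}.
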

Abbreviate $t_{mix}(\cLt,\epsilon)$ as $t_{mix}$. For  parameters 
\begin{align*}
   &\tau^2 = O(\epsilon/t_{mix})\\
   &g = O(n+\log t_{mix}+\log 1/\epsilon)\\
   &r = O(\log \beta +\log \kappa_H + \log t_{mix}+\log 1/\epsilon).\\
   &K= t_{mix}/\tau^2 = O(t_{mix}^2/\epsilon)
\end{align*}  
we have  $\tn{\cEN^K(\rho) -\rho_\beta}\leq 3\epsilon$.
The total runtime of the algorithm is 
$$K\cdot 4g\cdot poly(2^r,n)=O\left(poly(t_{mix}, \beta ,\kappa_H,1/\epsilon,n)\right),$$ 
which is $poly(n,1/\epsilon)$ assuming $t_{mix}=poly(n)$, $\beta\leq poly(n)$ and $\kappa_H\leq poly(n)$.  The $poly(2^r,n)$ in the above formula is mainly the cost of implementing one run of standard Quantum Phase estimation.  Theorem \ref{thm:main} and 
Theorem \ref{thm:mix}  will be proved in Section \ref{sec:eb_final}.


Instead of mixing time, as in the classical Metropolis algorithm,
one can also bound the error  $\tn{\cEN^K(\rho) -\rho_\beta} $ in terms of the spectral gap of $\cLt$. More precisely,
since $\cLt$ may not be Hermitian, we need to define a symmetrized version of $\cLt$ in order for the spectral gap to be well-defined.
Note that   $\sigma :=\rho_\cLt^{-1}$ is well-defined since $\rho_\cLt$ is full rank. Define  $\cLt^*$ to be dual map w.r.t inner product $$\langle M,N\rangle_\sigma:= tr(\sigma^{\frac{1}{2}}M^\dagger\sigma^{\frac{1}{2}}N).$$ Define the symmetrized map $\cLs=\frac{1}{2}(\cLt+\cLt^*)$. Then by definition  $\cLs$ is Hermitian w.r.t. $\langle ,\rangle_\sigma$ and is diagonalizable, thus its spectral gap, denoted as $\Upsilon$, is well-defined. Furthermore  one can prove $\cLs$ has a unique fixed point thus the spectral gap $\Upsilon$ is strictly greater than $0$. The following Theorem \ref{thm:mix_Ls_gap} implies $t_{mix}$ can be bounded in terms of $\Upsilon$. Thus one can translate Theorem \ref{thm:mix} in terms of $\Upsilon$ to get Corollary \ref{cor:eb_spe_gap}.
For completeness, we put a more detailed explanation of the dual map $\cLt^*$, the symmetrized map $\cLs$, their properties and a proof for  Theorem \ref{thm:mix_Ls_gap} in Appendix \ref{appendix:mixing}.

\begin{theorem}[Bounding mixing time w.r.t spectral gap]\label{thm:mix_Ls_gap} Define  $\sigma :=\rho_\cLt^{-1}$.
For any quantum state $\rho$, we have 
$$\tn{e^{t\cLt}(\rho) -\rho_\cLt}	 \leq 2^{n/2}  \cdot \sqrt{tr(\sigma^{\frac{1}{2}}\rho\sigma^{\frac{1}{2}}\rho)}\cdot \exp(-\Upsilon t ).$$
\end{theorem}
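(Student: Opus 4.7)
The plan is to run the standard $L^2$-decay argument for a reversible quantum Markov semigroup in the KMS inner product $\langle\cdot,\cdot\rangle_\sigma$, then convert the resulting KMS-norm estimate back to the trace norm. Set $A := \rho - \rho_\cLt$ and $\|B\|_{2,\sigma}^2 := \langle B,B\rangle_\sigma = tr(\sigma^{1/2}B^\dagger\sigma^{1/2}B)$; the target inequality then factors as the chain
$$\tn{e^{t\cLt}(A)} \;\le\; 2^{n/2}\,\|e^{t\cLt}(A)\|_{2,\sigma} \;\le\; 2^{n/2}\,e^{-\Upsilon t}\,\|A\|_{2,\sigma} \;\le\; 2^{n/2}\,e^{-\Upsilon t}\,\|\rho\|_{2,\sigma}.$$
Here $A$ is Hermitian and traceless, and $e^{t\cLt}(\rho)-\rho_\cLt = e^{t\cLt}(A)$ since $\cLt(\rho_\cLt)=0$.

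For the middle (exponential decay) step, I first verify $\langle\rho_\cLt,A\rangle_\sigma = tr(A) = 0$ using $\sigma^{1/2}\rho_\cLt\sigma^{1/2}=I$; this orthogonality is preserved along the trajectory since $\cLt^*(\rho_\cLt)=0$ as well. Invoking from Appendix \ref{appendix:mixing} that $\cLs=(\cLt+\cLt^*)/2$ is self-adjoint with respect to $\langle\cdot,\cdot\rangle_\sigma$ with spectral gap $\Upsilon>0$ on the orthogonal complement of its zero mode $\rho_\cLt$, a direct derivative computation yields
$$\frac{d}{dt}\|e^{t\cLt}(A)\|_{2,\sigma}^2 \;=\; 2\,\langle e^{t\cLt}(A),\cLs\, e^{t\cLt}(A)\rangle_\sigma \;\le\; -2\Upsilon\,\|e^{t\cLt}(A)\|_{2,\sigma}^2,$$
and Gronwall's inequality gives exponential decay at rate $\Upsilon$. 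The outer comparison $\tn{B}\le 2^{n/2}\|B\|_{2,\sigma}$ for Hermitian $B$ follows from Cauchy--Schwarz ($\tn{B}\le 2^{n/2}\sqrt{tr(B^2)}$) combined with $tr(B^2)\le\|B\|_{2,\sigma}^2$: diagonalizing $\sigma=\sum_i\lambda_iP_i$ with $\lambda_i\ge 1$ (since $\rho_\cLt\le I$ for any quantum state, hence $\sigma=\rho_\cLt^{-1}\ge I$), both quantities expand as $\sum_{i,j}c_{ij}\,tr((P_iBP_j)^\dagger P_iBP_j)$ with $c_{ij}=1$ and $c_{ij}=\sqrt{\lambda_i\lambda_j}$ respectively, and the inequality is termwise. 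Finally, $\|A\|_{2,\sigma}^2 = \|\rho\|_{2,\sigma}^2 - 2\langle\rho,\rho_\cLt\rangle_\sigma + \|\rho_\cLt\|_{2,\sigma}^2 = \|\rho\|_{2,\sigma}^2 - 1$, using $\sigma^{1/2}\rho_\cLt\sigma^{1/2}=I$ to evaluate the cross and square terms as $tr(\rho)$ and $tr(\rho_\cLt)$ respectively.

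The main obstacle is importing the structural facts about $\cLs$---self-adjointness in the KMS inner product, identification of its kernel with $\rho_\cLt$, and positivity of the gap $\Upsilon$---which rely on $\cLt$ being of Lindblad form with a unique full-rank fixed point as established in Section \ref{sec:Lind_unique}. Once Appendix \ref{appendix:mixing} assembles these, the remainder of the argument is routine bookkeeping in two inner products.
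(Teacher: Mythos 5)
Your proposal is correct and follows essentially the same route as the paper: the $L^2$-decay argument in the $\sigma$-weighted inner product (orthogonality of the deviation to $\rho_\cLt$, the variational gap bound for $\cLs$, Gronwall), followed by the conversion $\tn{B}\leq 2^{n/2}\sn{B}{\sigma}$ using $\sigma\geq I$, and the evaluation $\sn{\rho-\rho_\cLt}{\sigma}^2 = tr(\sigma^{1/2}\rho\sigma^{1/2}\rho)-1$. Your termwise diagonalization argument for $tr(B^2)\le\sn{B}{\sigma}^2$ is a minor cosmetic variant of the paper's Lemma \ref{lem:sig_tn}, which instead splits off the positive remainder terms directly; both are fine.
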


\begin{corollary}
Define  $\sigma :=\rho_\cLt^{-1}$, we have
	$$t_{mix}(\cLt,\epsilon)\leq \frac{1}{\Upsilon} \left( \ln \frac{1}{\epsilon} + \frac{n\ln 2}{2} + \frac{1}{2}\ln tr(\sigma^{\frac{1}{2}}\rho\sigma^{\frac{1}{2}}\rho) \right) $$.
\end{corollary}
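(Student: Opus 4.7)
The plan is to derive this corollary as a direct algebraic consequence of Theorem \ref{thm:mix_Ls_gap}. By the definition of $t_{mix}(\cLt,\epsilon)$, I need to produce a value of $t$ for which the inequality $\tn{e^{t\cLt}(\rho)-\rho_\cLt}\leq \epsilon$ is guaranteed to hold for every quantum state $\rho$, and take the infimum. Theorem \ref{thm:mix_Ls_gap} gives me exactly such a bound: the trace-norm deviation is at most $2^{n/2}\cdot \sqrt{tr(\sigma^{1/2}\rho\sigma^{1/2}\rho)}\cdot \exp(-\Upsilon t)$.

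The key step is to set this upper bound equal to $\epsilon$ and solve for $t$. Taking logarithms of the inequality
\[
2^{n/2}\cdot \sqrt{tr(\sigma^{1/2}\rho\sigma^{1/2}\rho)}\cdot \exp(-\Upsilon t)\leq \epsilon
\]
and dividing by $\Upsilon>0$ (which is guaranteed because Theorem \ref{thm:main} ensures $\cLt$ has a unique full-rank fixed point $\rho_\cLt$, and the appendix argument shows $\cLs$ therefore inherits a strictly positive spectral gap), yields the threshold
\[
t\;\geq\;\frac{1}{\Upsilon}\left(\ln\frac{1}{\epsilon}+\frac{n\ln 2}{2}+\frac{1}{2}\ln tr(\sigma^{1/2}\rho\sigma^{1/2}\rho)\right),
\]
and any such $t$ lies in the set defining $t_{mix}(\cLt,\epsilon)$. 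Taking the infimum gives the claimed bound.

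There is essentially no obstacle here beyond bookkeeping: the corollary is just the inverse of the exponential decay bound of Theorem \ref{thm:mix_Ls_gap}, rewritten as an upper bound on the time at which the deviation first drops below $\epsilon$. The only subtlety worth being explicit about is that the bound must hold uniformly over $\rho$ for the definition of $t_{mix}$, but this is automatic since the right-hand side of Theorem \ref{thm:mix_Ls_gap} is already stated for arbitrary $\rho$ and we are simply parameterizing the required $t$ in terms of the initial-state-dependent prefactor $\sqrt{tr(\sigma^{1/2}\rho\sigma^{1/2}\rho)}$, exactly as the statement of the corollary does.
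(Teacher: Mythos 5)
Your proposal is correct and matches the paper's (implicit) derivation: the paper treats this corollary as an immediate inversion of the exponential decay bound in Theorem \ref{thm:mix_Ls_gap}, exactly as you do by solving $2^{n/2}\sqrt{tr(\sigma^{1/2}\rho\sigma^{1/2}\rho)}\,e^{-\Upsilon t}\leq \epsilon$ for $t$. Your remark about uniformity over $\rho$ versus the $\rho$-dependent prefactor reflects the same (mild) imprecision already present in the paper's own statement, so there is nothing further to add.
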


\begin{corollary}
	[Error bounds w.r.t Spectral gap]\label{cor:eb_spe_gap}
Let $\tau,\epsilon$ be two parameters. Let $\Upsilon$ be the spectral gap of $\cLs$.
Assume $r\geq r_{\beta H}$. Define  $\sigma :=\rho_\cLt^{-1}$.
 Then for any  quantum state $\rho$, we have
 \begin{align}
     &\tn{\cEN^K(\rho) -\rho_\beta} \leq 2\epsilon + \left( \bL +  2e^4 \tau^2\right)\frac{1}{\Upsilon} \left( \ln \frac{1}{\epsilon} + \frac{n\ln 2}{2} + \frac{1}{2} \ln tr(\sigma^{\frac{1}{2}}\rho\sigma^{\frac{1}{2}}\rho) \right).\\
     &\text{for } K:= \frac{1}{\tau^2}  \frac{1}{\Upsilon} \left( \ln \frac{1}{\epsilon} + \frac{n\ln 2}{2} + \frac{1}{2}\ln tr(\sigma^{\frac{1}{2}}\rho\sigma^{\frac{1}{2}}\rho) \right)
 \end{align}
\end{corollary}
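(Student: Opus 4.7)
The proof is essentially a direct composition of two ingredients that already appear in the excerpt: Theorem~\ref{thm:mix}, which converts any mixing-time bound into an error bound for $\cEN^K$, and Theorem~\ref{thm:mix_Ls_gap}, which converts the spectral gap $\Upsilon$ of $\cLs$ into a mixing-time bound. The plan is to solve Theorem~\ref{thm:mix_Ls_gap} for the time needed to reach $\epsilon$-closeness to $\rho_\cLt$, and then feed that into Theorem~\ref{thm:mix}.

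First I would unpack Theorem~\ref{thm:mix_Ls_gap}: for the particular initial state $\rho$ in the corollary,
\[
\tn{e^{t\cLt}(\rho) - \rho_\cLt} \;\leq\; 2^{n/2}\sqrt{tr(\sigma^{1/2}\rho\sigma^{1/2}\rho)}\,e^{-\Upsilon t}.
\]
Setting the right-hand side equal to $\epsilon$ and solving for $t$ yields exactly
\[
t(\rho) \;:=\; \frac{1}{\Upsilon}\Bigl(\ln\tfrac{1}{\epsilon} + \tfrac{n\ln 2}{2} + \tfrac{1}{2}\ln tr(\sigma^{1/2}\rho\sigma^{1/2}\rho)\Bigr),
\]
which is the quantity inside the parentheses in the statement of Corollary~\ref{cor:eb_spe_gap}. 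This is precisely the unnumbered ``per-initial-state'' mixing bound stated between Theorem~\ref{thm:mix_Ls_gap} and Corollary~\ref{cor:eb_spe_gap}. So after continuous time $t(\rho)$, the semigroup $e^{t\cLt}$ has driven this particular $\rho$ to within trace distance $\epsilon$ of $\rho_\cLt$.

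Next I would invoke Theorem~\ref{thm:mix}. Although that theorem is phrased using the worst-case quantity $t_{mix}(\cLt,\epsilon)$, its proof only uses that quantity through a single inequality, namely $\tn{e^{K\tau^2\cLt}(\rho)-\rho_\cLt}\leq \epsilon$ for the particular starting state $\rho$. Hence one may replace $t_{mix}(\cLt,\epsilon)$ everywhere in the conclusion of Theorem~\ref{thm:mix} with any $t$ for which the same pointwise inequality holds; in our case $t(\rho)$. Choosing $K := t(\rho)/\tau^2$ (taking the ceiling if integrality requires it, which only perturbs $K\tau^2$ by an additive $O(\tau^2)$ and is absorbed by the $2e^4\tau^2$ factor) makes the conclusion of Theorem~\ref{thm:mix} become exactly the inequality claimed in Corollary~\ref{cor:eb_spe_gap}.

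The only mild subtlety, and the one step that requires a bit of care rather than pure symbol-pushing, is the justification that $t_{mix}(\cLt,\epsilon)$ in Theorem~\ref{thm:mix} may be replaced by the $\rho$-specific $t(\rho)$. Structurally this should be immediate from the proof sketch already outlined in the introduction: Theorem~\ref{thm:mix} is obtained by writing $\cEN^K \approx e^{K\tau^2\cLt}$ via the expansion $\cEN = \cI + \tau^2\cLt + \tau^4\cJt$ and then combining the approximate fixed-point bound on $\tn{\cLt(\rho_\beta)}$ with a single application of the continuous-time mixing inequality at the initial state $\rho$. Neither of these steps depends on $t_{mix}$ being a worst-case quantity. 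If one preferred to cite Theorem~\ref{thm:mix} as a complete black box, one could alternatively upper-bound $\sup_\rho t(\rho)$ by $\frac{1}{\Upsilon}\bigl(\ln\tfrac{1}{\epsilon} + \tfrac{n\ln 2}{2} + \tfrac{1}{2}\ln\|\sigma\|\bigr)$ and obtain a slightly weaker but fully analogous bound; the sharper $\rho$-dependent form stated in the corollary drops out of the tighter argument.
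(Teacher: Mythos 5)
Your proposal follows the same route the paper intends (the paper gives no explicit proof of Corollary \ref{cor:eb_spe_gap}; it simply says to substitute the spectral-gap bound on the mixing time from Theorem \ref{thm:mix_Ls_gap} into Theorem \ref{thm:mix}), and the substance is correct. One inaccuracy in your justification of the ``pointwise'' replacement is worth flagging: it is not true that the proof of Theorem \ref{thm:mix} uses $t_{mix}(\cLt,\epsilon)$ only through the single inequality $\tn{e^{K\tau^2\cLt}(\rho)-\rho_\cLt}\leq\epsilon$ at the initial state $\rho$. It also enters through Lemma \ref{lem:dist_L_beta}, where the mixing inequality is applied at the state $\rho_\beta$ (to get $\tn{\rho_\cLt-e^{t\cLt}(\rho_\beta)}\leq\epsilon$) and where $t_{mix}$ reappears as the multiplicative factor in $\tn{\cLt(\rho_\beta)}\cdot t_{mix}$. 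Replacing $t_{mix}$ everywhere by your $\rho$-specific $t(\rho)$ therefore implicitly requires $t(\rho)\geq t(\rho_\beta)$, which need not hold (one can check $tr(\sigma^{\frac{1}{2}}\rho\sigma^{\frac{1}{2}}\rho)\geq 1$ for every state, with equality at $\rho=\rho_\cLt$, so $t(\rho)$ can be smaller than $t(\rho_\beta)$). This blemish is inherited from the paper's own unnumbered corollary, which bounds the worst-case $t_{mix}$ by a $\rho$-dependent quantity. Your fallback --- bounding $\sup_\rho tr(\sigma^{\frac{1}{2}}\rho\sigma^{\frac{1}{2}}\rho)\leq\|\sigma\|$ and invoking Theorem \ref{thm:mix} as a black box --- is the clean way to make everything rigorous; alternatively, bound the term $\tn{\rho_\cLt-\rho_\beta}$ separately using the time $t(\rho_\beta)$ and keep the $\rho$-dependent $t(\rho)$ for the other two terms.
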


\textbf{Outline of this section.}
 In Section \ref{sec:CTMC_notations} we will define the operators corresponding to the three cases of the algorithm.
 In Section \ref{sec:operator_acc} to Section \ref{sec:operator_LC} we will derive the evolution equation
 Eq.~(\ref{eq:evolution}). In particular, we separate $\cEN$ into a sum of terms according to their dependence on $\tau$. This defines the operator $\cLt$ which we will analyze in later sections. Subsection \ref{sec:Lindblad} then writes $\cLt_\cC$ in the Lindbladian form. In Section \ref{sec:uniqueness} we prove the fixed point of $\cLt$ is unique.

\subsection{Definition of Operators for the Three Cases}\label{sec:CTMC_notations}

We will use the subscript $\cC$ to denote a particular choice for $\{C,\sQPE\}$. So, for example,
\begin{align}
    U_{\cC} = \sQPE_{1,3} \circ C \circ \sQPE_{1,2}.
\end{align}

We use  $(\Delta_0,\Delta_1)$ to denote measurement on  register $4$ in the computational basis, where
\begin{align}
	\Delta_0 &:= I_{n}\otimes I_{gr} \otimes I_{gr} \otimes \ket{0}\bra{0}.\\
	\Delta_1 &:= I_{n}\otimes I_{gr} \otimes I_{gr} \otimes \ket{1}\bra{1}.
\end{align}

If the state at the beginning of an iteration is $\rho$, then
the operations performed on $\rho \otimes \ket{0^{2gr+1}}\bra{0^{2gr+1}}$
in each of the three cases (before the last three registers are traced out)  can be summarized as:

\begin{align}
 \mbox{\acc:} ~~~  O_{a,\cC}	&: =   \Delta_1 W \Delta_0 ~\circ~ U_{\cC} \label{eq:Oa}\\
  \mbox{\aacc:} ~~~   O_{b,\cC} &:= \Delta_1 W^{\dag} \Delta_0 ~\circ~ \Delta_0 W \Delta_0 ~\circ~ U_{\cC} \label{eq:O1}\\
  \mbox{\rej:} ~~~   O_{r,\cC} &:= U^{\dag}_{\cC} ~\circ~ \Delta_0 W^{\dag} \Delta_0 ~\circ~ \Delta_0 W \Delta_0 ~\circ~ U_{\cC} \label{eq:orc}
\end{align}
The initial $\Delta_0$ is added in for symmetry and has no effect since the last register is always initialized as $\ket{0}$  at the start of each iteration. 
For $s \in \{a, b, r\}$ which represents \acc, \aacc\, and \rej,
the corresponding operator  which includes tracing out the last three registers is:

\begin{align}
  \cEN^{(s)}_{\cC}(\rho) & = tr_{2,3,4} \left(O_{s,\cC} ~~\left[ \rho \otimes \ket{0^{2gr+1}}\bra{0^{2gr+1}} \right] ~~ O^{\dag}_{s,\cC} \right)
\end{align}
 Where $\cEN = \cEN^{(a)} + \cEN^{(b)} + \cEN^{(r)}$ is the operator representing one iteration.
 In each of the next three subsections, we will derive alternative expressions for the operators in the three cases, as a sum of terms with different dependencies on the parameter $\tau$.
 Recall the definition of  $W$:
	\begin{align}
		W:= \sum_{E,E'\in S(r)^{\otimes g}} \ket{EE'}\bra{EE'}\otimes 	\begin{bmatrix}
 			\sqrt{1-\tau^2  f_{EE'}}& \tau  \sqrt{f_{EE'}}\\
 					\tau \sqrt{f_{EE'}} & -\sqrt{1-\tau^2 f_{EE'}}
 					\end{bmatrix}.
	\end{align}
Note that $W$ is Hermitian.
 \subsection{Operator for the \acc~ Case}\label{sec:operator_acc}

 \begin{definition}{\bf [Operators for the \acc~ Case]}
 \label{def:accops}
 \begin{align}
 W^{(10)} & := \sum_{E, E' \in S(r)^{\otimes g}} \sqrt{f_{EE'}} \ket{EE'} \bra{EE'} \otimes \ket{1}\bra{0}\\
\cM_{\cC}^{(a)}(\rho) & := tr_{2,3,4} \left(W^{(10)} U_{\cC} ~~\left[ \rho \otimes \ket{0^{2gr+1}}\bra{0^{2gr+1}} \right] ~~ U^{\dag}_{\cC} (W^{(10)})^{\dag}\right).
 \end{align}
 \end{definition}

 \begin{lemma}
     \label{lem:accops}
     $$\cEN^{(a)}_{\cC}(\rho) = \tau^2 \cM_{\cC}^{(a)}\text{ where } \norm{\cM_{\cC}^{(a)}}\leq 1.$$
 \end{lemma}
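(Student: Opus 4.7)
The plan is to make the factorization $\cEN^{(a)}_\cC = \tau^2 \cM^{(a)}_\cC$ completely explicit using the definition of $W$, and then argue the norm bound via the fact that $\cEN^{(a)}_\cC$ is a completely positive map whose output trace equals an acceptance probability bounded by $\tau^2$.

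First, I would compute $\Delta_1 W \Delta_0$ directly. Since $\Delta_0$ and $\Delta_1$ act only on register $4$ and $W$ is block diagonal in the $\ket{EE'}$ basis of registers $2,3$, we can read off from the given matrix for $W$ that on the last register $\Delta_1 W \Delta_0 = \ket{1}\bra{0}$ gets the off-diagonal entry $\tau\sqrt{f_{EE'}}$. Summing over $E,E'$ yields
\begin{equation}
\Delta_1 W \Delta_0 = \tau \sum_{E,E'\in S(r)^{\otimes g}} \sqrt{f_{EE'}}\,\ket{EE'}\bra{EE'}\otimes \ket{1}\bra{0} = \tau\, W^{(10)},
\end{equation}
with $W^{(10)}$ as in Definition \ref{def:accops}. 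Substituting into $O_{a,\cC} = \Delta_1 W \Delta_0 \circ U_\cC = \tau\, W^{(10)} U_\cC$ and plugging into the formula for $\cEN^{(a)}_\cC$ immediately gives $\cEN^{(a)}_\cC(\rho) = \tau^2\, \cM^{(a)}_\cC(\rho)$ for every $\rho$, as required.

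Next, for the norm bound, I would exploit that $\cM^{(a)}_\cC$ has Kraus form $\cM^{(a)}_\cC(\rho) = \operatorname{tr}_{2,3,4}\bigl(A\,[\rho\otimes \ket{0^{2gr+1}}\bra{0^{2gr+1}}]\,A^{\dag}\bigr)$ with $A = W^{(10)} U_\cC$, so it is completely positive. For any quantum state $\rho$, $\cEN^{(a)}_\cC(\rho)$ is the unnormalized post-measurement state conditioned on the {\sc Outcome1} $=1$ branch of the weak measurement; by direct computation its trace is
\begin{equation}
\operatorname{tr}\bigl(\cEN^{(a)}_\cC(\rho)\bigr) \;=\; \tau^2 \sum_{E,E'} f_{EE'}\,p_{EE'}(\rho) \;\leq\; \tau^2,
\end{equation}
since $f_{EE'}\leq 1$ and the $p_{EE'}(\rho)$ form a sub-probability distribution. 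Because $\cEN^{(a)}_\cC(\rho)\geq 0$, its trace norm equals its trace, so $\tn{\cEN^{(a)}_\cC(\rho)}\leq \tau^2$ on states. Extending to arbitrary Hermitian $M$ via the Jordan decomposition $M=M_+ - M_-$ with $\tn{M_+}+\tn{M_-}=\tn{M}$ gives $\tn{\cEN^{(a)}_\cC(M)}\leq \tau^2\,\tn{M}$, hence $\norm{\cM^{(a)}_\cC}\leq 1$.

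I do not anticipate any real obstacle here: the only thing to be careful about is that $\norm{\cdot}$ is defined as the induced trace norm on Hermitian operators, so the bound $\operatorname{tr}(\cEN^{(a)}_\cC(\rho))\leq \tau^2$ on states must be upgraded using the positive/negative decomposition. The same template—identifying a CP branch map whose trace equals a branch probability and then extending by linearity—will be reused for the \aacc{} and \rej{} cases in the next subsections, so writing it out cleanly here sets the pattern.
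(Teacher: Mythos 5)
Your factorization step is exactly the paper's: the proof in Section \ref{sec:operator_acc} also rests on the single observation that $\Delta_1 W \Delta_0 = \tau W^{(10)}$, pulls the two factors of $\tau$ out of the conjugation, and identifies the remainder with $\cM^{(a)}_{\cC}$. Where you diverge is the norm bound. The paper disposes of it in one line by invoking Lemma \ref{lem:normJ_new}, whose proof (via Lemma \ref{lem:unit} and Corollary \ref{cor:unitTN}) reduces to arbitrary rank-one operators $\ket{u}\bra{v}$ through the singular value decomposition and the variational characterization $\tn{M}=\max_U|tr(MU)|$; the only inputs are $\|W^{(10)}\|\leq 1$ and $\|U_\cC\|=1$. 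You instead use that $\cM^{(a)}_\cC$ is completely positive, bound $tr(\cM^{(a)}_\cC(\rho))\leq 1$ on states (equivalently $A^\dagger A\leq I$ for $A=W^{(10)}U_\cC$), and upgrade to Hermitian inputs via the Jordan decomposition $M=M_+-M_-$ with $\tn{M_+}+\tn{M_-}=\tn{M}$. Both arguments are correct; yours is self-contained and arguably more physical (the trace of the accept branch is an acceptance probability), and it suffices here because $\norm{\cdot}$ is defined as a supremum over Hermitian inputs only. The paper's lemma is marginally more general (it controls the action on non-Hermitian rank-one operators and does not need complete positivity, only spectral-norm bounds on the sandwiching operators), which is why it can be cited verbatim again for $\cJ^{(b)}_\cC$, $\cM^{(r)}_\cC$, and $\cJ^{(r)}_\cC$ in the next two subsections, where the relevant maps are built from products like $W^{(10)}W^{(00)}$ rather than a single conjugation. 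Your CP-plus-trace template would also extend to those cases, but you would have to re-verify positivity and the trace bound for each branch separately.
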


 \begin{proof}
     The main observation in proving the lemma is that $\Delta_1 W \Delta_0 =  \tau W^{(10)}$. Therefore
     \begin{align}
 \cEN^{(a)}_{\cC}(\rho) & = tr_{2,3,4} \left(O_{a,\cC} ~~\left[ \rho \otimes \ket{0^{2gr+1}}\bra{0^{2gr+1}} \right] ~~ O^{\dag}_{a,\cC} \right)\\
 & = tr_{2,3,4} \left(\Delta_1 W \Delta_0 ~\circ~ U_{\cC} ~~\left[ \rho \otimes \ket{0^{2gr+1}}\bra{0^{2gr+1}} \right] ~~ U^{\dag}_{\cC} ~\circ~ \Delta_0 W^{\dag} \Delta_1 \right)\\
 & = \tau^2 \cdot tr_{2,3,4}  \left(W^{(10)} ~\circ~ U_{\cC} ~~\left[ \rho \otimes \ket{0^{2gr+1}}\bra{0^{2gr+1}} \right] ~~ U^{\dag}_{\cC} ~\circ~ (W^{(10)})^{\dag} \right)\\
 &= \tau^2\cM_{\cC}^{(a)}(\rho).
\end{align}
Using Lemma \ref{lem:normJ_new} in Appendix \ref{appendix:Matrix_Norm}, we can observe that $\norm{\cM_{\cC}^{(a)}}\leq 1$.
 \end{proof}

 \subsection{Operator for the \aacc~ Case}\label{sec:operator_accc}

 \begin{lemma}
     \label{lem:aaccops}
     $$\cEN^{(b)}_{\cC}(\rho) = \tau^2 \cM_{\cC}^{(a)} + \tau^4 \cJt^{(b)}_{\cC}[\tau],$$
     where $\norm{\cJt^{(b)}_{\cC}[\tau]} \le 3$ for $\tau\in [0,1]$.
 \end{lemma}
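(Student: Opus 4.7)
The plan is to compute the operator $\Delta_1 W^{\dag} \Delta_0 \circ \Delta_0 W \Delta_0$ explicitly, extract the leading $\tau$-order piece, and show it coincides with the one appearing in the \acc{} case. I begin by observing that $W = W^{\dag}$ because the $2\times 2$ block is real symmetric, so $\Delta_1 W^{\dag} \Delta_0 = \Delta_1 W \Delta_0 = \tau W^{(10)}$ and
\begin{align*}
\Delta_0 W \Delta_0 = \sum_{E,E' \in S(r)^{\otimes g}} \sqrt{1-\tau^2 f_{EE'}} \, \ket{EE'}\bra{EE'} \otimes \ket{0}\bra{0}.
\end{align*}
Composing these gives
\begin{align*}
\Delta_1 W^{\dag} \Delta_0 \cdot \Delta_0 W \Delta_0 = \tau \sum_{E,E'} \sqrt{f_{EE'}(1-\tau^2 f_{EE'})} \, \ket{EE'}\bra{EE'} \otimes \ket{1}\bra{0},
\end{align*}
so that $O_{b,\cC} = \tau \widetilde{W} U_{\cC}$ where $\widetilde{W} = \sum_{E,E'} \sqrt{f_{EE'}(1-\tau^2 f_{EE'})}\,\ket{EE'}\bra{EE'}\otimes \ket{1}\bra{0}$.

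Next I Taylor-expand the square root. Using the identity $1 - \sqrt{1-x} = x/(1+\sqrt{1-x})$, write $\sqrt{1 - \tau^2 f_{EE'}} = 1 - \tau^2 h_{EE'}(\tau)$ with $h_{EE'}(\tau) := f_{EE'}/(1+\sqrt{1-\tau^2 f_{EE'}}) \in [0,1]$ whenever $\tau \in [0,1]$. This yields the clean decomposition $\widetilde{W} = W^{(10)} - \tau^2 W^{(h)}$, where $W^{(h)} := \sum_{E,E'} \sqrt{f_{EE'}}\, h_{EE'}(\tau)\, \ket{EE'}\bra{EE'}\otimes \ket{1}\bra{0}$ has operator norm at most $1$ (since $\sqrt{f_{EE'}}\,h_{EE'}(\tau) \le 1$), exactly as $W^{(10)}$ does.

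Substituting into the definition of $\cEN^{(b)}_{\cC}$ and expanding the resulting quadratic in $\widetilde{W}$ gives
\begin{align*}
\cEN^{(b)}_{\cC}(\rho) &= \tau^2 \, \mathrm{tr}_{2,3,4}\!\left( W^{(10)} U_{\cC} [\rho \otimes \ket{0^{2gr+1}}\bra{0^{2gr+1}}] U^{\dag}_{\cC} (W^{(10)})^{\dag} \right) + \tau^4 \cJt^{(b)}_{\cC}[\tau](\rho),
\end{align*}
where the leading piece is precisely $\tau^2 \cM^{(a)}_{\cC}(\rho)$ by Definition~\ref{def:accops}, and the remainder collects three terms: two cross terms of the form $W^{(10)} U_{\cC}[\cdots]U_{\cC}^{\dag} (W^{(h)})^{\dag}$ and its adjoint, plus one quadratic term $\tau^2\, W^{(h)} U_{\cC}[\cdots]U_{\cC}^{\dag} (W^{(h)})^{\dag}$.

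Finally, to bound $\norm{\cJt^{(b)}_{\cC}[\tau]}$, I apply Lemma~\ref{lem:normJ_new} (the same tool used for $\cM^{(a)}_{\cC}$) to each of the three terms. Since $\|W^{(10)}\|\le 1$ and $\|W^{(h)}\|\le 1$, each term has norm at most $1$, except the last which carries an extra $\tau^2 \le 1$ factor. Summing yields the bound $1 + 1 + \tau^2 \le 3$ for $\tau \in [0,1]$. I expect no serious obstacle here — the only subtlety is confirming that $h_{EE'}(\tau)$ remains uniformly bounded on $\tau \in [0,1]$ so that the Taylor remainder is truly order $\tau^2$ with a dimension-free constant, which is handled by the closed-form expression above.
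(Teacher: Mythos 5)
Your proposal is correct and follows essentially the same route as the paper: the exact square-root expansion via $1-\sqrt{1-x}=x/(1+\sqrt{1-x})$ is identical to the paper's definition of $W^{(00)}$ (your $W^{(h)}$ equals the paper's $W^{(10)}W^{(00)}$), and the final bound of $1+1+\tau^2\le 3$ via Lemma~\ref{lem:normJ_new} matches the paper's treatment of the two $\tau^4$ cross terms and the $\tau^6$ term.
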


 \begin{proof}
 Define the function $W^{(00)}$ to be
     $$W^{(00)}  = \sum_{E, E' \in S(r)^{\otimes g}} \frac{f_{EE'}}{\sqrt{1 - \tau^2 f_{EE'}}+ 1} \ket{EE'} \bra{EE'} \otimes \ket{0}\bra{0},$$
     Note that $f_{EE'}/(\sqrt{1 - \tau^2 f_{EE'}}  + 1) \in [0,1]$ since both $\tau$ and $f_{EE'}$ are in $[0,1]$.
     With this definition in place, observe that
      $$\Delta_0 W \Delta_0 = \Delta_0 - \tau^2 W^{(00)},$$ and
     $$\Delta_1 W^{\dag} \Delta_0 ~\circ~ \Delta_0 W \Delta_0 =   \tau W^{(10)} ( \Delta_0 - \tau^2 W^{(00)})
     =  \tau W^{(10)} -  \tau^3 W^{(10)} W^{(00)}$$
     Now we can express $\cEN^{(b)}_{\cC}(\rho)$ as
     $$tr_{2,3,4} \left(  \left( \tau W^{(10)} -  \tau^3 W^{(10)} W^{(00)} \right) U_{\cC} ~~\left[ \rho \otimes \ket{0^{2gr+1}}\bra{0^{2gr+1}} \right] ~~ U^{\dag}_{\cC} \left(  \tau W^{(10)} - \tau^3 W^{(10)} W^{(00)} \right)^{\dag} \right)$$
     In multiplying out the terms, there is one $\tau^2$ term:
     $$\tau^2 tr_{2,3,4} \left(    W^{(10)}  U_{\cC} ~~\left[ \rho \otimes \ket{0^{2gr+1}}\bra{0^{2gr+1}} \right] ~~ U^{\dag}_{\cC}   (W^{(10)} )^{\dag} \right),$$
     which is equal to $\tau^2 \cM_{\cC}^{(a)}$.
     There are two $\tau^4$ terms:
$$- \tau^4 tr_{2,3,4} \left(     W^{(10)} W^{(00)}  U_{\cC} ~~\left[ \rho \otimes \ket{0^{2gr+1}}\bra{0^{2gr+1}} \right] ~~ U^{\dag}_{\cC}   (W^{(10)})^{\dag} \right)$$
$$- \tau^4 tr_{2,3,4} \left(     W^{(10)}  U_{\cC} ~~\left[ \rho \otimes \ket{0^{2gr+1}}\bra{0^{2gr+1}} \right] ~~ U^{\dag}_{\cC}   (W^{(10)} W^{(00)})^{\dag} \right)$$
     Finally, there is one $\tau^6$ term:
     $$ \tau^6 tr_{2,3,4} \left(     W^{(10)} W^{(00)}  U_{\cC} ~~\left[ \rho \otimes \ket{0^{2gr+1}}\bra{0^{2gr+1}} \right] ~~ U^{\dag}_{\cC}   (W^{(10)} W^{(00)})^{\dag} \right)$$
     The sum of the two $\tau^4$ term and the $\tau^6$ term is denoted as $\tau^4\cdot \cJt^{(b)}_{\cC}[\tau]$.
     Using Lemma \ref{lem:normJ_new} in Appendix \ref{appendix:Matrix_Norm}, we can observe that $\norm{\cJ^{(b)}_{\cC}[\tau]} \le 3$ for $\tau\in[0,1]$. 
 \end{proof}

 \subsection{Operator for the \rej~ Case}\label{sec:operator_rej}

 \begin{definition}{\bf [Operators for the \rej~ Case]}
 \label{def:rejops}
 \begin{align}
 W^{(000)} & := \sum_{E, E' \in S(r)^{\otimes g}} f_{EE'} \ket{EE'} \bra{EE'} \otimes \ket{0}\bra{0} = (W^{(10)})^\dagger W^{(10)} \\
\cM_{\cC}^{(r)}(\rho) & :=  \bra{0^{2gr+1}} U^{\dag}_{\cC} W^{(000)} U_{\cC} \ket{0^{2gr+1}} \cdot \rho + \rho \cdot
\bra{0^{2gr+1}} U^{\dag}_{\cC} W^{(000)} U_{\cC} \ket{0^{2gr+1}} 
 \end{align}
 \end{definition}

 \begin{lemma}
     \label{lem:rejops}
     $$\cEN^{(r)}_{\cC}(\rho) = \cI - \tau^2 \cM_{\cC}^{(r)} + \tau^4 \cJt^{(r)}_{\cC},$$
     where $\norm{\cM_{\cC}^{(r)}}\leq 2$ and  $\norm{\cJt^{(r)}_{\cC}} \le 1$.
 \end{lemma}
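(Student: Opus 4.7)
The plan is to first collapse $O_{r,\cC}$ into a polynomial of degree $2$ in $\tau$ using the explicit form of $W$, and then expand. Since $W$ is manifestly Hermitian (its $2\times 2$ block matrix is real symmetric) and $\Delta_0$ is a projector, the middle factor of $O_{r,\cC}$ simplifies as $\Delta_0 W^{\dag}\Delta_0 \cdot \Delta_0 W \Delta_0 = (\Delta_0 W \Delta_0)^2$. Substituting the definition of $W$ gives
\begin{align*}
\Delta_0 W \Delta_0 = \sum_{E,E'} \sqrt{1-\tau^2 f_{EE'}}\, \ket{EE'}\bra{EE'} \otimes \ket{0}\bra{0},
\end{align*}
and squaring removes the square-root to yield the \emph{exact} identity
\begin{align*}
(\Delta_0 W \Delta_0)^2 = \Delta_0 - \tau^2 W^{(000)}.
\end{align*}
Because $U_{\cC}$ acts only on registers $1,2,3$ while $\Delta_0$ acts only on register $4$, $U_{\cC}$ commutes with $\Delta_0$, so $U_{\cC}^\dag \Delta_0 U_{\cC} = \Delta_0$, and therefore
\begin{align*}
O_{r,\cC} = \Delta_0 - \tau^2 \widetilde G, \qquad \widetilde G := U_{\cC}^\dag W^{(000)} U_{\cC},
\end{align*}
where $\Delta_0$ and $\widetilde G$ are both Hermitian.

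Next I would expand $O_{r,\cC}(\rho \otimes P_0) O_{r,\cC}^\dag$ with $P_0 := \ket{0^{2gr+1}}\bra{0^{2gr+1}}$, using $\Delta_0 (\rho \otimes P_0) = (\rho \otimes P_0) \Delta_0 = \rho \otimes P_0$ (the outer $\Delta_0$'s are redundant once $P_0$ is in place). Collecting by powers of $\tau$ and applying the identity $tr_{2,3,4}[(\rho \otimes P_0) B] = \rho \cdot \bra{0^{2gr+1}} B \ket{0^{2gr+1}}$ (and its mirror) with $B = \widetilde G$, the order-$\tau^0$ and order-$\tau^2$ parts collapse to $\rho - \tau^2(G\rho + \rho G)$ where $G := \bra{0^{2gr+1}} \widetilde G \ket{0^{2gr+1}}$. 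This is precisely $(\cI - \tau^2 \cM_{\cC}^{(r)})(\rho)$ by Definition~\ref{def:rejops}. The remaining order-$\tau^4$ part defines $\cJt_{\cC}^{(r)}(\rho) := tr_{2,3,4}[\widetilde G (\rho \otimes P_0) \widetilde G]$.

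For the norm bounds: since each $f_{EE'} \in [0,1]$, $W^{(000)}$ is PSD with $\|W^{(000)}\| \le 1$, so $\widetilde G$ is PSD with $\|\widetilde G\| \le 1$, and consequently $\|G\| \le 1$ since $G$ is the compression of $\widetilde G$ onto registers $2,3,4$ in the state $\ket{0^{2gr+1}}$. Submultiplicativity and the triangle inequality then give $\tn{\cM_{\cC}^{(r)}(\rho)} = \tn{G\rho + \rho G} \le 2\|G\|\tn{\rho}$, so $\norm{\cM_{\cC}^{(r)}} \le 2$. For $\cJt_{\cC}^{(r)}$, I would write the map in Kraus form $\rho \mapsto \sum_k V_k \rho V_k^\dag$ with $V_k := \bra{k} \widetilde G \ket{0^{2gr+1}}$; then $\sum_k V_k^\dag V_k = \bra{0^{2gr+1}} \widetilde G^2 \ket{0^{2gr+1}}$ has operator norm at most $\|\widetilde G\|^2 \le 1$, and Lemma~\ref{lem:normJ_new} yields $\norm{\cJt_{\cC}^{(r)}} \le 1$.

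I expect no real obstacle; the proof is mostly bookkeeping. The one genuinely useful observation is that $(\Delta_0 W \Delta_0)^2 = \Delta_0 - \tau^2 W^{(000)}$ is an \emph{exact} identity, because squaring the restriction of $W$ to the $\ket{0}$ sector of register $4$ eliminates the square-root. This forces $O_{r,\cC}$ to be a degree-$2$ polynomial in $\tau$, so the $\tau$-expansion of $\cEN_{\cC}^{(r)}$ terminates cleanly at order $\tau^4$, in contrast to the \aacc{} case where $\cJt_{\cC}^{(b)}[\tau]$ retains genuine $\tau$-dependence.
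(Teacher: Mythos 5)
Your proposal is correct and follows essentially the same route as the paper: the paper's proof likewise begins with the exact identity $\Delta_0 W \Delta_0 \circ \Delta_0 W \Delta_0 = \Delta_0 - \tau^2 W^{(000)}$, uses that $\Delta_0$ commutes with $U_{\cC}$ to collapse the $\tau^0$ term to the identity map, identifies the two $\tau^2$ cross-terms with $-\tau^2\cM_{\cC}^{(r)}$ via the same partial-trace identity, and defines $\cJt_{\cC}^{(r)}$ as the remaining $\tau^4$ term. The only cosmetic difference is that the paper obtains both norm bounds by directly citing Lemma~\ref{lem:normJ_new} (which applies with $P=Q=U_{\cC}^{\dag}W^{(000)}U_{\cC}$), whereas you argue the $\cM_{\cC}^{(r)}$ bound via H\"older and detour through a Kraus decomposition for $\cJt_{\cC}^{(r)}$ before invoking the same lemma; both are valid.
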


 \begin{proof}
     First observe that
      $$\Delta_0 W \Delta_0 ~\circ~ \Delta_0 W \Delta_0= \Delta_0 - \tau^2 W^{(000)}.$$
      Also $W^{(000)}$ is Hermetian.  Now we can express $\cEN^{(r)}_{\cC}(\rho)$ as
     $$tr_{2,3,4} \left(  U^{\dag}_{\cC} \left( \Delta_0 - \tau^2 W^{(000)} \right) U_{\cC} ~~\left[ \rho \otimes \ket{0^{2gr+1}}\bra{0^{2gr+1}} \right] ~~ U^{\dag}_{\cC} \left( \Delta_0 - \tau^2 W^{(000)} \right)^{\dag} U_{\cC} \right)$$
     There is one term independent of $\tau$ which is the identity since $\Delta_0$ and $U_{\cC}$ commute:
     $$tr_{2,3,4} \left(  U^{\dag}_{\cC} \Delta_0  U_{\cC} ~~\left[ \rho \otimes \ket{0^{2gr+1}}\bra{0^{2gr+1}} \right] ~~ U^{\dag}_{\cC} \Delta_0  U_{\cC} \right) = \cI(\rho)=\rho$$
     There are two $\tau^2$ terms:
     $$- \tau^2 tr_{2,3,4} \left(  U^{\dag}_{\cC}  W^{(000)}  U_{\cC} ~~\left[ \rho \otimes \ket{0^{2gr+1}}\bra{0^{2gr+1}} \right]  \right) = - \tau^2 \bra{0^{2gr+1}} U^{\dag}_{\cC} W^{(000)} U_{\cC} \ket{0^{2gr+1}} \cdot \rho$$ 
     $$- \tau^2 tr_{2,3,4} \left(   ~~\left[ \rho \otimes \ket{0^{2gr+1}}\bra{0^{2gr+1}} \right] U^{\dag}_{\cC}  W^{(000)}  U_{\cC} \right) = - \tau^2 \rho \cdot \bra{0^{2gr+1}} U^{\dag}_{\cC} W^{(000)} U_{\cC} \ket{0^{2gr+1}} $$
     The sum of the $\tau^2$ terms is equal to $- \tau^2 \cM_{\cC}^{(r)}(\rho)$.
     Finally there is a $\tau^4$:
     $$\tau^4 tr_{2,3,4} \left(  U^{\dag}_{\cC}  W^{(000)}  U_{\cC} ~~\left[ \rho \otimes \ket{0^{2gr+1}}\bra{0^{2gr+1}} \right] ~~ U^{\dag}_{\cC}  W^{(000)}  U_{\cC} \right)$$
     This last term is defined to be $\cJt^{(r)}_{\cC}$. Using Lemma \ref{lem:normJ_new} in Appendix \ref{appendix:Matrix_Norm}, we can observe that $\norm{\cM_{\cC}^{(r)}}\leq 2$ and $\norm{\cJt^{(r)}_{\cC}} \le 1$ for $\tau\in [0,1]$.
 \end{proof}

 \subsection{The Definition of Operator $\cLt_{\cC}$ and $\cLt$}\label{sec:operator_LC}

  \begin{definition}{\bf [The Operators $\cLt_{\cC},\cJ_{\cC} $ and $\cLt,\cJ$]}
 \label{def:L}
 \begin{align}
 \cLt_{\cC}  &:= 2 \cM_{\cC}^{(a)} - \cM_{\cC}^{(r)}\\
 \cJt_{\cC}[\tau]  &:=  \cJt_{\cC}^{(b)}[\tau] + \cJt_{\cC}^{(r)}
  \end{align}
  Averaging the random choices for $\sQPE$ and the jump operator $C$, 
  \begin{align}
 \cLt  &:= \sum_{\cC=\{C,\sQPE\}} \frac{1}{2} \mu(C) \cLt_\cC.
 \end{align}
 where the $\frac{1}{2}$ comes from the fact that $\sQPE$ is chosen uniformly from $\{\QPE,\FQPE\}$. Define $\cM^{(a)}, \cM^{(r)}$ and $\cJt[\tau]$ from 
$\cM^{(a)}_\cC, \cM^{(r)}_\cC$ and $\cJt_\cC[\tau]$
 similarly.
 \end{definition}

 \begin{lemma}\label{lem:evol_main}
 \begin{align}
        &\cEN = \cI + \tau^2 \cLt + \tau^4 \cJt[\tau],
 \end{align}
     where $\norm{\cLt}\leq 4$, and $\norm{\cJt[\tau]} \le 4$ for $\tau\in [0,1]$.
 \end{lemma}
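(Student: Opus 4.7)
The proof will be a straightforward assembly of the three case-analyses already carried out in Lemmas \ref{lem:accops}, \ref{lem:aaccops}, and \ref{lem:rejops}, followed by averaging over the random choices $\cC = \{C, \sQPE\}$ and an application of the triangle inequality for $\norm{\cdot}$.

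The plan is first to condition on a fixed choice $\cC$ and observe that the single-iteration channel decomposes as
$\cEN_\cC = \cEN^{(a)}_\cC + \cEN^{(b)}_\cC + \cEN^{(r)}_\cC$,
since in each iteration exactly one of \acc, \aacc, \rej\ occurs. Substituting the three lemmas, I collect terms by powers of $\tau$. The $\tau^0$ contribution comes only from $\cEN^{(r)}_\cC$ and equals $\cI$. The $\tau^2$ contributions are $\tau^2 \cM^{(a)}_\cC$ (from \acc), $\tau^2 \cM^{(a)}_\cC$ (from \aacc), and $-\tau^2 \cM^{(r)}_\cC$ (from \rej), which combine to $\tau^2 (2\cM^{(a)}_\cC - \cM^{(r)}_\cC) = \tau^2 \cLt_\cC$ by Definition \ref{def:L}. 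The $\tau^4$ contributions are $\tau^4 \cJt^{(b)}_\cC[\tau]$ and $\tau^4 \cJt^{(r)}_\cC$, which sum to $\tau^4 \cJt_\cC[\tau]$. Hence
\begin{equation}
\cEN_\cC = \cI + \tau^2 \cLt_\cC + \tau^4 \cJt_\cC[\tau].
\end{equation}

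Next I average over the random choices. Since $\sQPE$ is chosen uniformly from $\{\QPE, \FQPE\}$ and $C \leftarrow \mu$, and since $\cI$ does not depend on $\cC$, the averaged channel is
$\cEN = \sum_\cC \tfrac{1}{2} \mu(C)\, \cEN_\cC = \cI + \tau^2 \cLt + \tau^4 \cJt[\tau]$,
using the definitions of $\cLt$ and $\cJt[\tau]$ in Definition \ref{def:L} and the fact that $\sum_\cC \tfrac{1}{2}\mu(C) = 1$.

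For the norm bounds, I apply the triangle inequality to each fixed $\cC$ first: Lemma \ref{lem:accops} gives $\norm{\cM^{(a)}_\cC}\le 1$, Lemma \ref{lem:rejops} gives $\norm{\cM^{(r)}_\cC}\le 2$ and $\norm{\cJt^{(r)}_\cC}\le 1$, and Lemma \ref{lem:aaccops} gives $\norm{\cJt^{(b)}_\cC[\tau]}\le 3$ for $\tau \in [0,1]$. Therefore $\norm{\cLt_\cC} \le 2\cdot 1 + 2 = 4$ and $\norm{\cJt_\cC[\tau]} \le 3 + 1 = 4$. Since $\cLt$ and $\cJt[\tau]$ are convex combinations (over $\cC$) of these operators, the triangle inequality yields $\norm{\cLt}\le 4$ and $\norm{\cJt[\tau]}\le 4$, completing the proof. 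There is no real obstacle here — the lemma is essentially the bookkeeping that stitches the three case-analyses into a single expansion in $\tau^2$, and the only thing to be careful about is correctly accounting for the two $\tau^2 \cM^{(a)}_\cC$ contributions (one from \acc, one from \aacc) which together produce the factor of $2$ appearing in the definition of $\cLt_\cC$.
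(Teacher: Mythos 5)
Your proposal is correct and follows essentially the same route as the paper: decompose $\cEN_\cC$ into the three case operators, substitute Lemmas \ref{lem:accops}, \ref{lem:aaccops}, and \ref{lem:rejops}, collect powers of $\tau$ to identify $\cLt_\cC$ and $\cJt_\cC[\tau]$, then average over $\cC$ by linearity and bound the norms via the triangle inequality. No gaps.
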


 \begin{proof}
 By Lemmas \ref{lem:accops}, \ref{lem:aaccops}, and \ref{lem:rejops},
     \begin{align}
         \cEN_{\cC} & = \cEN_{\cC}^{(a)} + \cEN_{\cC}^{(b)} + \cEN_{\cC}^{(r)}\\
         & = \tau^2 \cM_{\cC}^{(a)} + (\tau^2\cM_{\cC}^{(a)} + \tau^4\cJt_{\cC}^{(b)}[\tau]) + (\cI - \tau^2\cM_{\cC}^{(r)} + \tau^4 \cJt_{\cC}^{(r)})\\
         & = \cI + \tau^2 (2 \cM_{\cC}^{(a)} - \cM_{\cC}^{(r)}) + \tau^4 (\cJt_{\cC}^{(b)}[\tau] + \cJt_{\cC}^{(r)})\\
         & = \cI + \tau^2 \cLt_{\cC} + \tau^4 \cJt_{\cC}[\tau].     \end{align}
     Since $\cJt_{\cC}  =  \cJt_{\cC}^{(b)}[\tau] + \cJt_{\cC}^{(r)}$, $\norm{\cJt^{(b)}_{\cC}[\tau]} \le 3$, and $\norm{\cJ^{(r)}_{\cC}} \le 1$, the bound on $\norm{\cJ_{\cC}[\tau]}$ follows by triangle inequality. Similarly $\norm{\cLt_\cC} \leq 2 \norm{\cM^{(a)}_\cC} + \norm{\cM^{(r)}_\cC} \leq 4$.
     The equation for  $\cEN$  comes from linearity.
     The bounds for $\norm{\cJ[\tau]}$ and $\norm{\cLt}$ come from triangle inequality.
 \end{proof}

\section{Uniqueness of the Fixed Point}\label{sec:Lind_unique}
In this section, we rewrite $\cLt$ in the Lindbladian form, and prove that $\cLt$ has a unique fixed point if the jump operators $\{C_j\}_j$ generate the full algebra.   This  section is independent of Section \ref{sec:fp} which proves $\cLt$ approximately fixes the Gibbs state, and can be skipped temporarily. 

\subsection{Lindbladian form of $\cLt$}\label{sec:Lindblad}

\begin{lemma}\label{lem:Lind_form}
    $\cLt_\cC$  can be written in Lindbladian form, that is defining
$$ S_\cC(EE'z) :=   \bra{EE'z} W^{(10)} ~\circ~ U_{\cC} \ket{0^{2gr+1}}$$
we have
   \begin{align}
    \cLt_\cC(\rho) = \sum_{EE'z} 2\cdot S_\cC(EE'z) \cdot \rho \cdot  S_\cC(EE'z)^\dagger -\left\{ S_\cC(EE'z)^\dagger S_\cC(EE'z),  \rho  \right\}_+.
    \end{align}
\end{lemma}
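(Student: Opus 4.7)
The plan is to expand the definitions of $\cM_\cC^{(a)}$ and $\cM_\cC^{(r)}$ using a resolution of the identity on the ancillary registers $2,3,4$ and identify the resulting matrix elements with the operators $S_\cC(EE'z)$. Throughout, let me abbreviate $A := W^{(10)} \circ U_\cC$, which acts on all four registers, so that $S_\cC(EE'z) = (I_1 \otimes \bra{EE'z}) \, A \, (I_1 \otimes \ket{0^{2gr+1}})$ is an operator on register $1$ only.

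First I would rewrite $\cM_\cC^{(a)}$ in Kraus form. Starting from Definition~\ref{def:accops}, I would insert the resolution $\sum_{EE'z} \ket{EE'z}\bra{EE'z}$ on registers $2,3,4$ when evaluating the partial trace:
\begin{align}
\cM_\cC^{(a)}(\rho) &= \sum_{EE'z} (I_1 \otimes \bra{EE'z}) \, A \, [\rho \otimes \ket{0^{2gr+1}}\bra{0^{2gr+1}}] \, A^\dagger \, (I_1 \otimes \ket{EE'z}) \\
&= \sum_{EE'z} S_\cC(EE'z) \, \rho \, S_\cC(EE'z)^\dagger.
\end{align}
This is the ``jump'' part of the Lindbladian.

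Next I would handle $\cM_\cC^{(r)}$. Using $W^{(000)} = (W^{(10)})^\dagger W^{(10)}$ from Definition~\ref{def:rejops} together with the fact that $U_\cC$ does not touch register $4$ (so that $W^{(10)} U_\cC = A$), I would insert the same resolution of identity inside the inner product:
\begin{align}
\bra{0^{2gr+1}} U_\cC^\dagger W^{(000)} U_\cC \ket{0^{2gr+1}}
&= \bra{0^{2gr+1}} A^\dagger A \ket{0^{2gr+1}} \\
&= \sum_{EE'z} S_\cC(EE'z)^\dagger \, S_\cC(EE'z).
\end{align}
Substituting back into the expression for $\cM_\cC^{(r)}$ then yields exactly the anti-commutator $\{\sum_{EE'z} S_\cC(EE'z)^\dagger S_\cC(EE'z), \rho\}_+$.

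Finally I would combine via $\cLt_\cC = 2\cM_\cC^{(a)} - \cM_\cC^{(r)}$, which produces the stated Lindbladian form termwise. There is no real obstacle here; the only thing to be careful about is keeping the register bookkeeping straight, in particular that $U_\cC$ acts trivially on register $4$, that $W^{(10)}$ is supported only on the transition $\ket{0}\to\ket{1}$ of register $4$ (so in practice only $z=1$ contributes), and that the resolutions of identity are inserted on the correct registers so that $S_\cC(EE'z)$ ends up as an operator on register $1$ alone.
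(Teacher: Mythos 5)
Your proposal is correct and matches the paper's own argument: the paper's proof of Lemma~\ref{lem:Lind_form} simply asserts the two identities $\cM_\cC^{(a)}(\rho)=\sum_{EE'z}S_\cC(EE'z)\rho S_\cC(EE'z)^\dagger$ and $\cM_\cC^{(r)}(\rho)=\sum_{EE'z}\bigl(S_\cC(EE'z)^\dagger S_\cC(EE'z)\rho+\rho S_\cC(EE'z)^\dagger S_\cC(EE'z)\bigr)$ as "one can check," and your resolution-of-identity computation (together with the observation $W^{(000)}=(W^{(10)})^\dagger W^{(10)}$ from Definition~\ref{def:rejops}) is exactly the verification being left to the reader.
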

\begin{proof}
One can check that
\begin{align}
	\cLt_\cC(\rho) &= 2\cM_\cC^{(a)}(\rho) - \cM_\cC^{(r)}(\rho)\\
				&=  \sum_{EE'z} 2\cdot S_\cC(EE'z)\cdot  \rho \cdot S_\cC(EE'z)^\dagger \\
    &-  \sum_{EE'z} \left(S_\cC(EE'z)^\dagger \cdot S_\cC(EE'z) \cdot \rho +    \rho \cdot S_\cC(EE'z)^\dagger S_\cC(EE'z)\right).\\
				&= \sum_{EE'z} 2\cdot S_\cC(EE'z) \cdot \rho \cdot  S_\cC(EE'z)^\dagger -\left\{ S_\cC(EE'z)^\dagger S_\cC(EE'z),  \rho  \right\}_+ \label{eq:merge}
\end{align}
\end{proof}

In the following we give a sketch that  $\cLt_\cC$ and the Davies generator $\cD_{\alpha}(w)$ have similar forms. This observation is just for intuition and will not be used in any proof. 
Due to the in-deterministic and imperfect energy estimation of $\QPE$, it is unclear whether  the proof techniques used for  Davies-generator-based Gibbs sampler~\cite{chen2023quantum,ding2024efficient} can be adapted to show that our $\cLt$ satisfying $\cLt(\rho_\beta)\approx0$. These proof techniques~\cite{chen2023quantum,ding2024efficient}  are based on  bounding the approximation error by truncating the infinite integral in the (weighted) Davies generator to a finite region.

 Recall that the canonical form of Davies generator $\cD$ w.r.t. jump operators $\{A_\alpha\}_\alpha$, in the Schrodinger picture, is given by
\begin{align}
\cD(\rho) &= -i [H,\rho] + \sum_{w,\alpha} \cD_{\alpha}(w) (\rho)\label{eq:sumD} \\ 	
\cD_{\alpha}(w) (\rho) &= G^\alpha(w)  \left( 2\cdot A_{\alpha}(w) \cdot \rho\cdot   A_{\alpha}(w)^\dagger - \left\{A_{\alpha}(w)^\dagger A_{\alpha}(w),\rho\right\}_+    \right)\\
A_\alpha(w) &:= \int_{-\infty}^{+\infty} e^{iwt} e^{-iHt} A_\alpha e^{iHt} dt.
\end{align}
Here $G^\alpha(w)$ is the acceptance rate, $A_\alpha$ is the jump operator, and $A_\alpha(w)$ is  an operator which maps a state of energy $\nu$ to energy $\nu+w$.  The summation $\sum_{w}$ sums over all possible energy difference $\{E_j-E_k\}_{j,k}$. 
The $S_\cC(EE'z)$ in our $\cLt$ is a conceptual analog of $\sqrt{G^{\alpha}} A_\alpha(w)$, which map states with energy approximately $\overline{E}$ to states with energy approximately $\overline{E'}$.
The jump operator $C$ is an analog of $A_\alpha$.

\subsection{Uniqueness of the Fixed Point}\label{sec:uniqueness}
In this subsection, we prove the following theorem.

\begin{theorem}\label{thm:uniqueness}
[Uniqueness of full-rank fixed point]
Suppose that in Algorithm \ref{alg:main},  the algebra generated by jump operators $\{C_j\}_j$  is equal to the full algebra, that is, the set of all $n$-qubit operators.
Then there is a  unique $\rho_\cLt\in\Xi (n)$ such that $\cLt(\rho_\cLt)=0$. In addition, $\rho_
   \cLt$ is a full-rank quantum state, and for any quantum state $\rho$, 
   \begin{align}
       \lim_{t\rightarrow \infty} e^{t\cLt}(\rho) = \rho_\cLt.
   \end{align}
\end{theorem}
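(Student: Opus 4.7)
The plan is to express $\cLt$ as a primitive Lindbladian and invoke the standard fact that primitive Lindbladian semigroups admit a unique full-rank stationary state toward which every initial state converges. By Lemma \ref{lem:Lind_form}, each $\cLt_\cC$ is already in Lindblad (GKLS) form with jump operators $\{S_\cC(EE'z)\}_{E,E',z}$, so $\cLt = \sum_\cC \tfrac{1}{2}\mu(C)\, \cLt_\cC$ is a non-negative convex combination of Lindbladians and hence itself a Lindbladian, with jump set $\{\sqrt{\tfrac{1}{2}\mu(C)}\, S_\cC(EE'z)\}_{\cC,E,E',z}$. In particular $e^{t\cLt}$ is a CPTP semigroup on the compact convex set of $n$-qubit density matrices, and a Cesaro-averaging argument produces at least one stationary state. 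The standard primitivity criterion (Spohn / Frigerio--Verri) then says that if the commutant $\{X\in\Xi(n): [X,S_\cC(EE'z)]=[X,S_\cC(EE'z)^\dagger]=0 \text{ for all }\cC,E,E',z\}$ equals $\bC\cdot I$, then the stationary state is unique, faithful (full-rank), and globally attracting --- which gives exactly the three claims of the theorem.

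So the whole task reduces to showing that this commutant is trivial. I would first unpack the structure of $S_\cC$: from Eq.~(\ref{eq:algoform}) and the definition of $W^{(10)}$ one checks that
$$S_{\{C,\QPE\}}(EE'1) \;=\; \sqrt{f_{EE'}}\; D(E')\, C\, D(E),$$
where $D(E) := \sum_j \beta_{jE}\,\ket{\psi_j}\bra{\psi_j}$ is diagonal in the energy eigenbasis of $H$ and $\sqrt{f_{EE'}}>0$ is a strictly positive scalar; the $\FQPE$ variant yields the complex-conjugated diagonal $\sum_j \beta^*_{jE}\,\ket{\psi_j}\bra{\psi_j}$ in place of $D(E)$. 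By the normalization $\sum_E|\beta_{jE}|^2=1$ together with Lemma \ref{lem:QPEerr} (which concentrates the mass on $\flo{E_j},\cei{E_j}$), for every $j$ there is at least one $E^*_j$ with $\beta_{j E^*_j}\neq 0$, so each $D(E)$ is non-vanishing on its intended support. By taking suitable linear combinations of $S_{\{C,\QPE\}}$ with $S_{\{C,\FQPE\}}$ (to separate the real and imaginary parts of the $\beta$'s) and varying $E,E'$, one can peel off the diagonal prefactors and exhibit every jump operator $C\in\{C_j\}$ inside the algebra generated by $\{S_\cC(EE'z),S_\cC(EE'z)^\dagger\}$. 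Since $\{C_j\}$ is closed under adjoint and generates the full matrix algebra by hypothesis, the algebra generated by the jump set of $\cLt$ is then also full, and its commutant collapses to $\bC\cdot I$.

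The main obstacle I expect is precisely this extraction step. Although the identity $S = \sqrt{f_{EE'}}\,D(E')\,C\,D(E)$ makes it intuitively clear that $C$ lies in the linear span of $\{S_\cC(EE'z)\}_{E,E'}$, the diagonal prefactors $D(E)$ and $D(E')$ are not individually invertible: the support of $D(E)$ depends on which eigenvalue the bit-string $E$ concentrates around, and there is no uniform lower bound on the $\beta_{jE}$ across all $j$ and $E$. The cleanest route seems to be to fix, for each pair of eigenvalue indices $(j,k)$, a dedicated pair $(E^*_j,E^*_k)$ on which both $\beta_{jE^*_j}$ and $\beta_{kE^*_k}$ are non-zero, and to show that a finite linear combination of the associated $S_\cC(E^*_jE^*_k)$ across $\{\QPE,\FQPE\}$ produces the entry-wise product of $C$ with a matrix whose $(j,k)$-entry is a known non-zero scalar; $C$ itself can then be recovered by elementary linear algebra, which closes the primitivity argument and hence the theorem.
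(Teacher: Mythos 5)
Your overall strategy matches the paper's: write $\cLt$ in Lindblad form (Lemma \ref{lem:Lind_form}), observe that the jump operators have the structure $S_{\{C,\QPE\}}(EE'1)=\sqrt{f_{EE'}}\,B_{E'}\,C\,B_{E}$ with $B_E$ diagonal in the energy eigenbasis, show the generated $*$-algebra is full, and invoke the standard uniqueness/ergodicity theorem for primitive Lindbladians (the paper's Theorem \ref{thm:wol12}). However, the step you yourself flag as the main obstacle --- extracting $C$ from the diagonally dressed operators --- is a genuine gap in your write-up, and your proposed fix does not work as stated. Fixing for each pair $(j,k)$ a dedicated $(E^*_j,E^*_k)$ and forming linear combinations only yields the Hadamard (entrywise) product of $C$ with a matrix whose $(j,k)$ entry is $\beta_{jE^*_j}\beta_{kE^*_k}$ (up to the $\sqrt{f}$ factor); these entries are \emph{not} uniform in $(j,k)$, and "recovering $C$ by elementary linear algebra" from such a Hadamard product is not an operation available inside the generated operator algebra --- you cannot divide entrywise using linear combinations and products of the jump operators. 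The facts you cite ($\sum_E|\beta_{jE}|^2=1$ and Lemma \ref{lem:QPEerr}) only guarantee some nonzero amplitude per $j$, which is insufficient to make the extraction linear and uniform.

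The paper closes exactly this gap with a special property of the QPE amplitudes that you did not use: the \emph{unsquared} amplitudes sum to one, $\sum_{E}\beta_{jE}=1$ for every $j$ (Eq.~(\ref{eq:bjE1}) in Appendix \ref{appendix:QPE}, a telescoping identity for the phase-estimation kernel). Equivalently $\sum_E B_E=I_n$, so
\begin{align}
\sum_{E,E'}\frac{1}{\sqrt{\mu(C)f_{EE'}}}\,V_{\{C,\QPE\}}(EE'1)=\Bigl(\sum_{E'}B_{E'}\Bigr)\,C\,\Bigl(\sum_{E}B_{E}\Bigr)=C,
\end{align}
which places each $C_j$ directly in the linear span of the jump operators and immediately makes the generated algebra full. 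To repair your argument you should replace the per-entry reconstruction with this single global linear combination (the coefficients $1/\sqrt{\mu(C)f_{EE'}}$ are well-defined positive scalars since $f_{EE'}>0$); the separation of $\QPE$ and $\FQPE$ contributions is then unnecessary, as the paper's extraction uses the $\QPE$ branch alone. The remaining parts of your proposal (CPTP-ness of $e^{t\cLt}$, existence of a stationary state, and the primitivity criterion) are consistent with the paper's route.
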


We first prove a Lemma.
\begin{lemma}\label{lem:eLCPTP}
    $e^{t\cLt}$ is CPTP for any $t\geq 0$.
\end{lemma}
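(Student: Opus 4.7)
The plan is to exhibit $\cLt$ itself in GKSL Lindbladian form and then invoke the standard theorem that such generators yield CPTP semigroups. Lemma \ref{lem:Lind_form} already does most of the work: it writes
\[
\cLt_\cC(\rho) = \sum_{EE'z} 2\, S_\cC(EE'z)\, \rho\, S_\cC(EE'z)^\dagger - \bigl\{ S_\cC(EE'z)^\dagger S_\cC(EE'z),\, \rho \bigr\}_+ ,
\]
which, after absorbing the factor $2$ into rescaled jump operators $L_{\cC, EE'z} := \sqrt{2}\, S_\cC(EE'z)$, is precisely the canonical GKSL form with zero Hamiltonian part. Since $\cLt = \sum_\cC \tfrac{1}{2}\mu(C)\, \cLt_\cC$ is a non-negative combination of such generators, I can collect all jump operators $\sqrt{\tfrac{1}{2}\mu(C)}\, L_{\cC, EE'z}$ into a single family and thereby exhibit $\cLt$ as a Lindbladian. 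The classical GKSL/Lindblad theorem then immediately gives that $e^{t\cLt}$ is a quantum dynamical semigroup, i.e.\ CPTP for every $t \geq 0$.

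If one prefers a more self-contained argument that avoids citing GKSL as a black box, an alternative route is available using the algorithm itself. By Lemma \ref{lem:evol_main}, $\cEN = \cI + \tau^2 \cLt + \tau^4 \cJt[\tau]$ and $\cEN$ is CPTP because it is the quantum channel implemented by one iteration of Algorithm \ref{alg:main}. Fixing $t \geq 0$ and setting $K = \lfloor t/\tau^2\rfloor$, a Trotter-style estimate using the explicit $O(\tau^4)$ remainder shows $\cEN^K \to e^{t\cLt}$ as $\tau \to 0^+$ in operator norm. Since every $\cEN^K$ is CPTP (composition of CPTP maps) and the set of CPTP maps is closed under such limits, the limiting map $e^{t\cLt}$ is CPTP.

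I do not expect this lemma to present real difficulty; the Lindbladian structure is already handed to us by Lemma \ref{lem:Lind_form}, and only the cosmetic rescaling of jump operators and the observation that a convex combination of Lindbladians is a Lindbladian are needed before invoking the standard result. The only place where one has to be mildly careful is in verifying that the (infinite-dimensional-looking but actually finite) sum over $(E,E',z)$ is a bona fide finite list of jump operators, which follows from the fact that $S(r)^{\otimes g}$ and the last-qubit index $z \in \{0,1\}$ are finite sets, so there are no convergence issues in defining the Lindbladian or its exponential.
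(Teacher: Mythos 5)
Your proposal is correct and matches the paper's proof, which gives exactly these two arguments with the emphasis reversed: the paper's primary argument is your second one (writing $e^{t\cLt}$ as the limit of the CPTP maps $\cEN[\delta]^{t/\delta^2}$ as $\delta\rightarrow 0$), and its noted alternative is your first (Lemma \ref{lem:Lind_form} puts $\cLt$ in Lindbladian form so that Theorem \ref{thm:CPTP} applies). Both routes are sound, including your rescaling $L=\sqrt{2}\,S$ and the observation that a non-negative combination of Lindbladians over the finite index set $(\cC,E,E',z)$ is again a Lindbladian.
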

\begin{proof}
    Note that  
\begin{align}
	e^{t\cLt }=\lim_{\delta\rightarrow 0} \mathcal{E}[\delta]^{t/\delta^2}. 
\end{align}
where the limit is taking by decreasing $\delta\geq 0$ to $0$. 
Since $\mathcal{E}[\delta]$ is CPTP  for any $\delta\in[0,1]$,  $e^{t\cLt }$ is also CPTP. An alternative proof can be obtained by noticing that  Lemma \ref{lem:Lind_form} implies that $\cLt$ satisfies Theorem \ref{thm:CPTP} in Appendix \ref{appendix:mixing}.
\end{proof}

We invoke the following Theorems to prove the uniqueness of the fixed point, which can be adapted from Corollary 7.2 of \cite{wolf2012quantum} or  Lemma 2 in \cite{ding2024efficient}.

\begin{theorem}[\cite{wolf2012quantum,ding2024efficient}]\label{thm:wol12}
    Suppose $e^{\cP t}:\Xi(n)\rightarrow \Xi(n)$ is a CPTP map for any $t\geq 0$, with generator 
    \begin{align}
\cP(\rho)	= - i [H_{system},\rho] + \sum_{k\in S}\left( V_k \rho V_k^\dagger -\frac{1}{2}\left\{ V_k^\dagger V_k,\rho \right\}_+  \right). 
\end{align}
If the algebra generated by operators $\{V_k\}_k$ is the full algebra $\Xi(n)$. Then there exists a unique $\rho_\cP$ such that $\cP(\rho_\cP)=0$. In addition, $\rho_
   \cP$ is a full-rank quantum state, and for any quantum state $\rho$, 
   \begin{align}
       \lim_{t\rightarrow \infty} e^{t\cP}(\rho) = \rho_\cP.
   \end{align}
\end{theorem}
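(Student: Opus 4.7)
The plan organizes around four ingredients: existence of some fixed state, full-rankness of any such state via the algebra hypothesis, uniqueness through a commutant characterization, and convergence from primitivity.

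First, for existence: since $e^{t\cP}$ is CPTP for each $t \geq 0$, the family $\{e^{t\cP}\}_{t\geq 0}$ is a commuting semigroup of continuous affine maps sending the compact convex set of density matrices into itself. The Markov--Kakutani fixed point theorem then yields a common fixed state $\rho_\cP$ with $e^{t\cP}(\rho_\cP) = \rho_\cP$ for every $t \geq 0$, and hence $\cP(\rho_\cP) = 0$.

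Next, for full-rankness: let $P$ be the projector onto $\mathrm{supp}(\rho_\cP)$ and $P^\perp = I - P$. Sandwiching $\cP(\rho_\cP) = 0$ by $P^\perp$ on both sides, the Hamiltonian commutator term and the anticommutator term vanish because $P^\perp \rho_\cP = 0$, leaving $\sum_k P^\perp V_k \rho_\cP V_k^\dagger P^\perp = 0$. Each summand is positive semidefinite, so each must vanish individually, and since $\rho_\cP$ is strictly positive on its support this forces $P^\perp V_k P = 0$. Thus $V_k$ preserves $\mathrm{supp}(\rho_\cP)$, and by the same argument applied to $V_k^\dagger$, the entire algebra generated by $\{V_k, V_k^\dagger\}$ preserves this subspace. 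Since that algebra equals $\Xi(n)$ by hypothesis, the only invariant subspace is the whole Hilbert space, so $P = I$ and $\rho_\cP$ is full-rank.

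For uniqueness, I would invoke the Frigerio--Spohn commutant characterization: given a faithful fixed state, $\ker \cP^*$ equals the commutant of $\{V_k, V_k^\dagger, H_{system}\}$. The proof applies the Kadison--Schwarz inequality to the unital completely positive dual semigroup $e^{t\cP^*}$ together with faithfulness of $\rho_\cP$, to show that any $X \in \ker \cP^*$ must commute with every Lindblad operator. Since $\{V_k\}$ generates $\Xi(n)$, the commutant collapses to $\mathbb{C} I$, giving $\dim \ker \cP^* = 1$; by $\dim \ker \cP = \dim \ker \cP^*$, the space $\ker \cP$ is one-dimensional, and intersecting with the state space pins down the unique fixed density matrix. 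Finally, primitivity (unique faithful fixed point) forces every nonzero eigenvalue of $\cP$ to have strictly negative real part, so $e^{t\cP}(\rho) \to \rho_\cP$ for every initial state $\rho$. The main obstacle is the commutant lemma in this last paragraph: the equality case of Kadison--Schwarz needs careful handling, and it genuinely relies on the full-rankness established in the previous step, so the order of the four ingredients above is not negotiable.
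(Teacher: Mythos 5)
The paper does not actually prove this statement: Theorem~\ref{thm:wol12} is imported wholesale from Corollary~7.2 of \cite{wolf2012quantum} and Lemma~2 of \cite{ding2024efficient}, so there is no internal proof to compare against. Your sketch follows the standard Frigerio--Spohn route (existence by a fixed-point theorem, support invariance for full-rankness, commutant characterization for uniqueness, spectral argument for relaxation), which is exactly the argument those references contain, and it is essentially sound. Two spots deserve tightening. First, in the full-rank step, ``by the same argument applied to $V_k^\dagger$'' is not the same argument: sandwiching $\cP(\rho_\cP)=0$ with $P^\perp$ yields $P^\perp V_k P=0$ but gives no direct handle on $P^\perp V_k^\dagger P$. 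Fortunately you do not need it --- $P^\perp V_k P=0$ already makes $\mathrm{supp}(\rho_\cP)$ invariant under every product of the $V_k$'s, hence under the algebra they generate, which is all of $\Xi(n)$ by hypothesis; this alone forces $P=I$. Second, the convergence step asserts that a unique faithful fixed point ``forces every nonzero eigenvalue of $\cP$ to have strictly negative real part,'' but excluding purely imaginary eigenvalues is precisely the content of Spohn's theorem, not a formal consequence of uniqueness. The fix is the same dissipation-function identity you invoke for the commutant lemma: if $\cP^*(X)=i\lambda X$, then
\begin{align}
0=\mathrm{tr}\bigl(\cP(\rho_\cP)\,X^\dagger X\bigr)=\mathrm{tr}\Bigl(\rho_\cP\,\textstyle\sum_k [V_k,X]^\dagger [V_k,X]\Bigr),
\end{align}
and faithfulness of $\rho_\cP$ forces $[V_k,X]=0$ for all $k$, hence $X\in\{V_k\}'=\mathbb{C}I$ and $\lambda=0$; a nontrivial Jordan block at $0$ is excluded because $e^{t\cP}$ is a trace-norm contraction. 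With those two repairs your outline is a complete and correct proof of the cited result.
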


We use the above Theorem to prove that $\cLt$ has a unique fixed point.

\begin{proof}[Proof of Theorem \ref{thm:uniqueness}]  
By Lemma \ref{lem:eLCPTP} we know that  $e^{t\cLt }$ is CPTP. 
Then we verify the conditions in Theorem \ref{thm:wol12}.
Recall that in Lemma \ref{lem:Lind_form} we have written $\cLt_\cC$ in  Lindbladian form in terms of $S_\cC(EE'z)$. By definition
\begin{align}
    \cLt &=\sum_{\cC=\{C,\sQPE\}}\frac{1}{2}\mu(C)\cdot \cLt_\cC\label{eq:fix_1}.
\end{align}
Define
		\begin{align}
	 H_{system} &=0\\ 
		 V_\cC(EE'z)  &:= \sqrt{\mu(C)}\cdot S_\cC(EE'z)\label{eq:V}.
	\end{align}
One can check that 
\begin{align}
\cLt =
- i [H_{system},\rho] + \sum_{\cC, EE'z} \left( V_\cC(EE'z) \rho V_\cC(EE'z)^\dagger -\frac{1}{2}\left\{ V_\cC(EE'z)^\dagger V_\cC(EE'z),\rho \right\}_+  \right). 	
\end{align}

Recall that in Section \ref{sec:algo} Eq.~(\ref{eq:accept}) we have computed 
\begin{align}
	\ket{\eta_{j\{C,\QPE\}}}  &:= \tau\cdot W^{(10)} \circ U_{C,\QPE} \ket{\psi_j,0^{2gr+1}} \\
&=  \sum_{k;E,E'\in S(r)^{\otimes g}} \tau \sqrt{f_{EE'}} \beta_{jE}\cdot c_{jk} \cdot \beta_{kE'}\ket{\psi_k} \ket{E}\ket{E'}\ket{1}.
\end{align}
 One can check that 
	\begin{align}
	 V_{\{C,\QPE\}}(EE'z)  &=  \sqrt{\mu(C)}\cdot S_\cC(EE'z)\\
	  & =  \sqrt{\mu(C)}\cdot \bra{EE'z} W^{(10)}\circ U_{\{C,\QPE\}} \sum_j\ket{\psi_j}\bra{\psi_j}\otimes \ket{0^{2gr+1}}\\
	  &= \sqrt{\mu(C)}\cdot 
	   \sum_j  \frac{1}{\tau}\langle EE'z \ket{\eta_{j\{C,\QPE\}}} \bra{\psi_j}\\
		&=  \sqrt{\mu(C)f_{EE'}} \cdot \delta_{z1} \sum_{j,k}   \beta_{jE}\cdot c_{jk} \cdot \beta_{kE'}\ket{\psi_k}\bra{\psi_j}\label{eq:V_form}
	\end{align}

Define a  matrix $B_E\in \Xi(n)$ such that
\begin{align}
	B_E \ket{\psi_j} =\beta_{jE} \ket{\psi_j},
\end{align}
From Eq.~(\ref{eq:V_form})
one can check that
\begin{align}
	 V_{\{C,\QPE\}}(EE'1) = \sqrt{\mu(C)f_{EE'}}\cdot B_{E'}\cdot C\cdot B_{E}
\end{align}
From Eq.~(\ref{eq:bjE1}) in Appendix \ref{appendix:QPE} we know that
 \begin{align}
	&\sum_E B_E =I_n,\label{eq:IE}\\
 &\sum_{EE'} \frac{1}{\sqrt{\mu(C)f_{EE'}}} V_{\{C,\QPE\}}(EE'1) =
 \left(\sum_{E'} B_{E'} \right) \cdot C \cdot 	 \left(\sum_{E} B_{E} \right) =C.
 \end{align}
Since  the algebra generated by $\{C_j\}_j$ is the full algebra $\Xi(n)$, thus the algebra generated by  $\{V_\cC(EE'z)\}_{\cC,E,E',z}$ is the full algebra $\Xi(n)$. We can then use Theorem \ref{thm:wol12} we complete the proof.

\end{proof}

\section{Gibbs States as Approximate Fixed Point}\label{sec:fp}

In this section, we will prove $\cLt$ approximately fixes the Gibbs state, that is
\begin{lemma}\label{lem:fixed_point}
If $r\geq r_{\beta H}$, we have
	$$\tn{\cLt(\rho_{\beta})}\leq \bL.$$
\end{lemma}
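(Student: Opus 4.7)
The plan is to follow the outline sketched in the overview of techniques. First, I would introduce the ``truncated'' Gibbs state $\rho_{\beta 0} := \sum_j \frac{e^{-\beta \flo{E_j}}}{Z_0} \ketbra{\psi_j}{\psi_j}$ (where $Z_0$ is the corresponding partition function), whose weights are defined with respect to the quantized energies $\flo{E_j}$ that $\QPE$ can actually produce. Since $|E_j - \flo{E_j}| \le \kappa_H 2^{-r}$, one gets $\tn{\rho_\beta - \rho_{\beta 0}} = O(\beta \kappa_H 2^{-r})$ by comparing the two probability distributions; combined with $\norm{\cLt} \le 4$ from Lemma \ref{lem:evol_main}, this reduces the problem to bounding $\tn{\cLt(\rho_{\beta 0})}$.

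Next, using the Lindbladian-form expression from Lemma \ref{lem:Lind_form} together with Definition \ref{def:L} to average over the random choice $\cC = \{C,\sQPE\}$, I would write $\cLt(\rho_{\beta 0})$ explicitly in the energy eigenbasis by plugging in the $\QPE/\FQPE$ expansions of Eq.~(\ref{eq:algoform}) and using $c_{jk}$ notation from Eq.~(\ref{eq:cjk}). The crucial observation is that a matrix element $\bra{\psi_m} \cLt(\rho_{\beta 0}) \ket{\psi_n}$ involves four occurrences of the amplitudes $\beta_{jE}$ (two from $\sQPE_{1,2}, \sQPE_{1,3}$ acting on the ket and two from their adjoints acting on the bra). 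I would then insert the resolution $I = P^{(0)} + P^{(1)} + P^{(\text{else})}$ into each of the four $\QPE$ outputs. By Lemma \ref{lem:QPEerr}, every term involving at least one $P^{(\text{else})}$ contributes at most $O(2^{-\cgh})\cdot 2^n$ in trace norm (the $2^n$ factor coming from bounding off-diagonal contributions by the dimension), which gives the first summand of $\bL$. This leaves $2^4 = 16$ ``good'' terms where each $\QPE$ output has been restricted to $\{\flo{E_j},\cei{E_j}\}$.

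For the 16 good terms, the key point is detailed balance. In each of the 16 cases, all four $\QPE$ outputs commit consistently to either the ``floor'' or ``ceiling'' approximation, so the acceptance factor $f_{EE'} = \min\{1, e^{\beta(\overline{E} - \overline{E'})}\}$ differs from the ideal factor $\min\{1, e^{\beta(E_j - E_k)}\}$ only by a fixed multiplicative constant of the form $e^{\pm\beta\delta}$ with $\delta = \kappa_H 2^{-r}$ independent of $j,k$. I would then use the classical Metropolis detailed balance identity $e^{-\beta \flo{E_j}} f_{\flo{E_j},\flo{E_k}} = e^{-\beta \flo{E_k}} f_{\flo{E_k},\flo{E_j}}$ (and the analogous ``ceiling'' versions) to pair the $\cM^{(a)}$ terms against the $\cM^{(r)}$ terms. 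The jump operator assumption that $\mu(C) = \mu(C^\dagger)$ and the fact that $\{C_j\}_j$ is closed under adjoint allow the pairing; the factor of $2$ in $\cLt = 2\cM^{(a)} - \cM^{(r)}$ is exactly what is needed so that the leading terms cancel. What remains in each of the 16 terms is a residue of the form $(e^{\pm\beta\delta} - 1)\cdot N_j$ with $\tn{N_j} = O(1)$. Using $|e^{\pm \beta\delta} - 1| \le 2\beta \kappa_H 2^{-r}$ (valid once $r \ge r_{\beta H}$), and clustering the 16 terms into $6$ groups as indicated in the overview, one obtains the second summand $40 \beta \kappa_H 2^{-r}$ of $\bL$.

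The main obstacle will be the bookkeeping in the 16-term expansion: one has to keep track of which combinations of floor/ceiling commitments yield the same multiplicative error, and simultaneously handle the averaging over $\sQPE \in \{\QPE, \FQPE\}$ that cancels the complex phase bias in the amplitudes $\beta_{jE}$ (which would otherwise prevent clean cancellation in the off-diagonal matrix elements of $\cM^{(a)}$ and $\cM^{(r)}$). I would treat the phase cancellation as a preliminary reduction step, showing that after averaging, one may effectively work with $|\beta_{jE}|^2$ rather than $\beta_{jE}\beta^*_{kE}$ for purposes of detailed balance, and only then perform the cluster analysis of the residues.
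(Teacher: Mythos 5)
Your proposal follows essentially the same route as the paper's proof: truncate the Gibbs state to the quantized energies and use $\norm{\cLt}\leq 4$ to reduce to $\rho_{\beta0}$, insert $P^{(0)}+P^{(1)}+P^{(else)}$ into each of the four $\QPE$ applications, kill every term containing an ``else'' via Lemma \ref{lem:QPEerr} at the cost of a $2^{2n}$ factor, and cancel accept against reject on the remaining terms using detailed balance at the quantized energies with a uniform multiplicative error $e^{\pm\beta\kappa_H 2^{-r}}-1$, exactly as in Lemmas \ref{lem:fp_trun}, \ref{lem:rl2rr}, \ref{lem:expr_rp_QPE} and \ref{lem:uniform_error} and the final clustering into six groups. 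The one imprecision is your preliminary reduction to working with $|\beta_{jE}|^2$: the $\QPE/\FQPE$ averaging only replaces the four-amplitude product by its real part $\Re(\beta_{jE}^{(B)}\beta_{jE}^{(Y)*}\beta_{mE'}^{(A)}\beta_{kE'}^{(X)*})$, not by squared moduli, but since that real part appears identically in the accept and reject matrix elements this weaker fact is all the detailed-balance pairing needs, so your argument still goes through.
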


The outline of the proof is as follows. In section \ref{sec:fp_trun_gibbs} we define a truncated Gibbs states $\rho_{\beta0}$ where  $\tn{\rho_\beta-\rho_{\beta0}}\approx 0$, thus $\tn{\cLt(\rho_{\beta})} \approx \tn{\cLt(\rho_{\beta0})}$ since $\norm{\cLt}$ is bounded. The remaining subsections then focus on bounding $\tn{\cLt(\rho_{\beta0})}$. Then in Section  \ref{sec:fp_trun_QPE} and   Section  \ref{sec:fp_trun_ope} we define projected operators which will be used as auxiliary notations in the following proofs. In \ref{sec:fp_trun_uni} we show that $\cLt(\rho_{\beta0})$ can be written as the sum of constant number of matrices, where each matrix has small trace norm. Finally in Section \ref{sec:fp_trun_fin} we complete the proof of Lemma \ref{lem:fixed_point}.

\subsection{Truncated Energy and Truncated Gibbs States}\label{sec:fp_trun_gibbs}

Recall that from Section \ref{sec:algover}, that Boosted Quantum Phase Estimation acts as:
$$
 \QPE	\ket{\psi_j}\ket{0^{gr}}= \ket{\psi_j} \sum_{E \in S(r)^{\otimes g}} \beta_{jE} \ket{E}
$$
where $S(r)$ is the set of  energy estimations that  are integer multiples of 
$$
w:= \kappa_H\cdot 2^{-r}.
$$
We have defined $r_{\beta H} = 1 +  \log \kappa_H + \log \beta$, so
for $r\geq r_{\beta H}$, we have $2\beta w\leq 1$.
For any real value $\nu\geq 0$, 
 $\flo{\nu}$  
denotes the closet value to $\nu$ which is an integer multiple of $w$ and is smaller or equal to $\nu$. For any $k\in \bN$, define
\begin{align}
&\nu^{(k)} := \flo{\nu} + k w.\\
\text{ in particular }&E_{j}^{(k)} := \flo{E_j} + k w.	
\end{align}
Recall that $Z=tr(\exp(-\beta H))$ is the partition function of $\rho_\beta$, and $p_j=\exp(-\beta E_j)/Z$ is the corresponding probability. Define the truncated probability and truncated Gibbs states as: 
\begin{align}
	p_{jk} &:= \exp{(-\beta  E_j^{(k)}})/Z,\quad \text{ for $k\in \bN$.}\\
	\rho_{\beta0}  &:= \sum_j p_{j0} \ket{\psi_j}\bra{\psi_j}.
 \end{align}
One can check that
\begin{lemma}\label{lem:pq_bound_QPEr}
If $2\beta w\leq 1$,
then $ |p_j-p_{j0}|\leq p_j \cdot 2\beta w$ and $   
	\tn{\rho_{\beta0}-\rho_\beta}\leq 2\beta w.$
\end{lemma}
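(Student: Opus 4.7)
The plan is to reduce both bounds to the single elementary observation that $\flo{E_j}$ differs from $E_j$ by at most $w$, combined with a standard exponential inequality.

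First, from the definition of $\flo{\cdot}$ I would note that $0 \leq E_j - \flo{E_j} \leq w$ for every $j$. Taking the ratio gives
\begin{align}
\frac{p_{j0}}{p_j} = \exp\bigl(\beta(E_j - \flo{E_j})\bigr) \in [1, e^{\beta w}],
\end{align}
so that $|p_{j0} - p_j| = p_j \cdot (p_{j0}/p_j - 1) \leq p_j \cdot (e^{\beta w} - 1)$. Under the hypothesis $2\beta w \leq 1$, i.e.\ $\beta w \leq 1/2$, the elementary inequality $e^x - 1 \leq 2x$ (valid for $x \in [0, 1/2]$ since $e^{1/2} < 2$) yields $e^{\beta w} - 1 \leq 2\beta w$, which gives the first claim $|p_j - p_{j0}| \leq p_j \cdot 2\beta w$.

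For the trace norm bound, the key observation is that $\rho_\beta$ and $\rho_{\beta 0}$ are both diagonal in the common eigenbasis $\{\ket{\psi_j}\}$ of $H$. Hence $\rho_{\beta 0} - \rho_\beta = \sum_j (p_{j0} - p_j) \ket{\psi_j}\bra{\psi_j}$ is diagonal, and its trace norm is simply the sum of the absolute values of its diagonal entries. Applying the first part of the lemma and using $\sum_j p_j = 1$ gives
\begin{align}
\tn{\rho_{\beta 0} - \rho_\beta} = \sum_j |p_{j0} - p_j| \leq 2\beta w \sum_j p_j = 2\beta w.
\end{align}

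I do not foresee any genuine obstacle here; the argument is essentially a one-line calculation once one notices that the two operators share an eigenbasis (so no triangle-inequality slack is lost in passing from the scalar bound to the operator bound) and that the hypothesis $2\beta w \leq 1$ is precisely what is needed to linearize the exponential $e^{\beta w}$. Note that $\rho_{\beta 0}$ is not itself a normalized state --- its trace is $\sum_j \exp(-\beta\flo{E_j})/Z \neq 1$ in general --- but this is irrelevant for the trace-distance bound since we only compare entrywise on a fixed basis.
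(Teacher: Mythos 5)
Your proof is correct and is exactly the computation the paper has in mind (the paper simply states the lemma with ``one can check''): bound $E_j-\flo{E_j}\in[0,w]$, linearize $e^{\beta w}-1\le 2\beta w$ using $\beta w\le 1/2$, and sum the diagonal entries in the common eigenbasis using $\sum_j p_j=1$. Your closing remark that $\rho_{\beta 0}$ need not have unit trace is accurate and does not affect the argument.
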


Lemma \ref{lem:evol_main} proves that $\norm{\cLt}\leq 4$.
Using this fact, along with the 
 triangle inequality and Lemma \ref{lem:pq_bound_QPEr},  we have 
\begin{lemma}\label{lem:fp_trun}
   $$ \tn{\cLt(\rho_\beta)}\leq \tn{\cLt(\rho_{\beta0})} + \tn{\cLt(\rho_\beta-\rho_{\beta0})}\leq \tn{\cLt(\rho_{\beta0})} + 4\cdot 2\beta w.$$
\end{lemma}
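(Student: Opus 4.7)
The plan is to prove this lemma in two short steps: first establish the inequality as a direct consequence of linearity of $\cLt$ and the triangle inequality for the trace norm, then bound the second piece by combining the operator-norm estimate $\norm{\cLt}\leq 4$ from Lemma \ref{lem:evol_main} with the truncation estimate $\tn{\rho_{\beta 0}-\rho_\beta}\leq 2\beta w$ from Lemma \ref{lem:pq_bound_QPEr}.

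More concretely, I would first observe that since $\cLt$ is a linear map on $\cH(n)$, we have the identity
\begin{equation*}
\cLt(\rho_\beta) \;=\; \cLt(\rho_{\beta 0}) \;+\; \cLt(\rho_\beta - \rho_{\beta 0}).
\end{equation*}
Applying the triangle inequality for $\tn{\cdot}$ to both sides immediately yields the first inequality
\begin{equation*}
\tn{\cLt(\rho_\beta)} \;\leq\; \tn{\cLt(\rho_{\beta 0})} \;+\; \tn{\cLt(\rho_\beta - \rho_{\beta 0})}.
\end{equation*}

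For the second inequality, I would invoke the defining property of the induced norm $\norm{\cdot}$ on linear maps, namely that $\tn{\cR(M)} \leq \norm{\cR}\cdot\tn{M}$ for every Hermitian operator $M$ (recall that $\rho_\beta-\rho_{\beta 0}\in\cH(n)$). Applied to $\cR=\cLt$ and $M=\rho_\beta-\rho_{\beta 0}$, this gives
\begin{equation*}
\tn{\cLt(\rho_\beta - \rho_{\beta 0})} \;\leq\; \norm{\cLt}\cdot\tn{\rho_\beta - \rho_{\beta 0}} \;\leq\; 4\cdot 2\beta w,
\end{equation*}
where the last step substitutes $\norm{\cLt}\leq 4$ (Lemma \ref{lem:evol_main}) and $\tn{\rho_\beta - \rho_{\beta 0}}\leq 2\beta w$ (Lemma \ref{lem:pq_bound_QPEr}, whose hypothesis $2\beta w\leq 1$ is guaranteed by the standing assumption $r\geq r_{\beta H}=1+\log\kappa_H+\log\beta$). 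Combining the two displayed inequalities completes the proof.

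There is essentially no obstacle here: the lemma is a purely mechanical consolidation of earlier ingredients, designed to reduce the task of bounding $\tn{\cLt(\rho_\beta)}$ to the task of bounding $\tn{\cLt(\rho_{\beta 0})}$, which is the genuinely hard estimate addressed in the subsequent subsections. The only point worth double-checking is that the hypothesis of Lemma \ref{lem:pq_bound_QPEr} is in force, which follows immediately from $r\geq r_{\beta H}$ as noted in Section \ref{sec:fp_trun_gibbs}.
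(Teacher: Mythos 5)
Your proof is correct and follows exactly the route the paper takes (the paper states this lemma as an immediate consequence of $\norm{\cLt}\leq 4$ from Lemma \ref{lem:evol_main}, the triangle inequality, and Lemma \ref{lem:pq_bound_QPEr}, without writing out the details). Your additional check that $\rho_\beta-\rho_{\beta0}$ is Hermitian, so the induced-norm bound applies, and that $r\geq r_{\beta H}$ guarantees $2\beta w\leq 1$, is exactly the right bookkeeping.
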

The remainder of this section focuses on bounding $\tn{\cLt(\rho_{\beta0})}$.

\subsection{Projected $\QPE$.}
\label{sec:fp_trun_QPE}
From Lemma \ref{lem:QPEerr}, we know that the median estimation from 
$\QPE$  almost always maps to one of two possible values: $E_j^{(0)}=\flo{E_j}$ or $E_j^{(1)}=\cei{E_j}$.
A remark is that the number ``two" is not important in the analysis as long as it is a constant. 
We define the following  projections that separate out the cases depending on the output of $\QPE$: 
\begin{align}
    &P^{(k)}:= \sum_j \ket{\psi_j}\bra{\psi_j}\otimes \sum_{E\in S(r)^{\otimes g}: \,\overline{E}=E_j^{(k)}}  \ket{E}\bra{E}, \text{ for $k\in\{0,1\}$.}\\
     &P^{(else)}:= I - P^{(0)}-P^{(1)}.
\end{align}
To analyze the performance of our algorithm, we decompose $\QPE$ into three operators, according to the above projections. That is 
\begin{align}
& \nI:=\{0,1,``else"\}.\\
	&\QPE= \sum_{k\in \nI} \QPE^{(k)}\\
 \text{where }&\QPE^{(k)}:= P^{(k)}\cdot \QPE.\\
 &\QPE^{(k)}	\ket{\psi_j}\ket{0^{gr}} = \ket{\psi_j} \sum_{E} \beta_{jE}^{(k)} \ket{E},\\
&\beta_{jE}^{(k)} :=\left\{ 
 \begin{aligned}
     \beta_{jE},  &\quad\text{ if $\overline{E}= E_{j}^{(k)}$},\\
     0, &\quad\text{ else.}
 \end{aligned}
 \right.\quad \quad\text{ for $k\in\{0,1\}$ }\\
 &\beta_{jE}^{else} :=\left\{ 
 \begin{aligned}
     \beta_{jE},  &\quad\text{ if $\overline{E}\not\in \{ E_j^{(0)},E_j^{(1)}\}.$}\\
     0, &\quad\text{ else.}
 \end{aligned}
 \right.
\end{align}

For convenience, for any subset $A\subseteq T$, we also define
\begin{align}
    P^{(A)}:=\sum_{k\in A} P^{(k)}.
\end{align}
and define $\QPE^{(A)}$, $\beta_{jE}^{(A)}$ accordingly. 
For $\FQPE$ we similarly define $\FQPE^{(k)}$ and $\FQPE^{(A)}$.

\subsection{Projected Operators}\label{sec:fp_trun_ope}

Recall that  the $\cLt$ for our algorithm is defined as
\begin{align*}
&\cLt_\cC = 2\cM_\cC^{(a)} -\cM_\cC^{(r)}.\\
&\cM_\cC^{(a)}(\rho) =  tr_{2,3,4} \left( W^{(10)}U_\cC ~[\rho\otimes \ket{0^{2gr+1}}\bra{0^{2gr+1}}]~ U_\cC^\dagger (W^{(10)})^\dagger \right)	\\
	&\cM_\cC^{(r)}(\rho) =   \bra{0^{2gr+1}} U_\cC^\dagger (W^{(10)})^\dagger W^{(10)} U_\cC\ket{0^{2gr+1}}\rho + \rho\bra{0^{2gr+1}}  U_\cC^\dagger (W^{(10)})^\dagger W^{(10)} U_\cC\ket{0^{2gr+1}}.
\end{align*}
where
\begin{align}	
	&W^{(10)}  = \sum_{E, E' \in S(r)^{\otimes g}} \sqrt{f_{EE'}} \ket{EE'} \bra{EE'} \otimes \ket{1}\bra{0}.\\
	& U_\cC = \sQPE_{1,3} \circ C \circ \sQPE_{1,2}.
\end{align}
For convenience, we divide $\cM_\cC^{(r)}$ into ``right" and ``left" terms 
\begin{align}
	&\cM_\cC^{(rr)}(\rho) =   \bra{0^{2gr+1}}  U_\cC^\dagger (W^{(10)})^\dagger W^{(10)} U_\cC\ket{0^{2gr+1}}\rho\\
	&\cM_\cC^{(rl)}(\rho) =  \rho\bra{0^{2gr+1}}   U_\cC^\dagger (W^{(10)})^\dagger W^{(10)} U_\cC\ket{0^{2gr+1}}
\end{align}
By the following Lemma it suffices to only analyze the right term $\cM^{(rr)}$.
\begin{lemma}\label{lem:rl2rr}
For any Hermitian $\rho$,
    \begin{align}
    \tn{\cLt(\rho)}\leq 2  \tn{ \cM^{(a)}(\rho)-   \cM^{(rr)}(\rho)}.
    \end{align}
\end{lemma}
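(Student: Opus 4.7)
\bigskip

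\noindent\textbf{Proof proposal for Lemma~\ref{lem:rl2rr}.}

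The plan is to exploit the fact that $\cM^{(rr)}$ and $\cM^{(rl)}$ are essentially complex conjugates of each other on Hermitian inputs, so that $\cLt(\rho)$ decomposes as $A + A^{\dagger}$ for a single operator $A = \cM^{(a)}(\rho) - \cM^{(rr)}(\rho)$; the bound then follows from the triangle inequality together with $\tn{A^{\dagger}} = \tn{A}$.

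First I would check that each of the relevant averaged maps sends Hermitian inputs to Hermitian outputs. For a fixed choice $\cC$, $\cM^{(a)}_{\cC}(\rho) = tr_{2,3,4}\!\bigl(V_{\cC}(\rho \otimes \ketbra{0^{2gr+1}}{0^{2gr+1}}) V_{\cC}^{\dagger}\bigr)$ with $V_{\cC} := W^{(10)} U_{\cC}$, so taking the adjoint just conjugates $\rho$; hence $\cM^{(a)}_{\cC}(\rho)^{\dagger} = \cM^{(a)}_{\cC}(\rho^{\dagger})$. Next I would set $X_{\cC} := \bra{0^{2gr+1}} U_{\cC}^{\dagger} (W^{(10)})^{\dagger} W^{(10)} U_{\cC} \ket{0^{2gr+1}}$; because this is a partial matrix element of an operator of the form $Y^{\dagger} Y$, it is Hermitian. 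Then $\cM^{(rr)}_{\cC}(\rho) = X_{\cC}\rho$ and $\cM^{(rl)}_{\cC}(\rho) = \rho X_{\cC}$, so $\cM^{(rl)}_{\cC}(\rho) = (X_{\cC} \rho^{\dagger})^{\dagger} = \cM^{(rr)}_{\cC}(\rho^{\dagger})^{\dagger}$. Averaging against $\tfrac{1}{2}\mu(C)$ over $\cC$ preserves these identities, so for any Hermitian $\rho$,
\begin{equation*}
\cM^{(a)}(\rho)^{\dagger} = \cM^{(a)}(\rho), \qquad \cM^{(rl)}(\rho) = \cM^{(rr)}(\rho)^{\dagger}.
\end{equation*}

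Next I would plug this into the definition $\cLt = 2\cM^{(a)} - \cM^{(r)} = 2\cM^{(a)} - \cM^{(rr)} - \cM^{(rl)}$ to obtain, for Hermitian $\rho$,
\begin{equation*}
\cLt(\rho) = \bigl(\cM^{(a)}(\rho) - \cM^{(rr)}(\rho)\bigr) + \bigl(\cM^{(a)}(\rho) - \cM^{(rl)}(\rho)\bigr) = A + A^{\dagger},
\end{equation*}
where $A := \cM^{(a)}(\rho) - \cM^{(rr)}(\rho)$ and the second equality uses the Hermiticity identities above. The triangle inequality together with the standard fact $\tn{A^{\dagger}} = \tn{A}$ then yields $\tn{\cLt(\rho)} \le \tn{A} + \tn{A^{\dagger}} = 2\tn{\cM^{(a)}(\rho) - \cM^{(rr)}(\rho)}$, which is exactly the claim.

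I do not expect any real obstacle here; the only subtlety is keeping track of the $\tfrac{1}{2}\mu(C)$ average over $\cC = \{C, \sQPE\}$ when passing between the $\cC$-indexed operators and the averaged operators, and verifying that the Hermiticity of $X_{\cC}$ is preserved under this convex combination (it is, trivially, since a real-weighted sum of Hermitian operators is Hermitian). Everything else is linear algebra.
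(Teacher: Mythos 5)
Your proposal is correct and follows essentially the same route as the paper: both split $\cM^{(r)}$ into $\cM^{(rr)}+\cM^{(rl)}$, observe that $\cM^{(a)}(\rho)$ is Hermitian and $\cM^{(rl)}(\rho)=\cM^{(rr)}(\rho)^{\dagger}$, and conclude via the triangle inequality with $\tn{A^{\dagger}}=\tn{A}$. Your write-up just verifies the Hermiticity identities in more detail than the paper, which states them without proof.
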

\begin{proof}
Note that 
 $\cM^{(a)}(\rho)=\cM^{(a)}(\rho)^\dagger$,  $\cM^{(rl)}(\rho)=\cM^{(rr)}(\rho)^\dagger$. Since  $\tn{N^\dagger}=\tn{N}$ for any matrix $N$, we have 
    $$ \tn{ \cM^{(a)}(\rho)-   \cM^{(rl)}(\rho)} = \tn{ \cM^{(a)}(\rho)-   \cM^{(rr)}(\rho)}.$$ Thus 
    $$\tn{\cLt(\rho)}\leq \tn{ \cM^{(a)}(\rho)-   \cM^{(rl)}(\rho)} + \tn{ \cM^{(a)}(\rho)-   \cM^{(rr)}(\rho)} \leq 2\tn{ \cM^{(a)}(\rho)-   \cM^{(rr)}(\rho)}. $$
\end{proof}

\noindent
We can now use the decomposition of $\QPE$ to decompose the accept and reject operators.

\begin{definition}[Projected Operators]
For any subsets $A,B\subseteq \nI$, we define $U_{\cC}^{(AB)}$  by substituting $\sQPE$ with corresponding operators:
\begin{align}
&U_\cC^{(AB)} := \sQPE^{(A)}_{1,3} \circ C \circ \sQPE^{(B)}_{1,2}.
\end{align}
Accordingly for subsets $A,B,X,Y\subseteq \nI$, we define
\begin{align}
 &\cM_\cC^{(a,ABXY)}(\rho) :=  tr_{2,3,4} \left( W^{(10)}U^{(AB)}_\cC ~[\rho\otimes \ket{0^{2gr+1}}\bra{0^{2gr+1}}]~ (U_\cC^{(XY)})^\dagger (W^{(10)})^\dagger \right)		\\
	&\cM_\cC^{(rr,ABXY)}(\rho) :=   \bra{0^{2gr+1}} (U_\cC^{(BA)})^\dagger (W^{(10)})^\dagger W^{(10)}U_\cC^{(YX)} \ket{0^{2gr+1}}\rho.\label{eq:uvst1}
	\end{align}
 \end{definition}

 \vspace{.1in}
 \noindent
Note that in Eq.~(\ref{eq:uvst1}) we use $U_\cC^{(BA)},U_\cC^{(YX)}$ instead of $U_\cC^{(AB)},U_\cC^{(XY)}$. 

 \vspace{.1in}
 
    One can directly check that the following  Lemma are true, which we omit the proofs.
\begin{lemma}\label{lem:eq_uvst}
	\begin{align}
  \cM_\cC^{(a)} = \sum_{u,v,s,t\in \nI}  \cM_\cC^{(a,uvst)},\quad
  \cM_\cC^{(rr)} = \sum_{u,v,s,t\in \nI}  \cM_\cC^{(rr,uvst)}.
	\end{align}
 Similarly after averaging over the random selection of $\cC$, we get
	\begin{align}
 &  \cM^{(a)} = \sum_{u,v,s,t\in \nI}  \cM^{(a,uvst)},  \quad \cM^{(rr)} = \sum_{u,v,s,t\in \nI}  \cM^{(rr,uvst)}.
	\end{align}
\end{lemma}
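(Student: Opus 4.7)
The lemma is essentially an expansion identity, so the strategy is direct substitution using the decomposition of $\sQPE$ into its projected pieces. By construction, $P^{(0)} + P^{(1)} + P^{(else)} = I$ on the combined register $1$ and register $2$ (respectively register $3$) Hilbert space, so $\sQPE = \sum_{k\in\nI} \sQPE^{(k)}$. Substituting this into $U_\cC = \sQPE_{1,3} \circ C \circ \sQPE_{1,2}$ and expanding gives $U_\cC = \sum_{u,v\in\nI} U_\cC^{(uv)}$, with $U_\cC^{(uv)} := \sQPE^{(u)}_{1,3}\circ C \circ \sQPE^{(v)}_{1,2}$, and similarly $U_\cC^\dagger = \sum_{s,t\in\nI}(U_\cC^{(st)})^\dagger$. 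Everything then falls out of bilinearity of the defining formulas.

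Concretely, for the accept case, I would substitute both expansions into
$$\cM_\cC^{(a)}(\rho) = tr_{2,3,4}\!\bigl(W^{(10)} U_\cC\,[\rho\otimes \ket{0^{2gr+1}}\bra{0^{2gr+1}}]\,U_\cC^\dagger (W^{(10)})^\dagger\bigr),$$
pull the two sums outside the trace and the partial trace, and identify each resulting term with $\cM_\cC^{(a,uvst)}(\rho)$ according to its definition; the quadruple $(u,v,s,t)$ is in one-to-one correspondence with a choice of projector after each of the four $\sQPE$ calls (two from $U_\cC$, two from $U_\cC^\dagger$). The reject case is identical in spirit: expand both factors of $U_\cC$ appearing in $U_\cC^\dagger (W^{(10)})^\dagger W^{(10)} U_\cC$ inside the sandwich $\bra{0^{2gr+1}}\cdots\ket{0^{2gr+1}}\rho$ and match the terms to $\cM_\cC^{(rr,ABXY)}$.

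The only subtlety is bookkeeping: the index ordering in the definition of $\cM_\cC^{(rr,ABXY)}$ uses $(U_\cC^{(BA)})^\dagger$ rather than $(U_\cC^{(AB)})^\dagger$, which is a chosen convention (putting the index of the $\sQPE_{1,2}$ projector first both for the forward and the adjoint half). Once this convention is respected, each expanded term lines up with exactly one summand on the right-hand side and the identity is immediate. Finally, the statement for $\cM^{(a)}$ and $\cM^{(rr)}$ (without the $\cC$ subscript) follows by linearity, since $\cM^{(a)} = \sum_{\cC=\{C,\sQPE\}} \tfrac{1}{2}\mu(C)\,\cM_\cC^{(a)}$ by Definition \ref{def:L}, and similarly for $\cM^{(rr)}$, so one just exchanges the order of the finite sum over $\cC$ and the finite sum over $(u,v,s,t)$. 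No analytic estimate is needed; the only ``obstacle'' is making the index convention explicit, which is why the author skips the proof.
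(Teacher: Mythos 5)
Your proposal is correct and is exactly the argument the paper intends: the paper omits the proof as a direct check, and the direct check is precisely the expansion $\sQPE=\sum_{k\in \nI}\sQPE^{(k)}$ (from $P^{(0)}+P^{(1)}+P^{(else)}=I$) applied to each of the four occurrences of $\sQPE$, followed by bilinearity and, for the averaged versions, exchanging the finite sums over $\cC$ and $(u,v,s,t)$. Your handling of the $(U_\cC^{(BA)})^\dagger$ index convention in the reject case is also the right bookkeeping point to flag.
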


\subsection{Explicit Expressions for Operators in the Accept and Reject Cases}

The following two lemmas give explicit representations for the operators $\cM^{(a,ABXY)}$ and $\cM^{(rr,ABXY)}$
in terms of the energy eigenstates $\{\ket{\psi_j}\}_j$.
To ease notations, we write $(\beta_{jE}^{(\cdot)})^*$ as $\beta_{jE}^{(\cdot)*}$.

\begin{lemma} For any subsets $A,B,X,Y\subseteq \nI$,
\label{lem:expr_rp_QPE}
         \begin{align}
           &\bra{ \psi_m} \cM^{(a,ABXY)}(\rho_{\beta 0})\ket{\psi_k} 
           =  \sum_j p_{j0} \sum_{E, E'} f_{EE'}   \sum_C \mu(C) \left(c_{jm}c_{jk}^*\right) \cdot \Re(\beta_{jE}^{(B)} \beta_{jE}^{(Y)*} \beta_{mE'}^{(A)}\beta_{kE'}^{(X)*}). \nonumber\\
         &\langle \psi_m |\cM^{(rr,ABXY)}(\rho_{\beta 0})	|\psi_k\rangle = p_{k0}\sum_{j;E,E'}  f_{E'E}   \sum_C \mu(C) \left(c_{jm}c_{jk}^*\right)  \cdot \Re\left(\beta_{jE}^{(B)} \beta_{jE}^{(Y)*} \beta_{mE'}^{(A)}\beta_{kE'}^{(X)*}\right)\nonumber
      \end{align}
      where $\Re(\alpha)=\frac{1}{2}(\alpha+\alpha^*)$ is the real part of a complex number.
\end{lemma}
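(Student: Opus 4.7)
The plan is a direct basis-expansion calculation, one projected operator at a time. I first expand the layers $\sQPE^{(B)}_{1,2}$, $C$, $\sQPE^{(A)}_{1,3}$ in the energy eigenbasis to get
\[ U^{(AB)}_{\{C,\QPE\}}\ket{\psi_j}\ket{0^{gr}}\ket{0^{gr}} \;=\; \sum_{k,E,E'} c_{jk}\,\beta_{jE}^{(B)}\beta_{kE'}^{(A)}\,\ket{\psi_k}\ket{E}\ket{E'}, \]
append the ancilla $\ket{0}$, and apply $W^{(10)}$ to flip it to $\ket{1}$ while multiplying by $\sqrt{f_{EE'}}$. Writing the analogous expansion for $U^{(XY)}_\cC\ket{\psi_j}\ket{0^{gr}}\ket{0^{gr}}$ on the bra side (with dummies $k',F,F'$), the partial trace over registers $2,3,4$ forces $F{=}E$, $F'{=}E'$, and merges the two $\sqrt{f_{EE'}}$ factors into $f_{EE'}$, yielding the $\QPE$-branch matrix element
\[ \bra{\psi_m}\cM^{(a,ABXY)}_{\{C,\QPE\}}(\ketbra{\psi_j}{\psi_j})\ket{\psi_k} \;=\; \sum_{E,E'} f_{EE'}\,c_{jm}c_{jk}^*\,\beta_{jE}^{(B)}\beta_{jE}^{(Y)*}\beta_{mE'}^{(A)}\beta_{kE'}^{(X)*}. \]

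Because $\FQPE$ conjugates each $\beta$-coefficient but leaves the $c$'s untouched (the same $C$ is drawn regardless of which version of $\sQPE$ is used), the $\FQPE$ contribution to the same matrix element is the complex conjugate of the $\QPE$ one in the product of four $\beta$'s only. Averaging the two branches with equal weight therefore turns that product into $\Re(\beta_{jE}^{(B)}\beta_{jE}^{(Y)*}\beta_{mE'}^{(A)}\beta_{kE'}^{(X)*})$, while $c_{jm}c_{jk}^*$ factors outside the real part. Weighting by $p_{j0}$, summing over $j$, and summing over $C$ with weights $\mu(C)$ then yield the claimed formula for $\cM^{(a,ABXY)}(\rho_{\beta 0})$.

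For the \rej~ case, $V^{(BA)}_\cC := \bra{0^{2gr+1}}(U^{(BA)}_\cC)^\dagger(W^{(10)})^\dagger W^{(10)}\,U^{(YX)}_\cC\ket{0^{2gr+1}}$ is an operator on register~$1$, and $\cM^{(rr,ABXY)}_\cC(\rho_{\beta 0}) = V^{(BA)}_\cC\rho_{\beta 0}$ pulls out the factor $p_{k0}$ in the $(m,k)$-entry. The same expansion applied to $U^{(YX)}_\cC\ket{\psi_k}\ket{0^{gr}}\ket{0^{gr}}$ and $U^{(BA)}_\cC\ket{\psi_m}\ket{0^{gr}}\ket{0^{gr}}$ (with $(W^{(10)})^\dagger W^{(10)}$ now contributing $f_{EE'}$ diagonally and returning the last qubit to $\ket{0}$), followed by relabeling the internal index $l\to j$ and swapping $E\leftrightarrow E'$, produces the accept-type expression but with $f_{E'E}$ in place of $f_{EE'}$ and with prefactor $c_{kj}c_{mj}^*$ in place of $c_{jm}c_{jk}^*$; the $\QPE/\FQPE$ average again yields $\Re$ of the same four-$\beta$ product since $\Re(\alpha)=\Re(\alpha^*)$. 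The remaining $c$-factor mismatch is absorbed by the standing assumption that $\{C_j\}_j$ is closed under adjoint with $\mu(C)=\mu(C^\dagger)$: the substitution $C\mapsto C^\dagger$ is a weight-preserving bijection on the support of $\mu$ that leaves every $\beta$-factor and $\mu(C)$ invariant but sends $c_{kj}c_{mj}^*$ to $c_{jm}c_{jk}^*$, so the two sums coincide. I expect the careful index bookkeeping, rather than any conceptual obstacle, to be the bulk of the work.
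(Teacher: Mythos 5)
Your proposal is correct and follows essentially the same route as the paper's proof: a direct eigenbasis expansion of $W^{(10)}U_\cC^{(AB)}\ket{\psi_j}\ket{0^{2gr+1}}$, the $\QPE/\FQPE$ average producing the $\Re(\cdot)$ of the four-$\beta$ product, the relabeling $l,E,E'\to j,E',E$ in the reject case, and the final use of $\mu(C)=\mu(C^\dagger)$ to convert $c_{mj}^*c_{kj}$ into $c_{jm}c_{jk}^*$. All steps match the paper's argument.
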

\begin{proof}
Recall that  
\begin{align}
	\rho_{\beta 0} &= \sum_j p_{j0} \ket{\psi_j} \bra{\psi_j}.
	\end{align}

Suppose the gates chosen in the algorithm are $\cC=\{\QPE,C\}$. Similarly as how we derive Eq.~(\ref{eq:accept}) in Section \ref{sec:algo}, one can check that,
\begin{align}
	&W^{(10)} U_\cC^{(AB)} \ket{\psi_j}\ket{0^{2gr+1}} =  \sum_{l;E, E'} \sqrt{f_{EE'}}  \,c_{jl} \cdot\beta_{jE}^{(B)}\cdot  \beta_{lE'}^{(A)} \ket{\psi_l}\ket{E}\ket{E'}  \ket{1}.
\end{align}

   \noindent\underline{(a) {\acc~Case:}}
 
   The whole operator $\cM^{(a,ABXY)}_C$ then looks like:
    \begin{align}
   \cM^{(a,ABXY)}_{\cC} ( \ket{\psi_j}\bra{\psi_j})    =  & tr_{2,3,4} \left( W^{(10)}U^{(AB)}_\cC ~[\ket{\psi_j}\bra{\psi_j}\otimes \ket{0^{2gr+1}}\bra{0^{2gr+1}}]~ (U_\cC^{(XY)})^\dagger (W^{(10)})^\dagger \right)\\
        = &  \sum_{l,h;E, E'} f_{EE'}  (c_{jl}c_{jh}^*)  (\beta_{jE}^{(B)} \beta_{jE}^{(Y)*}) (\beta_{lE'}^{(A)}\beta_{hE'}^{(X)*}) \ket{\psi_l}\bra{\psi_h}.
    \end{align}
    Then
    \begin{align}
        \langle \psi_m| \cM^{(a,ABXY)}_\cC \left[ \rho_{\beta 0} \right]| \psi_k\rangle = \sum_j p_{j0}  \sum_{E, E'} f_{EE'}  (c_{jm}c_{jk}^*)  (\beta_{jE}^{(B)} \beta_{jE}^{(Y)*}) (\beta_{mE'}^{(A)}\beta_{kE'}^{(X)*}).\label{eq:why_FQPE_acc_1}
    \end{align}
    When we average over the choice  of $\QPE$ and $\FQPE$, the $(\beta_{jE}^{(B)} \beta_{jE}^{(Y)*})  (\beta_{mE'}^{(A)}\beta_{kE'}^{(X)*})$ term will become:
    \begin{align}
        \frac{1}{2}\left( \beta_{jE}^{(B)} \beta_{jE}^{(Y)*}  \beta_{mE'}^{(A)}\beta_{kE'}^{(X)*} + \beta_{jE}^{(B)*} \beta_{jE}^{(Y)}  \beta_{mE'}^{(A)*}\beta_{kE'}^{(X)}\right)=\Re(\beta_{jE}^{(B)} \beta_{jE}^{(Y)*} \beta_{mE'}^{(A)}\beta_{kE'}^{(X)*}).
    \end{align}
    We  also average over the choice of $C$ and finally get:
      $$\bra{ \psi_m} \cM^{(a,ABXY)}(\rho_{\beta 0})\ket{\psi_k} 
           =  \sum_j p_{j0} \sum_{E, E'} f_{EE'}   \sum_C \mu(C) \left(c_{jm}c_{jk}^*\right) \cdot \Re(\beta_{jE}^{(B)} \beta_{jE}^{(Y)*} \beta_{mE'}^{(A)}\beta_{kE'}^{(X)*}).$$
    
   \noindent\underline{(b) {\rej~Case:}} Note that 
   \begin{align}
   &\bra{0^{2gr+1}} \bra{\psi_m} (U_\cC^{(BA)})^\dagger (W^{(10)})^\dagger W^{(10)} U_\cC^{(YX)} \ket{\psi_k}\ket{0^{2gr+1}}\\  &=\sum_{l;E,E'}  f_{EE'}   \left(c_{ml}^*c_{kl}\right) \left( \beta_{lE'}^{(B)*}\beta_{lE'}^{(Y)}\right) \left( \beta_{mE}^{(A)*}\beta_{kE}^{(X)}\right)\nonumber \\
    &=\sum_{j;E,E'}  f_{E'E}   \left(c_{mj}^*c_{kj}\right) \left( \beta_{jE}^{(B)*}\beta_{jE}^{(Y)}\right) \left( \beta_{mE'}^{(A)*}\beta_{kE'}^{(X)}\right).\nonumber
   \end{align}
   where the last equality comes from
changing the name $l,E,E'$ to $j,E',E$. 
Use the definition of $\cM^{(rr,ABXY)}$, we further have
\begin{align}
\langle \psi_m |\cM^{(rr,ABXY)}_\cC(\rho_{\beta 0})	|\psi_k\rangle
	 &= p_{k0} \bra{0^{2gr+1}} \bra{\psi_m} (U_\cC^{(BA)})^\dagger (W^{(10)})^\dagger W^{(10)} U_\cC^{(YX)} \ket{\psi_k}\ket{0^{2gr+1}}\\
	 &=p_{k0}\sum_{j;E,E'}  f_{E'E}   \left(c_{mj}^*c_{kj}\right) \left( \beta_{jE}^{(B)*}\beta_{jE}^{(Y)}\right) \left( \beta_{mE'}^{(A)*}\beta_{kE'}^{(X)}\right) \label{eq:why_FQPE_rej_1}
\end{align}
Note that in Eq.~(\ref{eq:why_FQPE_rej_1}), the term $ \left( \beta_{jE}^{(B)*}\beta_{jE}^{(Y)}\right) \left( \beta_{mE'}^{(A)*}\beta_{kE'}^{(X)}\right)$ is the complex conjugate of the corresponding term in  Eq.~(\ref{eq:why_FQPE_acc_1}). This is why in our algorithm we use $\QPE$ and $\FQPE$ randomly to cancel this phase: 
 when we average over the choice  of $\QPE$ and $\FQPE$, the $\left( \beta_{jE}^{(B)*}\beta_{jE}^{(Y)}\right) \left( \beta_{mE'}^{(A)*}\beta_{kE'}^{(X)}\right)$ term will become:
    \begin{align}
        \frac{1}{2}\left( \beta_{jE}^{(B)*}\beta_{jE}^{(Y)} \beta_{mE'}^{(A)*}\beta_{kE'}^{(X)}+ \beta_{jE}^{(B)}\beta_{jE}^{(Y)*} \beta_{mE'}^{(A)}\beta_{kE'}^{(X)*}\right)=\Re(\beta_{jE}^{(B)} \beta_{jE}^{(Y)*} \beta_{mE'}^{(A)}\beta_{kE'}^{(X)*}).
    \end{align}
    We  also average over the choice of $C$ and  get:
      $$\langle \psi_m |\cM^{(rr,ABXY)}(\rho_{\beta 0})	|\psi_k\rangle = p_{k0}\sum_{j;E,E'}  f_{E'E}   \sum_C \mu(C) \left(c_{mj}^*c_{kj}\right)  \cdot \Re\left(\beta_{jE}^{(B)} \beta_{jE}^{(Y)*} \beta_{mE'}^{(A)}\beta_{kE'}^{(X)*}\right).$$
    Further note that  $\mu$ chooses $C$ and $C^\dagger$ with the same probability, thus
$$\sum_{C} \mu(C)  c_{jm}c_{jk}^* = \sum_{C} \mu(C^\dagger)  c_{mj}^*c_{kj} =  \sum_{C} \mu(C)  c_{mj}^*c_{kj}.$$
where the amplitudes for $C^{\dag}$ are obtained from $C$ by swapping the indices and taking the complex conjugate. Thus we conclude the proof.

 \end{proof}

\subsection{Uniform Error}\label{sec:fp_trun_uni}
Before we analyze the errors, we first prove a Lemma which bound the trace norm of certain matrices.

\begin{lemma}\label{lem:uvst_norm}
If $2\beta w\leq 1$, for any four subsets $A,B,X,Y \subseteq \nI$, we have
\begin{align}
\tn{	 \cM_\cC^{(a,ABXY)}(\rho_{\beta 0})  }\leq 2, \quad \quad\tn{	 \cM^{(a,ABXY)}(\rho_{\beta 0})  }\leq 2.
\end{align}
\end{lemma}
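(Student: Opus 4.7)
The plan is to factor $\cM_\cC^{(a,ABXY)}$ as a sandwich of two bounded operators around $\rho_{\beta 0}$, then bound each factor and the trace norm of $\rho_{\beta 0}$ separately. Define
\[
L_\cC^{(AB)} := W^{(10)} \, U_\cC^{(AB)}, \qquad R_\cC^{(XY)} := W^{(10)} \, U_\cC^{(XY)},
\]
so that, unpacking the definition,
\[
\cM_\cC^{(a,ABXY)}(\rho_{\beta 0}) = tr_{2,3,4}\!\left( L_\cC^{(AB)} \bigl[\rho_{\beta 0} \otimes \ket{0^{2gr+1}}\!\bra{0^{2gr+1}}\bigr] \, (R_\cC^{(XY)})^\dagger \right).
\]
First I would invoke the standard facts that the partial trace is trace-norm contractive and that $\tn{ABC} \leq \|A\|\cdot\tn{B}\cdot\|C\|$ to reduce the bound to
\[
\tn{\cM_\cC^{(a,ABXY)}(\rho_{\beta 0})} \;\leq\; \|L_\cC^{(AB)}\| \cdot \|R_\cC^{(XY)}\| \cdot \tn{\rho_{\beta 0}}.
\]

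Next I would bound the two operator norms by $1$. Each factor in $U_\cC^{(AB)} = \sQPE^{(A)}_{1,3}\circ C \circ \sQPE^{(B)}_{1,2}$ is a projector composed with a unitary (the $\sQPE^{(\cdot)}$ are of the form $P^{(\cdot)}\cdot \sQPE$ with $P^{(\cdot)}$ a projector and $C$ is unitary), so $\|U_\cC^{(AB)}\|\leq 1$. The acceptance operator $W^{(10)}$ is block-diagonal in the $\ket{EE'}$ basis with entries $\sqrt{f_{EE'}} \in [0,1]$, hence $\|W^{(10)}\|\leq 1$. The same bound applies to $R_\cC^{(XY)}$.

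The remaining ingredient is $\tn{\rho_{\beta 0}}$. Since $\rho_{\beta 0}$ is positive semidefinite, $\tn{\rho_{\beta 0}} = tr(\rho_{\beta 0}) = \sum_j p_{j0}$. The assumption $2\beta w \leq 1$ lets me apply Lemma \ref{lem:pq_bound_QPEr} to get $p_{j0} \leq p_j(1 + 2\beta w) \leq 2 p_j$, whence $\sum_j p_{j0} \leq 2$. Combining the three bounds yields $\tn{\cM_\cC^{(a,ABXY)}(\rho_{\beta 0})} \leq 2$.

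Finally, the averaged version follows by triangle inequality and linearity: since $\cM^{(a,ABXY)} = \sum_\cC \frac{1}{2}\mu(C) \cM_\cC^{(a,ABXY)}$ and the coefficients sum to $1$, the convex combination of operators each obeying the bound of $2$ again satisfies $\tn{\cM^{(a,ABXY)}(\rho_{\beta 0})} \leq 2$. I do not anticipate any genuine obstacle here; the only subtlety is checking that $\tn{\rho_{\beta 0}}$ is not $1$ but may be as large as $2$, which is precisely why the stated constant is $2$ rather than $1$.
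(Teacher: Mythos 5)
Your proposal is correct and follows essentially the same route as the paper: both arguments reduce the bound to $\|W^{(10)}U_\cC^{(AB)}\|\leq 1$, $\|W^{(10)}U_\cC^{(XY)}\|\leq 1$, and $\sum_j p_{j0}\leq 2$ via Lemma \ref{lem:pq_bound_QPEr}, then average over $\cC$ by the triangle inequality. The only cosmetic difference is that you apply the H\"older-type inequality and partial-trace contractivity directly to $\rho_{\beta 0}$, whereas the paper first bounds each rank-one term $\ket{\psi_j}\bra{\psi_j}$ by $1$ (via Corollary \ref{cor:unitTN}) and then sums with the weights $p_{j0}$.
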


\begin{proof} 
 To ease notation, we abbreviate $\ket{\psi_j}
\ket{0^{2gr+1}}$ as $\ket{\psi_j0^{2gr+1}}$. Recall that

\begin{align}
	&	\cM_\cC^{(a,ABXY)}(\ket{\psi_j}\bra{\psi_j}) = 	 tr_{2,3,4} \left( W^{(10)}U^{(AB)}_\cC ~\ket{\psi_j0^{2gr+1}}\bra{\psi_j0^{2gr+1}}~ (U_\cC^{(XY)})^\dagger (W^{(10)})^\dagger \right) \label{eq:tn_auvst}\\
  &\|U_\cC^{(AB)}\| \leq 	 \|P^{(A)}\|\cdot  \|\sQPE_{1,3}\| \cdot \|C\| \cdot \|P^{(B)}\|\cdot \|\sQPE_{1,2}\| \leq 1.
 \end{align}
 where in the last inequality we use that $P^{(A)},P^{(B)}$ are projections thus their spectrum norm is bounded by $1$. Besides, $\|W^{(10)}\|\leq 1$ by definition. Thus by Corollary \ref{cor:unitTN} in Appendix \ref{appendix:Matrix_Norm} we  get 
 \begin{align}
     \tn{\cM_\cC^{(a,ABXY)}(\ket{\psi_j}\bra{\psi_j})}\leq 1.
 \end{align}

Then when $2\beta w\leq 1$, by Lemma \ref{lem:pq_bound_QPEr} we have  
 $p_{j0}\leq  p_{j} \cdot (1+2\beta w)\leq 2p_j$. By triangle inequality we have
 \begin{align}
\tn{\cM_\cC^{(a,ABXY)}(\rho_{\beta 0})} &\leq \sum_j 2p_j \tn{\cM_\cC^{(a,ABXY)}(\ket{\psi_j}\bra{\psi_j})}\leq 2.
\end{align}

When averaging over the random selection of $\cC$, we can get 
the bound for $\cM^{(a,uvst)}$ by triangle inequality.

\end{proof}

The following is a key lemma in the analysis which clusters the projected operators and bounds the norm of each cluster separately. While $\bra{\psi_m}(\cM^{(a,AvsY)} -\cM^{(rr,AvsY)})(\rho_{\beta0})\ket{\psi_k} $ may not cancel exactly to $0$,
the error for each cluster is independent of $m$ and $k$, which allows us to factor out the 
error term over the entire matrix.

\begin{lemma}\label{lem:uniform_error}
Recall that $T=\{0,1,``else\}$. Consider four subsets $A,B,X,Y\subseteq T$. If $2\beta w\leq 1$
	\begin{itemize}
		\item[(1)] For any $v,s\in \{0,1\}$, we have
		 \begin{align*}
	&\bra{\psi_m}(\cM^{(a,AvsY)} -\cM^{(rr,AvsY)})(\rho_{\beta0})\ket{\psi_k} = (1-e^{\beta (s-v) w })\bra{ \psi_m} \cM^{(a,AvsY)}(\rho_{\beta 0})\ket{\psi_k}.\\
 &\tn{\left(\cM^{(a,AvsY)} -\cM^{(rr,AvsY)}\right)(\rho_{\beta0})}\leq 4\beta w.
	\end{align*} 
  Note that the error $e^{\beta (s-v) w }$ in the first equality is uniform and independent of $m,k$.
		\item[(2)] If one of $A,B,X,Y$ equal to $\{``else"\}$, then   we have
	 \begin{align*}
	 	&\tn{\cM^{(a,ABXY)}(\rho_{\beta0})}\leq 2\cdot 2^{-\cgh +2n}.\\
	&\tn{\cM^{(rr,ABXY)}(\rho_{\beta0})}\leq 2\cdot 2^{-\cgh +2n}.
	 \end{align*}
	\end{itemize}
\end{lemma}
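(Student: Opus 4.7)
Plan. Both parts follow from the entry-wise formulas for $\cM^{(a,ABXY)}$ and $\cM^{(rr,ABXY)}$ in Lemma \ref{lem:expr_rp_QPE}, with Part~(2) additionally using the peak estimate of Boosted QPE in Lemma \ref{lem:QPEerr}.

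For Part~(1), I would first specialize Lemma \ref{lem:expr_rp_QPE} to $B=\{v\}$ and $X=\{s\}$. The projectors $P^{(v)}, P^{(s)}$ force the only surviving terms in the $E,E'$ sum to satisfy $\overline{E} = E_j^{(v)} = E_j^{(0)}+vw$ and $\overline{E'} = E_k^{(s)} = E_k^{(0)}+sw$. On this support the Metropolis weights simplify to $f_{EE'} = \min\{1, e^{\beta(E_j^{(v)}-E_k^{(s)})}\}$ for the accept case and $f_{E'E} = \min\{1, e^{\beta(E_k^{(s)}-E_j^{(v)})}\}$ for the reject case. The key step is a two-case verification, splitting on the sign of $E_j^{(v)} - E_k^{(s)}$: in both branches the $e^{-\beta E_j^{(0)}}$ factor in $p_{j0}$ combines with the Metropolis exponent so that the actual eigenvalues cancel, leaving
\[
p_{j0}\,f_{EE'} \;=\; e^{\beta(v-s)w}\,p_{k0}\,f_{E'E},
\]
with a multiplicative constant that is independent of $j, k$ (and of $E, E'$ and $\cC$). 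Substituting into the two expressions in Lemma \ref{lem:expr_rp_QPE} factors out $e^{\beta(s-v)w}$, giving entry-wise $\cM^{(rr,AvsY)}(\rho_{\beta0}) = e^{\beta(s-v)w}\,\cM^{(a,AvsY)}(\rho_{\beta0})$, which rearranges to the claimed subtraction identity. The trace-norm bound then follows from the elementary inequality $|1-e^x|\leq 2|x|$ for $|x|\leq 1$ (applied with $|x|=\beta|s-v|w\leq \beta w\leq 1/2$) together with $\tn{\cM^{(a,AvsY)}(\rho_{\beta0})}\leq 2$ from Lemma \ref{lem:uvst_norm}.

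For Part~(2), Lemma \ref{lem:QPEerr} implies that the unnormalized vector $\sQPE^{(else)}\ket{\psi_j,0^{gr}}$ has $l_2$-norm at most $2^{-\cgh}$, while $\sQPE^{(0)}, \sQPE^{(1)}$, the jump operator $C$, and $W^{(10)}$ are all contractions. I would write, for fixed $\cC$ and fixed eigenstate $\ket{\psi_j}$, $\cM_\cC^{(a,ABXY)}(\ket{\psi_j}\bra{\psi_j}) = tr_{2,3,4}(\ket{\phi_1}\bra{\phi_2})$ with $\ket{\phi_1}=W^{(10)} U_\cC^{(AB)}\ket{\psi_j,0^{2gr+1}}$ and $\ket{\phi_2}$ the analogous state for $(XY)$. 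If any slot in $A,B,X,Y$ is $\{\text{``else"}\}$, then one of $\|\phi_1\|, \|\phi_2\|$ carries the factor $2^{-\cgh}$ and the other is at most $1$, so by contractivity of the partial trace $\tn{\cM_\cC^{(a,ABXY)}(\ket{\psi_j}\bra{\psi_j})} \leq \|\phi_1\|\|\phi_2\| \leq 2^{-\cgh}$; averaging over $j$ with weight $p_{j0}$, over $C$ with $\mu$, and over the $\{\QPE,\FQPE\}$ coin costs at most a factor $\sum_j p_{j0}\leq 1+2\beta w\leq 2$, giving a bound comfortably inside $2\cdot 2^{-\cgh+2n}$. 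For the reject case, $\cM_\cC^{(rr,ABXY)}(\rho_{\beta0}) = N \cdot \rho_{\beta0}$ where $N = \bra{0^{2gr+1}}(U_\cC^{(BA)})^\dagger (W^{(10)})^\dagger W^{(10)} U_\cC^{(YX)}\ket{0^{2gr+1}}$ acts on register~1; each entry $\bra{\psi_m}N\ket{\psi_k}$ is an inner product of two states of the same type as $\phi_1,\phi_2$, hence bounded by $2^{-\cgh}$ in magnitude. The entrywise bound gives $\tn{N}\leq 2^{2n}\cdot 2^{-\cgh}$, and $\tn{N\rho_{\beta0}}\leq \tn{N}\,\|\rho_{\beta0}\|$ with $\|\rho_{\beta0}\|\leq 1+2\beta w$ yields the claim after averaging over $\cC$.

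The only non-routine step is Part~(1)'s detailed-balance identity: it is essential that in both branches of the $\min$ the $e^{-\beta E_j^{(0)}}$ in $p_{j0}$ cancel against the Metropolis exponent to leave exactly $e^{\beta(v-s)w}$ with no $j$-dependence, since this uniformity is precisely what allows the subtraction $\cM^{(a,AvsY)} - \cM^{(rr,AvsY)}$ to be expressed as a scalar multiple of $\cM^{(a,AvsY)}$. Everything else (the $l_2$-norm estimates, the partial-trace contraction, and the entrywise-to-trace-norm conversions) is standard.
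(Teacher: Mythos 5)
Your proposal is correct. Part~(1) follows the paper's proof essentially verbatim: the projectors pin $\overline{E}=E_j^{(v)}$ and $\overline{E'}=E_k^{(s)}$, the detailed-balance identity $p_{ks}f_{E_k^{(s)}E_j^{(v)}}=p_{jv}f_{E_j^{(v)}E_k^{(s)}}$ combined with $p_{ks}=p_{k0}e^{-\beta sw}$ and $p_{jv}=p_{j0}e^{-\beta vw}$ gives the uniform factor $e^{\beta(s-v)w}$, and the trace-norm bound follows from $|1-e^{\beta(s-v)w}|\leq 2\beta w$ together with Lemma \ref{lem:uvst_norm}. In Part~(2) you deviate slightly from the paper, and in a way that is mildly advantageous for the accept case: the paper bounds each matrix entry $|\bra{\psi_m}\cM^{(a,ABXY)}(\rho_{\beta0})\ket{\psi_k}|$ by $2\cdot 2^{-\cgh}$ via Cauchy--Schwarz on the amplitude sums and then pays a $2^{2n}$ factor by summing over all entries, whereas your argument via $\|\phi_1\|\|\phi_2\|\leq 2^{-\cgh}$ and contractivity of the partial trace (Corollary \ref{cor:unitTN}) bounds $\tn{\cM_{\cC}^{(a,ABXY)}(\ket{\psi_j}\bra{\psi_j})}$ directly and yields $2\cdot 2^{-\cgh}$ with no $2^{2n}$ loss. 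For the reject case you fall back on the entrywise bound for the scalar-valued operator $N$ acting on register~1, recovering the paper's $2\cdot 2^{-\cgh+2n}$; both routes land within the stated bound, so the lemma as stated is proved either way.
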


\begin{proof}
	Recall that from Lemma \ref{lem:expr_rp_QPE} we have for any $m,k$, 
	 \begin{align}
           &\bra{ \psi_m} \cM^{(a,ABXY)}(\rho_{\beta 0})\ket{\psi_k} 
           =  \sum_j p_{j0} \sum_{E, E'} f_{EE'}   \sum_C \mu(C) \left(c_{jm}c_{jk}^*\right) \cdot \Re(\beta_{jE}^{(B)} \beta_{jE}^{(Y)*} \beta_{mE'}^{(A)}\beta_{kE'}^{(X)*}). \label{eq:aABXY}\\
         &\langle \psi_m |\cM^{(rr,ABXY)}(\rho_{\beta 0})	|\psi_k\rangle = p_{k0}\sum_{j;E,E'}  f_{E'E}   \sum_C \mu(C) \left(c_{jm}c_{jk}^*\right)  \cdot \Re\left(\beta_{jE}^{(B)} \beta_{jE}^{(Y)*} \beta_{mE'}^{(A)}\beta_{kE'}^{(X)*}\right)\nonumber
      \end{align}
Recall that 
$$p_{js} = \exp{(-\beta  E_j^{(s)}})/Z.$$
     With some abuse of notation, we use $f$ for both
     \begin{align}
    &f_{EE'}=\min \left\{1, \exp\left(\beta \overline{E}- \beta \overline{E'}\right) \right\}	\\
     &f_{E_k^{(s)}E_j^{(v)}}=\min \left\{1, \exp\left(\beta E_k^{(s)}-\beta E_j^{(v)}\right) \right\}
     \end{align}
     In other words, if $E$ is a vector of $g$ energies from $S(r)$, the function $f$ implicitly takes the median value in determining $\min \left\{1, \exp\left(\beta \overline{E}- \beta \overline{E'}\right) \right\}$.
   
  \underline{For (1)}: 
     Suppose $v,s\in\{0,1\}$. The key thing to notice is that, by definition $\beta_{jE}^{(v)}$ is non-zero only if $\overline{E}=E_j^{(v)}$. Similarly for $\beta_{kE'}^{(s)*}$. Thus   $\bra{ \psi_m} \cM^{(a,AvsY)}(\rho_{\beta 0})\ket{\psi_k} $  is a sum of terms, where all the non-zero terms  $f_{EE'}$ take a uniform value as $f_{E^{(v)}_jE^{(s)}_k}$:
          \begin{align}
     \bra{ \psi_m} \cM^{(a,AvsY)}(\rho_{\beta 0})\ket{\psi_k} 
          & =  \sum_j p_{j0} \sum_{E, E':  \overline{E}=E^{(v)}_j \atop \overline{E'}=E^{(s)}_k} f_{EE'}   \sum_C \mu(C) \left(c_{jm}c_{jk}^*\right) \cdot \Re(\beta_{jE}^{(v)} \beta_{jE}^{(Y)*} \beta_{mE'}^{(A)}\beta_{kE'}^{(s)*})\nonumber	\\
           &=  \sum_j p_{j0} \cdot f_{E^{(v)}_jE^{(s)}_k} \sum_{E, E':  \overline{E}=E^{(v)}_j \atop \overline{E'}=E^{(s)}_k}     \sum_C \mu(C) \left(c_{jm}c_{jk}^*\right) \cdot \Re(\beta_{jE}^{(v)} \beta_{jE}^{(Y)*} \beta_{mE'}^{(A)}\beta_{kE'}^{(s)*})\nonumber
     \end{align}
Similarly we have  
  \begin{align}
    \langle \psi_m |\cM^{(rr,AvsY)}(\rho_{\beta 0})	|\psi_k\rangle &= p_{k0}\sum_{j} \sum_{E,E': \overline{E}=E_{j}^{(v)}\atop \overline{E'}=E_{k}^{(s)} }  f_{E'E}   \sum_C \mu(C) \left(c_{jm}c_{jk}^*\right)  \cdot \Re\left(\beta_{jE}^{(v)} \beta_{jE}^{(Y)*} \beta_{mE'}^{(A)}\beta_{kE'}^{(s)*}\right)\nonumber\\
           &= \sum_j  p_{k0}\cdot  f_{E_k^{(s)} E_j^{(v)}}  \sum_{E,E': \overline{E}=E_{j}^{(v)}\atop \overline{E'}=E_{k}^{(s)} }    \sum_C \mu(C) \left(c_{jm}c_{jk}^*\right)  \cdot \Re\left(\beta_{jE}^{(v)} \beta_{jE}^{(Y)*} \beta_{mE'}^{(A)}\beta_{kE'}^{(s)*}\right)\nonumber
     \end{align}

  Note that by definition of $f$ and $p_{ks}$ we always have
  \begin{align}
   &p_{ks}\cdot f_{E_k^{(s)}E_j^{(v)}} = p_{jv}\cdot  f_{E_j^{(v)}E_k^{(s)}}\\
   &p_{ks}=p_{k0}\cdot e^{-\beta s w}
  \end{align}
Thus 
\begin{align}
	p_{k0}\cdot  f_{E_k^{(s)} E_j^{(v)}} &=  e^{\beta s w } p_{ks}\cdot f_{E_k^{(s)}E_j^{(v)}}\\
	&=  e^{\beta s w } p_{jv}\cdot  f_{E_j^{(v)}E_k^{(s)}}\\
	&= e^{\beta (s-v) w } p_{j0}\cdot f_{E_j^{(v)}E_k^{(s)}}
	\end{align}
Note that the ``error" $e^{\beta (s-v) w }$ is independent of $j,m,k$,
thus
\begin{align}
	 &\bra{ \psi_m}\left( \cM^{(a,AvsY)}-\cM^{(rr,AvsY)}\right) (\rho_{\beta 0})	|\psi_k\rangle \\ &= (1-e^{\beta (s-v) w })  \sum_j p_{j0} \cdot f_{E^{(v)}_jE^{(s)}_k} \sum_{E, E':  \overline{E} =E^{(v)}_j \atop \overline{E'}=E^{(s)}_k}     \sum_C \mu(C) \left(c_{jm}c_{jk}^*\right) \cdot \Re(\beta_{jE}^{(v)} \beta_{jE}^{(Y)*} \beta_{mE'}^{(A)}\beta_{kE'}^{(s)*})\\
	 &=(1-e^{\beta (s-v) w })\bra{ \psi_m} \cM^{(a,AvsY)}(\rho_{\beta 0})\ket{\psi_k}
\end{align}
 Thus
\begin{align}
     &\left(\cM^{(a,AvsY)}-\cM^{(rr,AvsY)}\right)(\rho_{\beta0}) =(1-e^{\beta (s-v) w }) \cdot \cM^{(a,AvsY)}(\rho_{\beta 0}).\\
     &\tn{\left(\cM^{(a,AvsY)}-\cM^{(rr,AvsY)}\right)(\rho_{\beta0})} \leq 4\beta w. 
\end{align}
where for the last inequality, we use Lemma \ref{lem:uvst_norm} and the fact that  since $v,s\in\{0,1\}$, we have
$|s-v|\leq 1$ and  $|1-e^{\beta(s-v)w}| \leq 2\beta w$. 

\underline{For (2).}
W.o.l.g. assume $X=\{``else"\}$. Other cases are similar.
 Note  that
\begin{align}
	&\left(\sum_j \left|c_{jm}c_{jk}^*\right| \right)^2 \leq  \left(\sum_j \left|c_{jm}\right|^2 \right)\left(\sum_j \left|c_{jk}^*\right|^2 \right)\leq 1.\label{eq:c_bound}\\
	&\left(\sum_{E} |\beta_{jE}^{(B)} \beta_{jE}^{(Y)*}|\right)^2 \leq  \left(\sum_E \left|\beta_{jE}^{(B)}\right|^2 \right)\left(\sum_E \left|\beta_{jE}^{(Y)*}\right|^2 \right) \leq 1\cdot 1\\
	&\left(\sum_{E'} |\beta_{mE'}^{(A)} \beta_{kE'}^{(X)*}|\right)^2 \leq  \left(\sum_{E'} \left|\beta_{mE'}^{(A)}\right|^2 \right)\left(\sum_{E'} \left|\beta_{kE'}^{(X)*}\right|^2 \right) \leq 2^{- \cg  }\cdot 1.
\end{align}
where the first equality comes from the fact that   $C$ is a unitary. The second equality comes from  $\{\beta_{jE}^{(B)}\}_E$ is a subset of $\{\beta_{jE}\}_E$, and $\{\beta_{jE}\}_E$ is the amplitude of a quantum state. For the third equality, recall that $X=\{``else"\}$, notice that
for any non-zero $\beta_{mE'}^{else}$, by definition we have $\overline{E'}\not\in \{ E_j^{(0)},E_j^{(1)}\}$.
Then the third equality comes from property of $\QPE$, that is Lemma \ref{lem:QPEerr}.

Besides, note that 
\begin{align}
	|\Re(\beta_{jE}^{(B)} \beta_{jE}^{(Y)*} \beta_{mE'}^{(A)}\beta_{kE'}^{(X)*})| \leq |\beta_{jE}^{(B)} \beta_{jE}^{(Y)*} \beta_{mE'}^{(A)}\beta_{kE'}^{(X)*}|.
\end{align}
Since $2\beta w\leq 1$, by Lemma \ref{lem:pq_bound_QPEr} we have $ p_{j0} \in [0,2]$. Note that $f_{EE'}\in [0,1]$ and $\sum_C \mu(C)=1$, then from Eq.~(\ref{eq:aABXY})  we have for any $m,k$,
\begin{align}
|\bra{ \psi_m} \cM^{(a,ABXY)}(\rho_{\beta 0})\ket{\psi_k}| 
           &\leq  2    \sum_C \mu(C)  \sum_j \left|c_{jm}c_{jk}^*\right|  \cdot  \sum_{E} |\beta_{jE}^{(B)} \beta_{jE}^{(Y)*}| \cdot  \sum_{E'} |\beta_{mE'}^{(A)}\beta_{kE'}^{(X)*}| \\
           &       \leq 2\cdot 2^{-\cgh }.
 \end{align}  
Then by triangle inequality,
\begin{align}
    \tn{\cM^{(a,ABXY)}(\rho_{\beta0})} \leq \sum_{m,k} |\langle \psi_m|\cM^{(a,ABXY)}(\rho_{\beta0}) \left|\psi_k\rangle \right| \cdot \tn{\,\ket{\psi_m}\bra{\psi_k}\,} \leq   2\cdot 2^{-\cgh +2n}
\end{align}

 The proof for $\tn{\cM^{(rr,ABXY)}(\rho_{\beta0})}$ is similar.

\end{proof}

\subsection{Gibbs state as Approximate Fixed Point}\label{sec:fp_trun_fin}

We are ready now to complete the proof of Lemma \ref{lem:fixed_point}, which provides an upper bound on $\tn{\cLt(\rho_{\beta})}$.

\begin{proof}[of Lemma \ref{lem:fixed_point}]
If $r\geq r_{\beta H}$, we have $2\beta w\leq 1$.
From Lemma \ref{lem:fp_trun} and Lemma \ref{lem:rl2rr} we have
\begin{align}
	\tn{\cLt(\rho_{\beta})}
	\leq  2  \tn{ \cM^{(a)}(\rho_{\beta 0})-   \cM^{(rr)}(\rho_{\beta 0})} + 4\cdot 2\beta w
\end{align}  
Note that for any matrices $\{N^{(uvst)}\}_{uvst}$, by triangle inequality we have
\begin{align*}
     \tn{\sum_{u,v,s,t\in \nI}N^{(uvst)}} \leq 
     \tn{ \sum_{\substack{  v=else\\ u,s,t\in T}}N^{(uvst)}} + 
     \tn{ \sum_{\substack{  v\in\{0,1\}, s=else \\ u,t\in T}}N^{(uvst)}} 
    + \sum_{v,s\in\{0,1\}} \tn{ \sum_{u,t\in \nI}N^{(uvst)}}
\end{align*}

Then by Lemma \ref{lem:eq_uvst} and  Lemma \ref{lem:uniform_error}, we have 
\begin{align}
	\tn{ \cM^{(a)}(\rho_{\beta 0})-   \cM^{(rr)}(\rho_{\beta 0})} \leq 2\cdot( 2\cdot 2^{-\cgh +2n}+ 2\cdot 2^{-\cgh +2n}) + 4 \cdot 4\beta w\cdot
\end{align}
Substituting $w=\kappa_H \cdot 2^{-r}$,  we complete the proof.
\end{proof}

\section{Proofs of Theorem \ref{thm:main} and Theorem \ref{thm:mix}}\label{sec:eb_final}

\begin{proof}[of Theorem \ref{thm:main}]
The fact that one step of the algorithm can be expressed as $\cET(\rho) =\left( \cI + \tau^2 \cLt  + \tau^4 \cJt[\tau]  \right)(\rho)$, where $\norm{\cJt[\tau]}\leq 4$ is proven in Lemma \ref{lem:evol_main} in Section \ref{sec:operator_LC}.
 The bound on $\tn{\cLt(\rho_{\beta})}$ is proved in Lemma \ref{lem:fixed_point} in Section \ref{sec:fp}. The Uniqueness and Relaxation property is proved in Theorem  \ref{thm:uniqueness} in Section \ref{sec:Lind_unique}.
\end{proof}

\vspace{.1in}
Before proving theorem \ref{thm:mix} we  give a bound for the evolution. 
\begin{lemma}[Evolution]\label{lem:evol_new}
    For any $t\in \bR$, $\tau\in (0,1]$. If $K=t/\tau^2$ is an integer, then
    \begin{align}
        \norm{\cEN^K -e^{t\cLt}} \leq 2 e^4 K\tau^4 .
    \end{align}
\end{lemma}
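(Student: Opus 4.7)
The plan is to control the discrepancy between $K$ discrete iterations of $\cEN$ and the continuous evolution $e^{t\cLt}$ by a standard one-step telescoping. Since $K = t/\tau^2$, write $e^{t\cLt} = \bigl(e^{\tau^2 \cLt}\bigr)^K$, and set $A := \cEN = \cI + \tau^2 \cLt + \tau^4 \cJt[\tau]$ and $B := e^{\tau^2 \cLt}$. Then
\begin{equation*}
A^K - B^K \;=\; \sum_{i=0}^{K-1} A^i (A - B)\, B^{K-1-i},
\end{equation*}
so by the triangle inequality and submultiplicativity of the induced trace norm $\norm{\cdot}$,
\begin{equation*}
\norm{\cEN^K - e^{t\cLt}} \;\le\; \sum_{i=0}^{K-1} \norm{A^i}\,\norm{A - B}\,\norm{B^{K-1-i}}.
\end{equation*}

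Next I would invoke that both $A$ and $B$ are CPTP: $\cEN$ is CPTP by construction (it is implemented by Algorithm~\ref{alg:main}), and $e^{\tau^2 \cLt}$ is CPTP for $\tau^2 \ge 0$ by Lemma~\ref{lem:eLCPTP}. Hence $\norm{A^i} \le 1$ and $\norm{B^{K-1-i}} \le 1$ for every $i$, yielding
\begin{equation*}
\norm{\cEN^K - e^{t\cLt}} \;\le\; K \cdot \norm{A - B}.
\end{equation*}

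The remaining work is a single-step bound on $\norm{A - B}$. Expanding
\begin{equation*}
A - B \;=\; \tau^4 \cJt[\tau] \;-\; \sum_{m \ge 2} \frac{(\tau^2 \cLt)^m}{m!},
\end{equation*}
and using $\norm{\cJt[\tau]} \le 4$ together with $\norm{\cLt} \le 4$ (both from Lemma~\ref{lem:evol_main}), I get
\begin{equation*}
\norm{A - B} \;\le\; 4\tau^4 \;+\; \sum_{m \ge 2} \frac{(4\tau^2)^m}{m!}.
\end{equation*}
For $\tau \in (0,1]$, factor $\tau^4$ out of the tail using $\tau^{2(m-2)} \le 1$ to obtain $\sum_{m \ge 2} (4\tau^2)^m/m! \le \tau^4 \sum_{m \ge 2} 4^m/m! = \tau^4(e^4 - 5)$. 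Combining, $\norm{A - B} \le (e^4 - 1)\tau^4 \le 2 e^4 \tau^4$, and multiplying by $K$ finishes the proof.

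I do not expect any real obstacle here: the telescoping identity and CPTP contractivity are routine, and the single-step estimate reduces to bounding the Taylor tail of $e^{\tau^2 \cLt}$ against the $\tau^4$ remainder term $\tau^4 \cJt[\tau]$. The only place to be a bit careful is that the bound must come out as $O(\tau^4)$ rather than $O(\tau^2)$, which is exactly why the $\tau^2 \cLt$ piece of $\cEN$ must cancel with the linear term of the Taylor expansion of $B$; this cancellation is built into the decomposition $\cEN = \cI + \tau^2 \cLt + \tau^4 \cJt[\tau]$ and is the whole reason the lemma can even be stated with a $\tau^4$ error.
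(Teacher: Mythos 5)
Your proposal is correct and follows essentially the same route as the paper's proof: the telescoping identity for $\cEN^K - (e^{\tau^2\cLt})^K$, contractivity of both maps under $\norm{\cdot}$ via CPTP (Lemma \ref{lem:eLCPTP} and Lemma \ref{lem:CPTP}), and the single-step Taylor-tail bound using $\norm{\cJt[\tau]}\leq 4$ and $\norm{\cLt}\leq 4$. Your explicit constant $(e^4-1)\tau^4 \le 2e^4\tau^4$ is in fact slightly sharper than needed, and the argument is complete.
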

\begin{proof}
    Note that
    \begin{align}
        \cEN^K -(e^{\tau^2\cLt})^K = \sum_{k=0}^{K-1}  (e^{\tau^2\cLt})^k \left( \cEN- e^{\tau^2\cLt} \right) \cEN^{K-k-1}.
    \end{align}
   Note that $\cEN$ is CPTP by definition.  By Lemma \ref{lem:eLCPTP} we know $e^{\tau^2\cLt}$ is also CPTP. Thus  $\norm{\cEN}$ and $\norm{e^{\tau^2\cLt}}$ are bounded by $1$ by Lemma \ref{lem:CPTP} in Appendix \ref{appendix:Matrix_Norm}. Then we have 
   \begin{align}
       \norm{\cEN^K -(e^{\tau^2\cLt})^K}\leq K \cdot 1\cdot \norm{\cEN- e^{\tau^2\cLt}} \cdot 1\leq K \cdot \tau^4 2e^4 .
   \end{align}
   where for the last inequality, recall that  $\cEN=\cI+\tau^2 \cLt + \tau^4 \cJt[\tau]$ and the Taylor expansion of $e^{\tau^2\cLt}$, we get 
   \begin{align}
       \norm{\cEN- e^{\tau^2\cLt}} \leq \norm{\cJt[\tau]}\tau^4 +  \sum_{k=2}^\infty \frac{\tau^{2k} \norm{\cLt}^k }{k!} \leq \tau^4 2e^4.
   \end{align} 
   where we use $\norm{\cJt}\leq 4$ and $\norm{\cLt}\leq 4$ from Lemma \ref{lem:evol_main}.   
\end{proof}

We then prove that the distance  between $\rho_\beta$ and $\rho_\cLt$ can be bounded in terms of the mixing time.

\begin{lemma}\label{lem:dist_L_beta}
    $$\tn{\rho_\cLt-\rho_\beta}\leq \epsilon +  \tn{\cLt(\rho_\beta)}\cdot t_{mix}(\cLt,\epsilon) $$
\end{lemma}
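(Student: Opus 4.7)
The plan is to introduce an intermediate point $e^{t\cLt}(\rho_\beta)$ for $t := t_{mix}(\cLt,\epsilon)$ and bound the distance from $\rho_\beta$ to $\rho_\cLt$ by two triangle-inequality terms: one controlled by the mixing-time definition, the other by how far $\rho_\beta$ drifts under the semigroup $e^{t\cLt}$. Concretely, I would first write
\begin{align*}
\tn{\rho_\cLt - \rho_\beta} \le \tn{e^{t\cLt}(\rho_\beta) - \rho_\cLt} + \tn{e^{t\cLt}(\rho_\beta) - \rho_\beta}.
\end{align*}
Since $\rho_\beta$ is a valid quantum state, the definition of $t_{mix}(\cLt,\epsilon)$ gives $\tn{e^{t\cLt}(\rho_\beta) - \rho_\cLt} \le \epsilon$, disposing of the first term.

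For the second term, I would use the fundamental theorem of calculus applied to the semigroup: since $\frac{d}{ds} e^{s\cLt}(\rho_\beta) = e^{s\cLt}(\cLt(\rho_\beta))$, integrating from $0$ to $t$ yields
\begin{align*}
e^{t\cLt}(\rho_\beta) - \rho_\beta = \int_0^t e^{s\cLt}\!\left(\cLt(\rho_\beta)\right) ds.
\end{align*}
Taking the trace norm and moving it inside the integral (triangle inequality for integrals) gives $\tn{e^{t\cLt}(\rho_\beta) - \rho_\beta} \le \int_0^t \tn{e^{s\cLt}(\cLt(\rho_\beta))}\, ds$. Now I would invoke Lemma \ref{lem:eLCPTP}, which states that $e^{s\cLt}$ is CPTP for every $s \ge 0$, together with the standard fact (Lemma \ref{lem:CPTP} in the appendix) that CPTP maps are contractions in trace norm on Hermitian operators. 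Since $\cLt$ has Lindbladian form (Lemma \ref{lem:Lind_form}), $\cLt(\rho_\beta)$ is Hermitian, so $\tn{e^{s\cLt}(\cLt(\rho_\beta))} \le \tn{\cLt(\rho_\beta)}$ uniformly in $s$. The integral is therefore at most $t \cdot \tn{\cLt(\rho_\beta)} = t_{mix}(\cLt,\epsilon)\cdot \tn{\cLt(\rho_\beta)}$.

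Combining the two bounds yields the claimed inequality
\begin{align*}
\tn{\rho_\cLt - \rho_\beta} \le \epsilon + \tn{\cLt(\rho_\beta)}\cdot t_{mix}(\cLt,\epsilon).
\end{align*}
I do not anticipate a genuine obstacle: every ingredient (unique full-rank fixed point, Lindbladian form, CPTP property of $e^{s\cLt}$, contractivity of CPTP maps in trace norm) has already been established earlier in the paper. The only minor point to be careful about is justifying that the trace norm may be pulled inside the integral, which follows from the continuity of $s \mapsto e^{s\cLt}(\cLt(\rho_\beta))$ in trace norm — an immediate consequence of the boundedness of $\cLt$ shown in Lemma \ref{lem:evol_main}.
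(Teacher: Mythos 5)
Your proof is correct and follows essentially the same route as the paper: the same triangle-inequality split through the intermediate point $e^{t\cLt}(\rho_\beta)$, with the first term handled by the definition of $t_{mix}(\cLt,\epsilon)$ and the second by contractivity of the CPTP semigroup $e^{s\cLt}$ in trace norm. The only difference is in execution of the drift bound: where you integrate $\frac{d}{ds}e^{s\cLt}(\rho_\beta)$ directly, the paper discretizes time into $K=t/\tau^2$ steps, telescopes, obtains $K\tau^2\tn{\cLt(\rho_\beta)}+K\tau^4 e^4$, and sends $\tau\to 0$ --- your continuous-time version is arguably cleaner and avoids the extra $e^4$ bookkeeping.
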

\begin{proof}
    We abbreviate $t_{mix}(\cLt,\epsilon)$ as $t$. Let $\tau>0$ be a parameter such that $K=t/\tau^2$ is an integer. 
We have 
    \begin{align}
        \tn{\rho_\cLt-\rho_\beta} \leq \tn{\rho_\cLt - e^{t\cLt}(\rho_\beta)} + \tn{ e^{t\cLt}(\rho_\beta)-\rho_\beta} \leq \epsilon + \tn{ e^{t\cLt}(\rho_\beta)-\rho_\beta}.
    \end{align}
   where the last inequality comes from the definition of mixing time. Besides,  
   from Lemma \ref{lem:evol_main} we have 
  $\norm{\cLt}\leq 4$. Expand $e^{\tau^2 \cLt}$ as Taylor series in $\tau$, and use triangle inequality we get   
  \begin{align}
    	\tn{\left( e^{\tau^2\cLt}-\cI  \right)(\rho_\beta)} &\leq \tau^2 \tn{\cLt(\rho_\beta)} + \tau^4 \sum_{k=2}^{\infty} \frac{\norm{\cLt}^k}{k!}\\
    	&\leq \tau^2 \tn{\cLt(\rho_\beta)} + \tau^4 e^4.
  \end{align}
    Thus  \begin{align}
  \tn{e^{K\tau^2\cLt}(\rho_\beta) - \rho_\beta} &\leq  \sum_{k=K}^{1} \tn{	e^{(k-1)\tau^2\cLt} \left( e^{\tau^2\cLt}-\cI  \right)(\rho_\beta)}\\
  &\leq  \sum_{k=K}^{1} \tn{	\left( e^{\tau^2\cLt}-\cI  \right)(\rho_\beta)}\\
  &\leq K\tau^2 \tn{\cLt(\rho_\beta)} + K\tau^4 e^4.
     \end{align}
  where the second inequality comes from $e^{(K-1)\tau^2\cLt}$ is CPTP by Lemma \ref{lem:eLCPTP}, thus its $\norm{\cdot}$ is bounded by $1$ by Lemma \ref{lem:CPTP} in Appendix \ref{appendix:Matrix_Norm}. Thus we  finish the proof by substituting $K\tau^2$ with the mixing time $t$ and take the limit $\tau\rightarrow 0$.
\end{proof}

\begin{proof}[of Theorem \ref{thm:mix}] 
In the proof we abbreviate  $ t_{mix}(\cLt,\epsilon)$ as  $t$ and 
 $K=t/\tau^2$.
We will analyze the following terms
\begin{align}
  \tn{\cEN^K(\rho) -\rho_\beta} \leq \tn{\cEN^K(\rho) -e^{t\cLt}(\rho)} + 
    \tn{e^{t\cLt}(\rho) -\rho_\cLt} + \tn{\rho_\cLt-\rho_\beta}.
\end{align}
By Lemma \ref{lem:evol_new} we can bound the first term by $2e^4K\tau^4$. 
    By the definition of mixing time, we can bound the second term by $\epsilon$. The bound for last term comes from Lemma \ref{lem:dist_L_beta} and Theorem \ref{thm:main}.  Thus we  get the desired error bounds by substituting $K\tau^2$ with the mixing time $t$.

    \end{proof}

\section{Acknowledgement} Part of this work was conducted while the author was visiting the Simons Institute for the Theory of Computing. We thank Yu Tong for the helpful discussion on analyzing mixing time by spectral gap of $\cLs$.

\appendix

\section{More Details on Quantum Phase Estimation}\label{appendix:QPE}

\textbf{(1) Adapting quantum phase estimation to estimating Hamiltonian eigenvalues}. Firstly we recall the
 standard quantum phase estimation in Section 5.2 in \cite{nielsen2010quantum}: 
Suppose a unitary $V$ has an eigenstate $\ket{\phi}$ with eigenvalue $e^{2\pi \mu}$, where $\mu\in [0,1)$ is unknown. Quantum Phase estimation is a quantum algorithm which has access to $\ket{\phi}$ and $V$, and outputs an estimate of $\mu$.

More precisely, let $\ket{\phi}:=\ket{\psi_j}$, $U:=e^{iHt}$, $t:= \frac{2\pi}{\kappa_H}$,  where $\kappa_H=poly(n)$ is a power of two that upper bounds  
$\|H\|$.  For example, for local Hamiltonian $H=\sum_{i=1}^{m} H_i, \|H_i\|\leq 1$, one can set $\kappa_H$ to be the least integer which is a power of two and is greater than $m$.
Then $\ket{\psi_j}$ is an eigenstate of $U$ with eigenvalue $e^{2\pi \mu}$, for 
$$\mu := \frac{E_j t}{2\pi}=\frac{E_j}{\kappa_H}\in[0,1).$$

 Let $r$ be an integer. For any $\bb\in\{0,1\}^r$, with some abuse of notations, we use $\bb$ both for the binary string and the integer $\sum_{j=1}^r b_j2^{j-1}$.
 Denote $\bb^{(j)}\in\{0,1\}^r$ be the integer such that $\bb^{(j)}/2^r$ is the best $r$ bit approximation to $\frac{E_j}{\kappa_H}$ which is less than $\frac{E_j}{\kappa_H}$.   
 Then if $\bb$  is a good approximation to $\bb^{(j)}$, we have $E(\bb)$ defined below is a good approximation of $E_j$:
 \begin{align}\label{eq:Ebb}
 E(\bb):=  \kappa_H \cdot \bb/2^r.
\end{align}
Note that since $\kappa_H$ is a power of $2$, the binary representation of $\bb$
 can also be viewed as the binary representation of $E(\bb)$ by shifting the decimal point by $\log (\kappa_H)$ bits.

  The quantum phase estimation w.r.t precision $r$ in \cite{nielsen2010quantum} is a unitary, denoted as $\qpe$, which   outputs a distribution highly peaked at $\bb^{(j)}$,
  \begin{align}
	&\qpe	\ket{\psi_j}\ket{0^r} =\ket{\psi_j}\sum_{ l=-2^{r-1}+1 }^{2^{r-1}} \gamma_{jl} \ket{\bb^{(j)}+l\Mod 2^r};\label{eq:gammajl}\\
 \text{where }  &\gamma_{jl} := \frac{1}{2^r} \sum_{k=0}^{2^r-1} \left(e^{2\pi i (\epsilon_j -l/2^r)} \right)^k
 = \frac{1}{2^r} \left( \frac{1-e^{2\pi i (2^{r}\epsilon_j -l)}}{1-e^{2\pi i (\epsilon_j -l/2^{r})}} \right).\\
&\epsilon_j := E_j/\kappa_H- b^{(j)}/2^r.
	\end{align} 

\begin{lemma} Note that
    \begin{align}
        |\gamma_{j0}|^2 + |\gamma_{j1}|^2 \geq 0.8.\label{eq:QPE_err_flo_cei}
    \end{align}
\end{lemma}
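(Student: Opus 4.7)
I would reduce the claim to a single-variable calculus problem. Introduce the normalized offset $\delta := 2^r \epsilon_j \in [0,1)$, which measures where $E_j/\kappa_H$ sits inside the $r$-bit grid cell anchored at $\bb^{(j)}/2^r$. The edge case $\delta = 0$ is dispatched at once: the geometric sum defining $\gamma_{j0}$ collapses to $1$, so $|\gamma_{j0}|^2 + |\gamma_{j1}|^2 = 1$. For $\delta \in (0,1)$ I would apply $|1-e^{i\theta}| = 2|\sin(\theta/2)|$ to the closed form for $\gamma_{jl}$, and use the identity $e^{2\pi i(\delta-1)} = e^{2\pi i \delta}$ to simplify the $l=1$ numerator. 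The result is
$$
|\gamma_{j0}|^2 = \frac{\sin^2(\pi\delta)}{2^{2r}\sin^2(\pi\delta/2^r)}, \qquad |\gamma_{j1}|^2 = \frac{\sin^2(\pi\delta)}{2^{2r}\sin^2(\pi(1-\delta)/2^r)}.
$$

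Next I would apply the elementary inequality $\sin(\pi x) \le \pi x$ for $x \in [0, 1/2]$ to each \emph{denominator} (both $\delta/2^r$ and $(1-\delta)/2^r$ lie in $[0,1/2]$ once $r \ge 1$). This absorbs the $r$-dependence and yields the $r$-independent lower bound
$$
|\gamma_{j0}|^2 + |\gamma_{j1}|^2 \;\ge\; \frac{\sin^2(\pi\delta)}{\pi^2}\left(\frac{1}{\delta^2} + \frac{1}{(1-\delta)^2}\right) \;=:\; h(\delta).
$$

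The final step is to minimize $h$ over $\delta \in (0,1)$. The function is symmetric under $\delta \mapsto 1-\delta$, so it suffices to analyze it on $(0,1/2]$. Substituting $\delta = 1/2 - u$ with $u \in [0,1/2)$ gives $\sin^2(\pi\delta) = \cos^2(\pi u)$ and reduces the desired inequality $h(\delta) \ge 8/\pi^2$ to
$$
\cos^2(\pi u)\,\bigl((1/2-u)^{-2} + (1/2+u)^{-2}\bigr) \;\ge\; 8, \qquad u \in [0,1/2),
$$
with equality at $u = 0$. A short Taylor expansion near $u = 0$ produces the leading coefficient $12 - \pi^2 \approx 2.13 > 0$, confirming the minimum is local; the boundary limit $h(\delta)\to 1$ as $\delta \to 0^+$ rules out the endpoints. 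Evaluating at $\delta = 1/2$ gives $h(1/2) = 8/\pi^2 \ge 0.8106 > 0.8$, which is the bound required.

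The only nontrivial obstacle is certifying that $h$ is actually minimized at $\delta = 1/2$ (as opposed to some interior point of $(0,1/2)$). I view this as routine calculus: either a derivative sign analysis on $(0,1/2]$ or a monotonicity argument based on rewriting $h$ in the variable $u = 1/2 - \delta$ works. Everything else is a direct unpacking of the closed form for $\gamma_{jl}$ combined with the small-angle bound on $\sin$.
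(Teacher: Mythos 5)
Your proposal is correct and follows essentially the same route as the paper: the same closed form for $\gamma_{jl}$ via $|1-e^{i\theta}|=2|\sin(\theta/2)|$, the same $\sin(x)\le x$ bound applied to the denominators to remove the $r$-dependence, and the same reduction to minimizing $\frac{1}{\pi^2}\left(\frac{\sin^2(\pi\delta)}{\delta^2}+\frac{\sin^2(\pi(1-\delta))}{(1-\delta)^2}\right)$ at $\delta=1/2$, yielding $8/\pi^2>0.8$. The only difference is how the global minimum at $\delta=1/2$ is certified --- the paper appeals to a plot while you give a local Taylor expansion plus a boundary check and defer the global monotonicity to routine calculus --- which leaves both arguments at essentially the same level of informality on that one point.
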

 \begin{proof}
 By definition of $\bb^{(j)}$ and $\epsilon_j$ we have that $\epsilon_j\in [0,2^{-r})$.
  Note that if $\epsilon_j = 0$, then $ |\gamma_{j0}|^2 = 1$, so for the remainder of the proof, we will assume $\epsilon_j \in (0,2^{-r})$. 
 Define $\theta= 2^r \epsilon_j\in(0,1)$.
     Note that
     \begin{align}
         \gamma_{jl} = \frac{1}{2^r} \left( \frac{1-e^{2\pi i (\theta -l)}}{1-e^{2\pi i (\theta -l)/2^{r}}} \right) = \frac{1}{2^r} \frac{e^{\pi i(\theta -l)}}{e^{\pi i(\theta -l)/2^r}} \cdot \frac{\sin(\pi(\theta -l))}{\sin(\pi(\theta -l)/2^r)}.
     \end{align}
So the goal is to lower bound:
\begin{align}
    |\gamma_{j0}|^2 + |\gamma_{j1}|^2  = \frac{1}{4^r} \frac{\sin^2 (\theta \pi)}{\sin^2(\theta \pi/2^r)} + \frac{1}{4^r} \frac{\sin^2 ((\theta-1) \pi)}{\sin^2((\theta-1) \pi/2^r)}.
\end{align}
Since $\sin(\theta) \le \theta$ for $\theta\geq 0$, we can use this lower bound in the denominator for each term:
\begin{align}
    |\gamma_{j0}|^2 + |\gamma_{j1}|^2 &\ge \frac{1}{4^r} \frac{\sin^2 (\theta \pi)}{(\theta \pi/2^r)^2} + \frac{1}{4^r} \frac{\sin^2 ((\theta-1) \pi)}{((\theta-1) \pi/2^r)^2} \\
    &= 
\frac{1}{\pi^2} \left( \frac{\sin^2 (\theta \pi)}{\theta^2} +  \frac{\sin^2 (\theta-1) \pi)}{(\theta-1)^2} \right).\label{eq:function}
\end{align}
Note that as shown in Figure \ref{fig:function} for $\theta \in (0,1)$, the function in Eq.~(\ref{eq:function}) is symmetric around $\theta = 1/2$ and is minimized for $\theta = 1/2$.
The value obtained by plugging $\theta = 1/2$ into Eq.~(\ref{eq:function}) is at least  0.8 . \begin{figure}[H] \centering
\includegraphics[width=0.3\textwidth]{./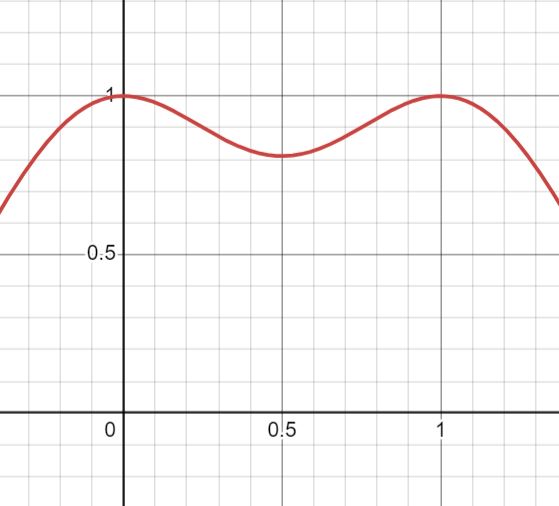}\caption{The function shown in Equation (\ref{eq:function}) is minimized for $\theta = 1/2$.} \label{fig:function}
\end{figure}
 \end{proof}

Note that  $\gamma_{j0}$ and $\gamma_{j1}$ are the amplitudes of $\bb^{(j)}$ and $\bb^{(j)}+1$, that are the best two $r$-bit approximation to $\frac{E_j}{\kappa_H}$. Denote the best two $r$-bit approximation to $E_j$  as 
\begin{align}
    &\flo{E_j}:= E(\bb^{(j)}), \quad \cei{E_j}:= E(\bb^{(j)}+1).
\end{align}
Since there is a one-to-one correspondence between $r$-bit string $\bb$ and $E(\bb)\in S(r)$, to ease notations, we  rewrite  Eqs.~(\ref{eq:gammajl})(\ref{eq:QPE_err_flo_cei}) as

\begin{align}
	&\qpe	\ket{\psi_j}\ket{0^r} =\ket{\psi_j}\sum_{\nu\in S(r)} \beta_{j\nu} \ket{\nu},\\
	&\text{where } |\beta_{j\flo{E_j}}|^2 + |\beta_{j\cei{E_j}}|^2\geq 0.8. \label{eq:QPE_tricky}
	\end{align}

 In addition, one can check that $\forall j,$
 \begin{align}
     \sum_{\nu\in S(r)}\beta_{j\nu} &= \sum_{l=-2^{r-1}+1}^{2^{r-1}} \gamma_{jl}\\
     &= \frac{1}{2^r} \sum_{k=0}^{2^r-1}  \sum_{l=-2^{r-1}+1}^{2^{r-1}} \left(e^{2\pi i \epsilon_j} \right)^k \left(e^{2\pi i ( -l/2^r)} \right)^k\\
     &= \frac{1}{2^r} \sum_{k=0}^{2^r-1}  \left(e^{2\pi i \epsilon_j} \right)^k \cdot 2^r \delta_{k0}\\
     &=1. \label{eq:bjE1_old}
 \end{align}

\noindent\textbf{(2) Boosted Quantum Phase Estimation.}
Let $r,g$ be two integer parameters. Taking input as $\ket{\psi_j}\ket{0^{rg}}$, the
 Boosted quantum phase estimation ($\QPE$)  in this manuscript is just viewing the $\ket{0^{rg}}$ as $g$ copies of $\ket{0^r}$, and sequentially perform the standard Quantum Phase Estimation  $\qpe$  w.r.t. $\ket{\psi_j}$ and each copy of  $\ket{0^r}$,
 \begin{align}
     \QPE \ket{\psi_j}\ket{0^{gr}} &= \ket{\psi_j}\sum_{\nu_1,...,\nu_g\in S(r)} \beta_{j\nu_1}...\beta_{j\nu_g} \ket{\nu_1...\nu_g}.\\
     &= \ket{\psi_j}\sum_{E\in S(r)^{\otimes g}} \beta_{jE} \ket{E}.
 \end{align}
 where in the last equality to ease notations we write $E:=\nu_1...\nu_g$ and  $\beta_{jE} := \beta_{j\nu_1}...\beta_{j\nu_g}$.
We use $\overline{E}$ to denote the median of $\nu_1,...,\nu_g$. 
By Eq.~(\ref{eq:QPE_tricky}) and Chernoff bound, we have 
\begin{align}
    &\sum_{E\in S(r)^{\otimes g}: \atop \overline{E}\neq \flo{E_j}, \overline{E}\neq\cei{E_j}} |\beta_{jE}|^2  \leq  2^{- \cg }.
\end{align}
Besides, by Eq.~(\ref{eq:bjE1_old}) we have 
\begin{align}
    \sum_{E\in S(r)^{\otimes g}} \beta_{jE} =  \left(\sum_{\nu\in S(r)}\beta_{j\nu}\right)^g =1. \label{eq:bjE1}
\end{align}
\textbf{(3) Flipped Boosted Quantum Phase Estimation.}
As explained in Section 5.2 of \cite{nielsen2010quantum}, the circuit for the standard Quantum Phase Estimation $\qpe$ is, (a) First applying a layer of Hadamard gates. (b) Then apply a sequence controlled-$U$ operators, with $U$ raised to successive powers of $2$. Denote the operator as $CU$. (c) Then apply inverse quantum Fourier transform.  

The Flipped Quantum Phase Estimation $\fqpe$ is similar to $\qpe$, with the difference that each time $\fqpe$ adds a conjugate phase: (a) First applying a layer of Hadamard gate. (b) Then apply $(CU)^\dagger$. (c) Then apply  quantum Fourier transform rather than inverse quantum Fourier transform. 
It is worth noting that $\fqpe \neq \qpe^\dagger$.

Recall that from (2) the  Boosted Quantum Phase estimation is just running $\qpe$ for $g$ times. 
The Flipped Boosted Quantum Phase estimation ($\FQPE$) is just running $\fqpe$ for $g$ times.

\section{Matrix Norm Properties}\label{appendix:Matrix_Norm}
In this section we list properties on matrix norms.

\begin{lemma}[Cauchy-Schwarz inequality ]\label{lem:CS}
For any $M,N\in \Xi(m)$,
\begin{align}
 |tr(A^\dagger B)|\leq \sqrt{tr(A^\dagger A)} \sqrt{tr(B^\dagger B)}.	
\end{align}
\end{lemma}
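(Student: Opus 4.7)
The plan is to recognize this as the Cauchy--Schwarz inequality for the Hilbert--Schmidt (a.k.a.\ Frobenius) inner product $\langle A, B \rangle := tr(A^\dagger B)$ on $\Xi(m)$, and then reduce to the standard vector Cauchy--Schwarz inequality. I would first verify that $\langle \cdot, \cdot \rangle$ is an honest inner product on $\Xi(m)$: sesquilinearity is immediate from linearity of the trace and the adjoint; conjugate symmetry follows from $tr(A^\dagger B) = \overline{tr(B^\dagger A)}$; and positive definiteness from the identity
\begin{equation*}
tr(A^\dagger A) = \sum_{i,j} |A_{ij}|^2 \geq 0,
\end{equation*}
with equality iff $A = 0$. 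In particular, $\sqrt{tr(A^\dagger A)}$ is a genuine norm (the Frobenius norm).

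The cleanest route is then a standard one-line reduction: under the vectorization map $\mathrm{vec} : \Xi(m) \to \mathbb{C}^{m^2}$ that stacks the columns of a matrix into a single vector, one checks directly that $\langle A, B \rangle = \langle \mathrm{vec}(A), \mathrm{vec}(B) \rangle$, where the right-hand side is the usual Euclidean inner product on $\mathbb{C}^{m^2}$. Applying the ordinary Cauchy--Schwarz inequality for vectors in $\mathbb{C}^{m^2}$ to $\mathrm{vec}(A)$ and $\mathrm{vec}(B)$ yields the claim.

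Alternatively, if one prefers a self-contained argument without invoking the vector version, I would use the classical ``optimize a quadratic'' trick: if $B = 0$ the inequality is trivial, so assume $B \neq 0$ and set $\lambda := tr(B^\dagger A)/tr(B^\dagger B)$. Then expanding
\begin{equation*}
0 \leq tr\bigl((A - \lambda B)^\dagger (A - \lambda B)\bigr) = tr(A^\dagger A) - \frac{|tr(B^\dagger A)|^2}{tr(B^\dagger B)}
\end{equation*}
and rearranging gives the desired inequality after taking square roots (and noting $|tr(A^\dagger B)| = |tr(B^\dagger A)|$ by conjugate symmetry).

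There is essentially no obstacle here — the only mild care needed is the case $tr(B^\dagger B) = 0$, which forces $B = 0$ and makes both sides zero, so it can be disposed of at the outset. Given how standard this is, I would expect the author's proof (if included) to be a one- or two-line invocation of either the Hilbert--Schmidt inner product structure or the vectorization identification above.
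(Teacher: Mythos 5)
Your proposal is correct; both the vectorization reduction and the quadratic-optimization argument are valid and complete, including the degenerate case $B=0$. The paper itself offers no proof to compare against --- it states ``Lemma \ref{lem:CS} is a well-known fact thus we skip the proof'' --- so your write-up simply supplies a standard argument the authors omitted (note only the harmless typo in the paper's statement, which quantifies over $M,N$ but writes the inequality in terms of $A,B$).
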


Lemma \ref{lem:CS} is a well-known fact thus we skip the proof.

\begin{lemma}[Variational Characterization of Trace Norm]\label{lem:vcTN}
For $M\in \Xi(m)$, the following variational characterization of the trace norm holds
\begin{align}
\tn{M} = \max_{U} |tr(MU)|,	
\end{align}
 where the maximization is over all $m$-qubit unitary operators $U$.	
\end{lemma}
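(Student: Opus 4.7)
The plan is to establish the equality via two matching inequalities, using the polar decomposition of $M$. Write $M = V|M|$ where $|M| = \sqrt{M^\dagger M}$ is positive semidefinite and $V$ is a partial isometry supported on the range of $|M|$. In finite dimensions, any such partial isometry extends to a unitary on all of $\mathbb{C}^{2^m}$, since the orthogonal complements of its initial and final subspaces have equal dimension. I will henceforth use $V$ to denote such an extended unitary.

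For the achievability direction ($\geq$), I would choose $U = V^\dagger$ and apply trace cyclicity together with $V^\dagger V = I$ to obtain
\begin{align*}
tr(MU) \;=\; tr(V|M|V^\dagger) \;=\; tr(|M|\, V^\dagger V) \;=\; tr(|M|) \;=\; \tn{M}.
\end{align*}
Since $|M|$ is positive semidefinite, the quantity $tr(|M|)$ is real and nonnegative, so $|tr(MU)| = \tn{M}$ for this choice, yielding $\max_U |tr(MU)| \geq \tn{M}$.

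For the upper bound direction ($\leq$), fix an arbitrary unitary $U$ and use the polar decomposition with cyclicity to write $|tr(MU)| = |tr(V|M|U)| = |tr(|M|\, UV)|$. Set $W := UV$, which is unitary as the product of two unitaries. Diagonalize $|M| = \sum_j \lambda_j \ket{e_j}\bra{e_j}$ with $\lambda_j \geq 0$ in an orthonormal basis $\{\ket{e_j}\}$ and bound
\begin{align*}
|tr(|M|W)| \;=\; \Bigl|\sum_j \lambda_j \langle e_j| W |e_j\rangle\Bigr| \;\leq\; \sum_j \lambda_j \,|\langle e_j| W |e_j\rangle| \;\leq\; \sum_j \lambda_j \;=\; \tn{M},
\end{align*}
where the last inequality uses that $W$ unitary implies $|\langle e_j| W |e_j\rangle| \leq \|W \ket{e_j}\|_2 \cdot \|\ket{e_j}\|_2 = 1$ by Cauchy--Schwarz (Lemma~\ref{lem:CS} applied to rank-one operators, or simply the operator bound $\|W\| \leq 1$).

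There is no real obstacle to this proof; the only technical point is the extension of the partial isometry from the polar decomposition to a full unitary, which is standard in finite dimensions. Both inequalities then reduce to trace cyclicity and the elementary bound $|\langle e_j| W |e_j\rangle| \leq 1$ for unitary $W$.
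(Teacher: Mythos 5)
Your proof is correct. It is close in spirit to the paper's argument but differs in the details: the paper works from the singular value decomposition $M = WDV$ and obtains the upper bound by applying the trace Cauchy--Schwarz inequality (Lemma \ref{lem:CS}) to the factors $\sqrt{D}$ and $\sqrt{D}VUW$, then achieves the maximum with $U = V^\dagger W^\dagger$; you instead use the polar decomposition $M = V|M|$ (with the partial isometry extended to a unitary, which you correctly justify in finite dimensions), achieve the maximum with $U = V^\dagger$, and get the upper bound by diagonalizing $|M|$ and using the elementary fact that a unitary has diagonal entries of modulus at most $1$. The two decompositions are essentially interchangeable, and the achievability steps are the same idea; the real divergence is in the upper bound, where your route avoids the trace Cauchy--Schwarz lemma entirely in favor of the pointwise bound $|\langle e_j|W|e_j\rangle| \le 1$, making the argument slightly more self-contained, while the paper's version reuses its Lemma \ref{lem:CS} and so needs no spectral decomposition of $|M|$. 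Both are complete and standard; there is no gap in your argument.
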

\begin{proof}
    Consider singular value decomposition of $M$ as $WDV$, with $W,V$  unitaries and $D$ a diagonal matrix of singular values. Using Cauchy-Schwart inequality Lemma \ref{lem:CS} one gets
    \begin{align}
        |tr(MU)| &= |tr(\sqrt{D} \sqrt{D}VUW)| \nonumber\\ 
        &\leq \sqrt{tr(\sqrt{D}\sqrt{D})}\sqrt{tr\left\{\left(\sqrt{D}VUW\right)^\dagger\left(\sqrt{D}VUW\right)\right\}}\nonumber\\
        &=tr(D).\nonumber
    \end{align}
 Note that  $tr(D)=\tn{M}$.  On the other hand, $\tn{M}$ can be obtained by choosing $U=V^\dagger W^\dagger.$
\end{proof}

\begin{corollary}\label{cor:unitTN}
Let $\ket{\phi},\ket{\varphi}$ be two unit vectors on registers $a,b$. Let $P,Q$ be linear operators on registers $a,b$ with spectrum norm bounded by $1$. Then
\begin{align}
	\tn{tr_{b}(P\ket{\phi}\bra{\varphi}Q)} \leq 1.
\end{align}
In particular, let $P=Q=I$ we get
\begin{align}
	\tn{tr_{b}(\ket{\phi}\bra{\varphi})} \leq 1.
\end{align}
\end{corollary}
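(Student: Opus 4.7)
My plan is to reduce the claim to a single application of the variational characterization of the trace norm (Lemma \ref{lem:vcTN}) followed by Cauchy--Schwarz. First I would absorb $P$ and $Q$ into the vectors by defining $\ket{u} := P\ket{\phi}$ and $\ket{w} := Q^{\dagger}\ket{\varphi}$, both viewed as vectors on registers $a,b$. Since the spectrum norm dominates the $l_2$ norm on unit vectors, we have $\|\ket{u}\|_2 \le \|P\| \le 1$ and likewise $\|\ket{w}\|_2 \le \|Q^{\dagger}\| = \|Q\| \le 1$. Observe that $P\ket{\phi}\bra{\varphi}Q = \ket{u}\bra{w}$, so it suffices to bound $\tn{tr_b(\ket{u}\bra{w})}$.

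Next I would apply Lemma \ref{lem:vcTN} to the operator $tr_b(\ket{u}\bra{w})$, which acts on register $a$: there exists a unitary $U$ on register $a$ achieving
$$\tn{tr_b(\ket{u}\bra{w})} = |tr(tr_b(\ket{u}\bra{w})\cdot U)|.$$
By the standard identity relating partial traces to full traces against operators extended by identity,
$$tr(tr_b(\ket{u}\bra{w})\cdot U) = tr(\ket{u}\bra{w}\cdot (U\otimes I_b)) = \bra{w}(U\otimes I_b)\ket{u}.$$
Applying Cauchy--Schwarz (Lemma \ref{lem:CS} in its vector form, or just the usual inner-product inequality) and noting that $U\otimes I_b$ is unitary on the joint space gives
$$|\bra{w}(U\otimes I_b)\ket{u}| \le \|\ket{w}\|_2 \cdot \|(U\otimes I_b)\ket{u}\|_2 = \|\ket{w}\|_2 \cdot \|\ket{u}\|_2 \le 1.$$
Combining the two displays yields $\tn{tr_b(P\ket{\phi}\bra{\varphi}Q)} \le 1$, which is the desired bound. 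The special case $P = Q = I$ follows immediately.

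There is no real obstacle here; the only point requiring mild care is the bookkeeping of which register the variational unitary $U$ lives on and the fact that we must extend it by $I_b$ when moving the partial trace inside a full trace. Once that identity is in place, the estimate is a one-line Cauchy--Schwarz, and the spectrum-norm hypotheses on $P,Q$ are used solely to control the norms of the two vectors $P\ket{\phi}$ and $Q^{\dagger}\ket{\varphi}$.
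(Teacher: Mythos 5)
Your proposal is correct and follows essentially the same route as the paper's proof: both apply the variational characterization of the trace norm (Lemma \ref{lem:vcTN}), convert the partial trace to a full trace against $U\otimes I_b$, and bound the resulting matrix element $\bra{\varphi}Q(U\otimes I_b)P\ket{\phi}$ by $1$. The only cosmetic difference is that you absorb $P$ and $Q$ into the vectors and invoke Cauchy--Schwarz, while the paper bounds the same quantity directly via the spectral norms of $P$, $Q$, and $U\otimes I_b$.
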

\begin{proof}
	By Lemma \ref{lem:vcTN}, we have that 
	\begin{align}
\tn{tr_{b}(P\ket{\phi}\bra{\varphi}Q)} &= \max_{\text{unitary }  U \text{ on register $a$}} |tr_a\left(tr_{b}\left(P\ket{\phi}\bra{\varphi}Q \right) U\right)|\\
&= |tr_{a,b} \left(P\ket{\phi}\bra{\varphi}Q  U\right)|\\
&=|\bra{\varphi}QUP\ket{\phi}|\\
&\leq 1.
\end{align}
\end{proof}

\begin{lemma}\label{lem:unit}
	Let $\cEN:\Xi(m)\rightarrow \Xi(m)$ be a linear map. If for any unit vector $\ket{u},\ket{v}$, we have $\tn{\cEN(\ket{u}\bra{v})}\leq c$. Then $\norm{\cEN}\leq c$.
\end{lemma}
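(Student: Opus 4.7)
The plan is to reduce the bound on the induced trace norm $\norm{\cEN}$ to the hypothesis (which controls $\cEN$ on rank-one operators) by exploiting the spectral decomposition of Hermitian inputs. Recall that by definition $\norm{\cEN} = \max\{\tn{\cEN(M)} : M \in \cH(m),\ \tn{M}=1\}$, so only Hermitian inputs need to be considered.

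First I would take an arbitrary $M \in \cH(m)$ with $\tn{M}=1$ and invoke the spectral theorem to write $M = \sum_i \lambda_i \ket{\phi_i}\bra{\phi_i}$ with real eigenvalues $\lambda_i$ and orthonormal eigenvectors $\{\ket{\phi_i}\}$. The trace norm of a Hermitian matrix is the sum of the absolute values of its eigenvalues, so $\sum_i |\lambda_i| = \tn{M} = 1$.

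Next I would apply linearity of $\cEN$ to obtain $\cEN(M) = \sum_i \lambda_i \cEN(\ket{\phi_i}\bra{\phi_i})$, then use the triangle inequality for the trace norm and the hypothesis (with $\ket{u}=\ket{v}=\ket{\phi_i}$, which is a unit vector) to bound
\begin{equation*}
\tn{\cEN(M)} \leq \sum_i |\lambda_i|\, \tn{\cEN(\ket{\phi_i}\bra{\phi_i})} \leq c \sum_i |\lambda_i| = c.
\end{equation*}
Since $M$ was an arbitrary Hermitian operator of unit trace norm, this yields $\norm{\cEN} \leq c$.

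There is no real obstacle here: the proof is two lines, and the only subtlety is remembering that the induced norm $\norm{\cdot}$ in this paper is defined via Hermitian inputs, so the spectral decomposition is available and the rank-one pieces $\ket{\phi_i}\bra{\phi_i}$ lie in the class covered by the hypothesis. No approximation or limiting argument is needed, and the hypothesis is used in the weakest possible form (diagonal rank-one projectors).
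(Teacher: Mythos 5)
Your proof is correct. It differs from the paper's in one small but genuine way: the paper takes an arbitrary $N\in \Xi(m)$ with $\tn{N}\leq 1$ and uses its \emph{singular value} decomposition $N=\sum_i \gamma_i \ket{u_i}\bra{v_i}$ (with $\gamma_i\geq 0$, $\sum_i\gamma_i=\tn{N}$, and $\ket{u_i}\neq\ket{v_i}$ in general), so it invokes the hypothesis for genuinely off-diagonal rank-one operators $\ket{u}\bra{v}$ and in fact establishes the stronger conclusion $\tn{\cEN(N)}\leq c$ for all of $\Xi(m)$. You instead restrict to Hermitian $M$ --- which is all the definition $\norm{\cR}=\max_{M\in\cH(m),\tn{M}=1}\tn{\cR(M)}$ requires --- and use the spectral decomposition, so you only ever need the hypothesis for $\ket{u}=\ket{v}$. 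Both arguments are two lines of triangle inequality; yours uses a strictly weaker form of the hypothesis to prove the stated conclusion, while the paper's version yields the bound on non-Hermitian inputs as a free by-product (not needed where the lemma is applied, since Lemma \ref{lem:normJ_new} only needs the induced norm as defined). Either is acceptable here.
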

\begin{proof}
	Given any $N\in \Xi(m)$ with $\tn{N}\leq 1$, consider its spectrum decomposition 
	\begin{align}
		N=\sum_i \gamma_i \ket{u_i}\bra{v_i}.	
	\end{align}
where $\gamma_i\geq 0$ are singular values. $\ket{u_i},\ket{v_i}$ are unit vectors. $\sum_i \gamma_i \leq\tn{N}=1$. 
Then by triangle inequality we have that
\begin{align}
 \tn{\cEN(N)} \leq \sum_i \gamma_i \cdot 	\tn{\cEN(\ket{u}\bra{v})} \leq c
\end{align}
\end{proof}

\begin{lemma}\label{lem:normJ_new}
    Let $P,Q$ be operators on register $1,2,3,4$ with spectrum norm bounded by $1$. Let $\ket{w}$ be a unit vector on register $2,3,4$. For any operator $N$ on register $1$, define 
    \begin{align}
        \cF(N):= tr_{2,3,4}(P ~ [N\otimes \ket{w}\bra{w}]~ Q).
    \end{align}
    Then 
        $\norm{\cF}\leq 1.$
\end{lemma}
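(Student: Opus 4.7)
The plan is to reduce to the rank-one case using Lemma \ref{lem:unit} and then appeal directly to Corollary \ref{cor:unitTN}. Specifically, by Lemma \ref{lem:unit}, to prove $\norm{\cF}\leq 1$ it suffices to establish that $\tn{\cF(\ket{u}\bra{v})}\leq 1$ for every pair of unit vectors $\ket{u},\ket{v}$ on register $1$.

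So I would fix such unit vectors and observe that since $\ket{w}$ is a unit vector on registers $2,3,4$, the tensor products $\ket{u}\otimes\ket{w}$ and $\ket{v}\otimes\ket{w}$ are unit vectors on the joint $(1,2,3,4)$-register Hilbert space. Consequently,
\begin{align}
\cF(\ket{u}\bra{v}) &= tr_{2,3,4}\!\left( P\,\bigl[\ket{u}\bra{v}\otimes \ket{w}\bra{w}\bigr]\,Q \right) \\
&= tr_{2,3,4}\!\left( P\,(\ket{u}\otimes\ket{w})(\bra{v}\otimes\bra{w})\,Q \right).
\end{align}
This expression fits the hypotheses of Corollary \ref{cor:unitTN} verbatim once we identify register $a$ with register $1$ and register $b$ with the composite register $(2,3,4)$, take $\ket{\phi}=\ket{u}\otimes\ket{w}$ and $\ket{\varphi}=\ket{v}\otimes\ket{w}$, and use $P,Q$ as the bounding operators. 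Since $\|P\|,\|Q\|\leq 1$ by hypothesis, Corollary \ref{cor:unitTN} yields $\tn{\cF(\ket{u}\bra{v})}\leq 1$, which is the desired bound.

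There is no real obstacle here: the statement is essentially a convenient packaging of Corollary \ref{cor:unitTN}, with the additional observation that trace norm estimates on rank-one inputs extend to arbitrary inputs via Lemma \ref{lem:unit}. The only minor point to be careful about is keeping track of which register plays the role of the ``traced out'' register in Corollary \ref{cor:unitTN}, but the grouping of registers $2,3,4$ into a single composite register $b$ handles this cleanly.
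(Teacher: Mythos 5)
Your proposal is correct and follows essentially the same route as the paper: reduce to rank-one inputs $N=\ket{u}\bra{v}$ via Lemma \ref{lem:unit}, then apply Corollary \ref{cor:unitTN} to the unit vectors $\ket{u}\otimes\ket{w}$ and $\ket{v}\otimes\ket{w}$ with registers $2,3,4$ grouped as the traced-out register. No gaps.
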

\begin{proof}
    By Lemma \ref{lem:unit} it suffices to assume $N=\ket{u}\bra{v}$ for unit vectors $\ket{u},\ket{v}$ and prove $\tn{\cF(N)}\leq 1$.
    Since $\ket{u}\ket{w}$ and $\ket{v}\ket{w}$  are unit vectors on register $1,2,3,4$. By Corollary \ref{cor:unitTN} we have $\norm{\cF}\leq 1$.
\end{proof}

\begin{lemma}	\label{lem:CPTP}
[Trace-Norm non-increasing of CPTP]
	Let $\cEN:\cH(m)\rightarrow \cH(m)$ be a CPTP map. We have $\norm{\cEN}\leq 1$ .That is for any Hermitian operator $M\in \cH(m)$, $\tn{\cEN(M)}\leq \tn{M} $.
\end{lemma}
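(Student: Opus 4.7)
The plan is to invoke the Jordan decomposition of a Hermitian operator to reduce the statement to how $\cEN$ acts on positive operators, where the CPTP hypothesis becomes directly useful. Concretely, I would first observe that for any $M \in \cH(m)$ the spectral decomposition $M = \sum_i \lambda_i \ket{\phi_i}\bra{\phi_i}$ yields a splitting $M = M_+ - M_-$, where $M_+ := \sum_{i:\lambda_i \ge 0} \lambda_i \ket{\phi_i}\bra{\phi_i} \ge 0$ and $M_- := -\sum_{i:\lambda_i < 0} \lambda_i \ket{\phi_i}\bra{\phi_i} \ge 0$ have orthogonal supports. Because of this orthogonality, the singular values of $M$ are just the absolute values of its eigenvalues, so $\tn{M} = tr(M_+) + tr(M_-)$.

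Next I would exploit the two defining properties of a CPTP map separately. Complete positivity implies, in particular, positivity, so $\cEN(M_+), \cEN(M_-) \ge 0$. For any positive operator $N$ one has $\tn{N} = tr(N)$, hence $\tn{\cEN(M_\pm)} = tr(\cEN(M_\pm))$. The trace-preserving property then gives $tr(\cEN(M_\pm)) = tr(M_\pm)$.

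Finally, combining linearity of $\cEN$ with the triangle inequality for the trace norm yields
$$\tn{\cEN(M)} \;=\; \tn{\cEN(M_+) - \cEN(M_-)} \;\le\; \tn{\cEN(M_+)} + \tn{\cEN(M_-)} \;=\; tr(M_+) + tr(M_-) \;=\; \tn{M}.$$
Taking the supremum over Hermitian $M$ with $\tn{M} = 1$ gives $\norm{\cEN} \le 1$, as desired.

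There is no genuinely hard step here; the proof is essentially a one-trick argument, and the only point worth double-checking is the identity $\tn{M_+ - M_-} = tr(M_+) + tr(M_-)$ for positive operators with orthogonal supports, which is immediate from the spectral decomposition of $M$. A minor subtlety is that the lemma is stated only on the real subspace $\cH(m)$ of Hermitian operators, which is exactly what allows the Jordan decomposition to be performed; the bound does not hold verbatim on non-Hermitian inputs without an extra factor, but this is outside the scope of what we need.
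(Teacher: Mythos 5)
Your proof is correct and is essentially the same argument as the paper's: both decompose the Hermitian $M$ spectrally, use that complete positivity plus trace preservation forces $\tn{\cEN(N)}=tr(\cEN(N))=tr(N)$ for positive $N$, and finish with the triangle inequality. The only cosmetic difference is that you group the eigenvalues into the two positive parts $M_\pm$ of the Jordan decomposition, whereas the paper applies the triangle inequality term by term over the rank-one projectors $\ket{\phi_j}\bra{\phi_j}$ and notes each image is a quantum state of trace norm $1$.
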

\begin{proof}
	Consider the spectrum decomposition of $M=\sum_j \lambda_j\ket{\phi_j}\bra{\phi_j}$. We have that
	\begin{align}
		\tn{\cEN(M)} \leq \sum_j |\lambda_j| \cdot \tn{\cEN(\ket{\phi_j}\bra{\phi_j})}
		=\sum_j |\lambda_j| 
		=\tn{M}.
		\end{align}
where the first equality comes from the fact that $\cEN(\ket{\phi_j}\bra{\phi_j})$ is a quantum mixed state thus its trace norm is equal to $1$.
\end{proof}

\begin{lemma}[Connecting trace norm and $\sigma$-norm]\label{lem:sig_tn}
Let $\sigma$ be a Hermitian matrix where $\sigma\geq I$. Then  $\forall M \in \Xi(n)$, we have that
       $ \tn{M}^2 \leq   2^n\sn{M}{\sigma}^2.$
\end{lemma}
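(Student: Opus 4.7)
\textbf{Proof proposal for Lemma \ref{lem:sig_tn}.}

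The plan is to sandwich $\sn{M}{\sigma}^2$ between the squared Frobenius norm and the squared trace norm, and then invoke the standard Cauchy--Schwarz comparison between the trace and Frobenius norms on a $2^n$-dimensional space. First, unpack the definition of the $\sigma$-norm: by the inner product introduced just above Theorem \ref{thm:mix_Ls_gap}, we have
\begin{equation*}
\sn{M}{\sigma}^2 \;=\; \langle M, M\rangle_\sigma \;=\; tr\!\left(\sigma^{1/2} M^\dagger \sigma^{1/2} M\right).
\end{equation*}
Rewriting this via the cyclic property as $tr\!\bigl((\sigma^{1/4} M \sigma^{1/4})^\dagger (\sigma^{1/4} M \sigma^{1/4})\bigr)$ shows that $\sn{M}{\sigma}^2 \ge 0$ and equals the squared Frobenius (Hilbert--Schmidt) norm of $\sigma^{1/4} M \sigma^{1/4}$.

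Next I would lower-bound $\sn{M}{\sigma}^2$ by the plain Frobenius norm $\|M\|_F^2 = tr(M^\dagger M)$. Diagonalize $\sigma = \sum_i \lambda_i \ketbra{e_i}{e_i}$ with $\lambda_i \ge 1$ by hypothesis $\sigma \ge I$. Then $\sigma^{1/2} = \sum_i \sqrt{\lambda_i}\,\ketbra{e_i}{e_i}$ and a direct computation in this basis gives
\begin{equation*}
\sn{M}{\sigma}^2 \;=\; \sum_{i,j} \sqrt{\lambda_i \lambda_j}\,|\langle e_i| M |e_j\rangle|^2 \;\ge\; \sum_{i,j} |\langle e_i| M |e_j\rangle|^2 \;=\; tr(M^\dagger M) \;=\; \|M\|_F^2,
\end{equation*}
using $\sqrt{\lambda_i \lambda_j}\ge 1$.

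Finally, compare the trace norm to the Frobenius norm via Cauchy--Schwarz on singular values. If $s_1,\dots,s_D$ are the singular values of $M$ with $D\le 2^n$, then $\tn{M} = \sum_i s_i$ and $\|M\|_F^2 = \sum_i s_i^2$, so
\begin{equation*}
\tn{M}^2 \;=\; \Bigl(\sum_{i=1}^D s_i\Bigr)^2 \;\le\; D \sum_{i=1}^D s_i^2 \;\le\; 2^n \|M\|_F^2 \;\le\; 2^n \sn{M}{\sigma}^2,
\end{equation*}
which is the claimed bound. I do not anticipate a substantive obstacle here; the only step that needs care is verifying the first inequality $\sn{M}{\sigma}^2 \ge \|M\|_F^2$, which follows cleanly in the eigenbasis of $\sigma$.
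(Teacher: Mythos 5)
Your proof is correct and follows essentially the same two-step structure as the paper's: lower-bound $\sn{M}{\sigma}^2$ by $tr(M^\dagger M)$, then apply Cauchy--Schwarz to the singular values to get $\tn{M}^2 \leq 2^n\, tr(M^\dagger M)$. The only (immaterial) difference is that you verify the first inequality by diagonalizing $\sigma$ and comparing term by term, whereas the paper writes $\sigma^{1/2} = I + (\sigma^{1/2}-I)$ and uses that the trace of a product of two positive semidefinite matrices is nonnegative.
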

\begin{proof}
Note that for any two positive Semi-definite Hermitian matrices $A\geq 0,B\geq 0$, we have $tr(AB)\geq 0$. Further note that $\sigma^{1/2}-I\geq 0$, $M^\dagger \sigma^{1/2} M \geq 0$, and both of them are Hermitian. Thus we have 
\begin{align}
    \sn{M}{\sigma}^2 &=  tr(I M^\dagger I M) + tr(I M^\dagger \left(\sigma^{1/2}-I\right) M) +  tr((\sigma^{1/2}-I) \cdot M^\dagger \sigma^{1/2} M)\\
    &\geq  tr(I M^\dagger I M).
\end{align}
Denote the singular values of $M$ as $\{a_j\}_j$. (By definition $a_j\geq 0$). By Cauchy inequality 
\begin{align}
 \left( \sum_j a_j \right)^2 \leq  2^n \sum_j a_j^2 \quad
   \Rightarrow  \quad \tn{M}^2 \leq 2^n  tr(I M^\dagger I M)\leq  2^n \sn{M}{\sigma}^2.
\end{align}
\end{proof}

\section{Bounding Mixing time w.r.t. spectral gap of $\cLs$}
\label{appendix:mixing}

This section is based on \cite{wolf2012quantum} and private communication with Yu Tong. This section is primarily for the  $\cLt$ in our algorithm, while here we write a slightly more general proof for  Lindbladian $\cLt$ satisfying the following assumptions:
\begin{assumption}\label{ass:L} We assume  $e^{t\cLt}$ is CPTP for $t\geq 0$ and satisfies
    \begin{itemize}
        \item $\cLt(\rho) = \sum_{j} L_j \rho L_j^\dagger -\frac{1}{2}\left\{L_j^\dagger L_j,\rho\right\}_+$ for some matrices $\{L_j\}_j\subseteq \Xi(n)$.
        \item $\{L_j\}_j$ generates the full algebra $\Xi(n)$. There exists a unique $\rho_\cLt$ satisfying $\cLt(\rho_\cLt)=0$. Besides, this  $\rho_\cLt$ is a full-rank quantum state.
    \end{itemize}
\end{assumption}
 We use $\sigma$ to denote $\rho_\cLt^{-1}$.
 Note that $\cLt$ in our algorithm satisfies Assumption \ref{ass:L} by Lemma \ref{lem:Lind_form} and proofs in Section \ref{sec:uniqueness}.

\textbf{Notations.}
Here we define two different inner products on  $\Xi(n)$, that is the Schmidt-Hilbert inner product and the weighted inner product w.r.t. $\sigma$: For any $M,N\in \Xi(n)$,
\begin{align}
	&\langle	M,N\rangle := tr\left( M^\dagger N\right),\\
	&\langle	M,N\rangle_\sigma := tr\left( \sigma^{1/2} M^\dagger \sigma^{1/2} N\right).
\end{align}

We use $\cLd$ to denote the dual map w.r.t. Schmidt inner product. We use $\cLt^*$ to denote the dual map w.r.t. $\langle,\rangle_\sigma$. That is
\begin{align}
    &\langle M,\cLd(N)\rangle = \langle \cLt(M),N\rangle,\\   &\langle M,\cLt^*(N)\rangle_\sigma = \langle \cLt(M),N\rangle_\sigma. 
\end{align}
$\langle,\rangle_\sigma$ induces a norm where 
 $\sn{N}{\sigma}^2:=\langle N,N\rangle_\sigma$.
One can check that for any $N\in \Xi(n)$,
\begin{align}
 &\cLt^* (N)	= \sigma^{-\frac{1}{2}} \cLd \left( \sigma^{\frac{1}{2}}N \sigma^{\frac{1}{2}}\right)\sigma^{-\frac{1}{2}}\\
 &\cLd(N) =  \sum_{j} L_j^\dagger N L_j  -\frac{1}{2}\left\{L_j^\dagger L_j,N\right\}_+.
\end{align}
Define the symmetrized version of $\cLt$ as 
$$
\cLs:=\frac{\cLt+\cLt^*}{2}.
$$ By definition $\cLs$ is Hermitian w.r.t. $\langle	,\rangle_\sigma$, thus is diagonalizable and has a real spectrum.

\subsection{Properties on $\cLt$ and $\cLs$.}

In this section, we  prove some basic properties of $\cLt$ and $\cLs$. We  use the following theorem.

\begin{theorem}[Generators for Semigroup of quantum channels,  Theorem 7.1 in  \cite{wolf2012quantum}]\label{thm:CPTP}
	Consider a linear map $\cP:\Xi(n)\rightarrow \Xi(n)$. Then for any $t\geq 0$, $e^{t\cP}$ is CPTP if there
	exists a set of matrices $\{P_j\in\Xi(n)\}_j$ and a Hermitian $H_{system}$ such that
	\begin{align}
		\cP(\rho) = i [\rho,H_{system}] + \sum_{j} P_j \rho P_j^\dagger -\frac{1}{2}\left\{P_j^\dagger P_j,\rho\right\}_+.\label{eq:Lstandard}
	\end{align}
\end{theorem}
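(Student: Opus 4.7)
\begin{rproof}
My plan is to verify the two defining properties of a CPTP map separately: trace preservation and complete positivity. For trace preservation, the calculation is essentially algebraic. Using $tr([\rho,H_{system}])=0$ by cyclicity and $tr(P_j\rho P_j^\dagger)=tr(P_j^\dagger P_j\rho)$, I get
\begin{align*}
tr(\cP(\rho))&= i\cdot 0 + \sum_j\Big( tr(P_j^\dagger P_j\rho) - \tfrac12 tr(P_j^\dagger P_j\rho) - \tfrac12 tr(\rho P_j^\dagger P_j)\Big) = 0.
\end{align*}
Differentiating $t\mapsto tr(e^{t\cP}(\rho))$ and invoking $tr(\cP(\,\cdot\,))=0$ yields $tr(e^{t\cP}(\rho))=tr(\rho)$ for all $t\ge 0$.

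For complete positivity, the plan is to decompose $\cP$ into a ``drift'' piece that is manifestly CP at the exponential level and a CP ``jump'' piece. Concretely, set
\begin{align*}
G &:= iH_{system} + \tfrac12 \sum_j P_j^\dagger P_j, \qquad \Phi(\rho):=\sum_j P_j\rho P_j^\dagger,\qquad \cY(\rho):=-G\rho - \rho G^\dagger,
\end{align*}
so that $\cP=\cY+\Phi$. Then $e^{t\cY}(\rho)=e^{-tG}\rho e^{-tG^\dagger}$ is a single-Kraus-operator map, hence CP for every $t\ge 0$, while $\Phi$ is CP by inspection since it is already in Kraus form.

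The next step is to express $e^{t\cP}$ as a convergent series of compositions of CP maps via Duhamel/Dyson expansion. Starting from $\tfrac{d}{ds} e^{(t-s)\cY}e^{s\cP}=e^{(t-s)\cY}\Phi e^{s\cP}$ and integrating on $[0,t]$ gives $e^{t\cP}=e^{t\cY}+\int_0^t e^{(t-s)\cY}\Phi e^{s\cP}\,ds$. Iterating this identity produces the series
\begin{align*}
e^{t\cP} \;=\; \sum_{k=0}^{\infty}\int_{0\le s_1\le\cdots\le s_k\le t} e^{(t-s_k)\cY}\,\Phi\,e^{(s_k-s_{k-1})\cY}\,\Phi\,\cdots\,\Phi\,e^{s_1\cY}\,ds_1\cdots ds_k .
\end{align*}
Every integrand is a finite composition of CP maps, hence CP; each term is therefore CP, and the cone of CP maps is closed under nonnegative linear combinations and norm limits, so the sum is CP provided the series converges in operator norm. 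Convergence follows from a uniform bound $\|e^{s\cY}\|\le e^{s\|G+G^\dagger\|/2}$ (or any similar coarse bound) together with $\|\Phi\|<\infty$, which makes the $k$-th term bounded by $(t\|\Phi\|\cdot C_t)^k/k!$ for some $C_t$ depending continuously on $t$.

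The only non-routine issue is justifying the Dyson iteration and its termwise CP interpretation on the finite-dimensional space $\Xi(n)$; this is standard but deserves a short verification that the partial sums $\sum_{k\le K}(\cdots)$ converge to $e^{t\cP}$ in norm and that each partial sum is CP. Once both trace preservation and CP are in hand, combining them gives that $e^{t\cP}$ is CPTP for every $t\ge 0$, completing the proof.
\end{rproof}
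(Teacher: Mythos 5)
The paper does not prove this statement at all: it is imported verbatim as Theorem 7.1 of \cite{wolf2012quantum} and used as a black box (in Lemma \ref{lem:eLCPTP} and Theorem \ref{thm:assumptions}), so there is no in-paper proof to compare against. Your argument is a correct, self-contained proof of the stated (sufficiency) direction, and it is essentially the standard semigroup argument one finds in the reference. Trace preservation via $tr(\cP(\cdot))=0$ and differentiation of $t\mapsto tr(e^{t\cP}(\rho))$ is fine. For complete positivity, your splitting $\cP=\cY+\Phi$ with $G=iH_{system}+\tfrac12\sum_j P_j^\dagger P_j$ correctly reproduces $\cP$ (the commutator sign works out), $e^{t\cY}(\rho)=e^{-tG}\rho e^{-tG^\dagger}$ is CP as a single-Kraus conjugation, $\Phi$ is CP in Kraus form, and the Duhamel identity $e^{t\cP}=e^{t\cY}+\int_0^t e^{(t-s)\cY}\Phi e^{s\cP}\,ds$ iterates to a Dyson series whose terms are compositions and positive-weighted integrals of CP maps. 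Since everything lives on the finite-dimensional space $\Xi(n)$, the cone of CP maps is closed, the simplex volumes give the $t^k/k!$ suppression you cite, and the partial sums converge in norm to $e^{t\cP}$; the "non-routine issue" you flag is genuinely routine here. The one thing worth making explicit if you write this out fully is that an integral of a continuous family of CP maps is CP (approximate by Riemann sums, which are nonnegative combinations, and pass to the limit in the closed cone); with that sentence added the proof is complete.
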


\begin{theorem} \label{thm:assumptions}
 For any $t\geq 0$,
	\begin{enumerate}
		\item[(\romannumeral1)] $\cLt(\rho_\cLt)=0$, $e^{t\cLt}$ is CPTP and  $\cLd(I)=0$.
	\item[(\romannumeral2)]  $e^{t\cLt^*} $ and $e^{t\cLs}$ are CPTP.
			\item[(\romannumeral3)] $\cLs$ has a unique fixed point, that is $\rho_\cLt$. 
		\item[(\romannumeral4)] The spectrum of $\cLs$ lies in $[-\infty,0]$.
		\item[(\romannumeral5)] The spectral gap of $\cLs$ is strictly greater than $0$, which is denoted as $\Upsilon$ and equals to
		\begin{align}
		\Upsilon (\cLs) = \min_{\langle N,N\rangle_\sigma =1 \atop \langle N,\rho_\cLt\rangle_\sigma=0} -\langle N,\cLs(N)\rangle_\sigma >0. \label{eq:specGap}
		\end{align}
	\end{enumerate}
 Since $\cLs$ is clear in the context we abbreviate $\Upsilon (\cLs)$  as $\Upsilon$.
\end{theorem}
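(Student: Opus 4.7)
The plan is to prove the five items in order, using Assumption~\ref{ass:L} and Theorem~\ref{thm:CPTP}. For (i), both $\cLt(\rho_\cLt)=0$ and the CPTP property of $e^{t\cLt}$ are part of Assumption~\ref{ass:L}. To extract $\cLd(I)=0$, I would differentiate the trace-preservation identity $tr(e^{t\cLt}(\rho))=tr(\rho)$ at $t=0$, obtaining $tr(\cLt(\rho))=0$ for every $\rho$, that is $\langle\cLd(I),\rho\rangle=0$ for every $\rho$, hence $\cLd(I)=0$.

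For (ii), I would first observe that $\cLt^*(\rho_\cLt)=\sigma^{-1/2}\cLd(\sigma^{1/2}\sigma^{-1}\sigma^{1/2})\sigma^{-1/2}=\sigma^{-1/2}\cLd(I)\sigma^{-1/2}=0$ by (i), and that $\cLt^*$ is trace preserving since $tr(\cLt^*(N))=tr(\rho_\cLt\,\cLd(\sigma^{1/2}N\sigma^{1/2}))=tr(\cLt(\rho_\cLt)\,\sigma^{1/2}N\sigma^{1/2})=0$ by HS duality and (i). The main step is to recast $\cLt^*$ in GKSL form. Setting $K_j:=\sigma^{-1/2}L_j^\dagger\sigma^{1/2}$ and $A:=\sigma^{-1/2}\!\left(\sum_j L_j^\dagger L_j\right)\!\sigma^{1/2}$, the definition of $\cLt^*$ gives $\cLt^*(N)=\sum_j K_j N K_j^\dagger-\tfrac{1}{2}(AN+NA^\dagger)$; splitting $A=B+iH'$ with $B:=(A+A^\dagger)/2$ and $H':=(A-A^\dagger)/(2i)$ both Hermitian converts this to $\cLt^*(N)=\sum_j K_j N K_j^\dagger-\tfrac{1}{2}\{B,N\}_+ + i[N,H'/2]$. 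The trace-preservation identity applied to a general $N$ then forces the identity $B=\sum_j K_j^\dagger K_j$, so $\cLt^*$ is in bona fide GKSL form and Theorem~\ref{thm:CPTP} yields that $e^{t\cLt^*}$ is CPTP. For $e^{t\cLs}$, I would appeal to the Lie--Trotter product formula $e^{t\cLs}=\lim_{n\to\infty}\bigl(e^{(t/(2n))\cLt}\,e^{(t/(2n))\cLt^*}\bigr)^n$, together with the fact that CPTP maps are closed under composition and pointwise limits.

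For (iii), $\cLs(\rho_\cLt)=0$ follows from (i) combined with $\cLt^*(\rho_\cLt)=0$. For uniqueness, the decompositions above exhibit $\cLs$ in GKSL form with jump operators proportional to $\{L_j\}\cup\{K_j\}$; since conjugation by $\sigma^{\pm1/2}$ is an algebra-preserving bijection on $\Xi(n)$, the set $\{K_j\}$ generates the same full algebra as $\{L_j\}$, so Theorem~\ref{thm:wol12} applied to $\cLs$ delivers a unique full-rank fixed point, which must be $\rho_\cLt$.

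For (iv), since $\cLs$ preserves the Hermitian subspace and is self-adjoint with respect to $\langle,\rangle_\sigma$, the spectral theorem gives a real spectrum with Hermitian eigenvectors; if any eigenvalue $\lambda>0$ existed with Hermitian eigenvector $N$, then $\tn{e^{t\cLs}(N)}=e^{t\lambda}\tn{N}$ would contradict the trace-norm contractivity of CPTP maps on Hermitian operators (Lemma~\ref{lem:CPTP}). For (v), uniqueness from (iii) makes $0$ a simple eigenvalue, so the remaining spectrum is strictly negative and hence $\Upsilon>0$; the variational formula~(\ref{eq:specGap}) is the Rayleigh--Ritz characterization of the top eigenvalue of $-\cLs$ restricted to the $\langle,\rangle_\sigma$-orthogonal complement of $\rho_\cLt$. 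The principal obstacle is step (ii): one must verify carefully that the trace-preservation condition $\cLt^*(I)=0$ produces exactly the identity $B=\sum_j K_j^\dagger K_j$ needed to close up the GKSL form, turning the formal manipulation into a bona fide Lindbladian and unlocking the rest of the argument.
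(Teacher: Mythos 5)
Your proposal is correct and follows essentially the same route as the paper: conjugation to obtain $\cLt^*$, using the fixed-point condition $\cLt(\rho_\cLt)=0$ to close up its GKSL form (your trace-preservation identity $\sum_j K_j^\dagger K_j = B$ is exactly the paper's $V=\tfrac{1}{2}(K+K^\dagger)$ in different notation), Theorem~\ref{thm:wol12} for uniqueness, and Hermiticity plus CPTP trace-norm contractivity for the spectrum. The only cosmetic deviations are deriving $\cLd(I)=0$ from trace preservation rather than by direct computation, and invoking Lie--Trotter for $e^{t\cLs}$ where the paper simply exhibits the Lindblad form of $\cLs$ directly (which you also have, and need, for item (iii)).
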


\begin{proof} For (\romannumeral1):    Besides, by Assumption \ref{ass:L} we have $\cLt(\rho_\cLt)=0$. Besides
\begin{align}
		&\cLt(N) =  \sum_{j} L_j N L_j^\dagger -\frac{1}{2}\left\{L_j^\dagger L_j,N\right\}_+.\label{eq:Lss}\\
            &\cLd(N) =  \sum_{j} L_j^\dagger N L_j  -\frac{1}{2}\left\{L_j^\dagger L_j,N\right\}_+.
	\end{align}
	One can directly check $\cLd(I)=0$. Besides $e^{t\cLt}$ is CPTP by Theorem \ref{thm:CPTP} by setting $H_{system}=0$. 
	
	For (\romannumeral2): 
	we have
\begin{align}
	\cLt^{*}(N) &=  \sigma^{-\frac{1}{2}} \cLd \left( \sigma^{\frac{1}{2}}N \sigma^{\frac{1}{2}}\right)\sigma^{-\frac{1}{2}}\\
&=\sum_{j} \sigma^{-\frac{1}{2}}L_j^\dagger \sigma^{\frac{1}{2}} \cdot N \cdot  \sigma^{\frac{1}{2}} L_j  \sigma^{-\frac{1}{2}} - \sigma^{-\frac{1}{2}}\frac{1}{2}\left\{L_j^\dagger L_j,\sigma^{\frac{1}{2}}N\sigma^{\frac{1}{2}}\right\}_+ \sigma^{-\frac{1}{2}}\\
&=  \sum_{j} \sigma^{-\frac{1}{2}}L_j^\dagger \sigma^{\frac{1}{2}} \cdot N \cdot  \sigma^{\frac{1}{2}} L_j  \sigma^{-\frac{1}{2}} - \frac{1}{2}\left(  \sigma^{-\frac{1}{2}} L_j^\dagger L_j \sigma^{\frac{1}{2}}\cdot N + N\cdot \sigma^{\frac{1}{2}} L_j^\dagger L_j\sigma^{-\frac{1}{2}}\right).\label{eq:Ls1}
\end{align}
We will prove by properly defining a Hermitian $H_{system}$, the $\cLt^*$ can be written in the form of Eq.~(\ref{eq:Lstandard}) in Theorem \ref{thm:CPTP},  which will imply $\cLt^*$ is CPTP. Denote 
\begin{align}
	&O_j :=  \sigma^{-\frac{1}{2}} L_j^\dagger  \sigma^{\frac{1}{2}}, \quad K:=\frac{1}{2}\,\sigma^{-\frac{1}{2}} L_j^\dagger L_j \sigma^{\frac{1}{2}}, \label{eq:Ls2}\\
	&V :=  \frac{1}{2}\sum_j O_j^\dagger O_j, \quad 
	H_{system}:= \frac{1}{2i}\left(K-K^\dagger\right) \label{eq:Ls3}
\end{align}
Recall that $\cLt(\rho_\cLt)=0$ and $\sigma=\rho_\cLt^{-1}$, we have
\begin{align}
&\sum_j L_j \sigma^{-1} L_j^\dagger -\frac{1}{2}\left(L_j^\dagger L_j \sigma^{-1} +	 \sigma^{-1} L_j^\dagger L_j\right)=0. \\
\Rightarrow & \sigma^{\frac{1}{2}}\left(\sum_j L_j \sigma^{-1} L_j^\dagger -\frac{1}{2}\left(L_j^\dagger L_j \sigma^{-1} +	 \sigma^{-1} L_j^\dagger L_j\right)\right)  \sigma^{\frac{1}{2}}=0.\\
\Rightarrow & \sum_j  \sigma^{\frac{1}{2}} L_j  \sigma^{-\frac{1}{2}}\cdot  \sigma^{-\frac{1}{2}} L_j^\dagger  \sigma^{\frac{1}{2}}-\frac{1}{2}\left( \sigma^{\frac{1}{2}} L_j^\dagger L_j \sigma^{-\frac{1}{2}} +	 \sigma^{-\frac{1}{2}} L_j^\dagger L_j \sigma^{\frac{1}{2}}\right)  =0.
\end{align}
Thus 
\begin{align}
&V= \frac{1}{2}\left(K^\dagger +K\right)\quad \Rightarrow\quad K = V + iH_{system}.\label{eq:Ls4}
\end{align}
where the $\Rightarrow$ we use $H_{system}:= \frac{1}{2i}\left(K-K^\dagger\right)$.
One can verify
\begin{align}
\cLt^*(N) &= \sum_j O_j NO_j^\dagger - \left(KN+NK^\dagger\right)\\
&= \sum_j O_j NO_j^\dagger - \left(VN+NV\right) + i[N,H_{system}]\\
&=i[N,H_{system}] +\sum_j O_j N O_j^\dagger -\frac{1}{2}\left\{O_j^\dagger O_j,N\right\}_+	 \label{eq:Lstar_Lind}
\end{align}
where the first equality comes from Eqs.~(\ref{eq:Ls1})(\ref{eq:Ls2})(\ref{eq:Ls3}), the second equality comes from (\ref{eq:Ls4}), and the last equality comes from definition of $V$ in Eq.~(\ref{eq:Ls3}).
One can check that $H_{system}$ is Hermitian. Since $\cLs=\frac{1}{2}(\cLt+\cLt^*)$, from Eqs.~(\ref{eq:Lss})(\ref{eq:Lstar_Lind}) we known that $\cLs$ can also be written as Eq.~(\ref{eq:Lstandard}), where the Lindbladian jump operator is $\{\frac{1}{\sqrt{2}}L_j\}\cup \{\frac{1}{\sqrt{2}}O_j\}$. Thus by Theorem \ref{thm:CPTP} we conclude that $e^{t\cLs}$ is CPTP for $t\geq 0$. 

For (\romannumeral3): From Assumption \ref{ass:L} we already have $\{\frac{1}{\sqrt{2}}L_j\}$  generates the full algebra. Thus so does $\{\frac{1}{\sqrt{2}}L_j\}\cup \{\frac{1}{\sqrt{2}}O_j\}$.  By Theorem \ref{thm:wol12} we conclude that there exists a unique $\rho$ such that $\cLs(\rho)=0$. Besides, from Eq.~(\ref{eq:Ls1}) one can directly verify that 
	$\cLt^*(\rho_\cLt)=0$ thus 	$\cLs(\rho_\cLt)=0$.

For (\romannumeral4): By definition of $\cLs:\cH(n)\rightarrow\cH(n)$ it is Hermitian w.r.t. $\langle,\rangle_\sigma$. Thus $\cLs$ is diagonalizable and the spectrum is real. With contradiction assume  there exists an eigenstate $N\in \cH(n), N\neq 0$ with eigenvalue $\lambda>0$. Then $\tn{e^{t\cLs}(N)}=e^{\lambda t}\tn{N}$ goes to infinity as $t\rightarrow \infty$, which contradicts with the fact that $e^{t\cLs}$ is CPTP from (\romannumeral2), where CPTP map does not increase trace norm of Hermitian operators (Lemma \ref{lem:CPTP} in Appendix \ref{appendix:Matrix_Norm}).

For (\romannumeral5): By (\romannumeral3)(\romannumeral4) we know the spectral gap is strictly greater than 0, and the minimum eigenvector is $\rho_\cLt$ and has eigenvalue 0. Then Eq.~(\ref{eq:specGap}) is just by definition of spectral gap.  
\end{proof}

\subsection{Proof of Theorem \ref{thm:mix_Ls_gap}}
\begin{proof}[of Theorem \ref{thm:mix_Ls_gap}]
All the properties of $\cLt$ we use are summarized in Assumption \ref{ass:L}.
	Define
$$h_t	 =e^{t\cLt}(\rho) -\rho_\cLt.$$
Since $\cLt(\rho_\cLt)=0$, one can check that
\begin{align}
	\frac{d}{dt} \langle h_t,h_t\rangle_\sigma = 2\left\langle h_t,\cLs(h_t)\right\rangle_\sigma.	\label{eq:ht}
\end{align}
Further note that since $e^{t\cLt}$ is trace-preserving from Theorem \ref{thm:assumptions} (\romannumeral1), we have $tr(e^{t\cLt}(\rho))=1$. One can verify that  $
\langle h_t,\rho_\cLt\rangle_\sigma = 0.$
Then by Eq.~(\ref{eq:ht}) and Eq.~(\ref{eq:specGap}), we have 
\begin{align}
	\frac{d}{dt} \langle h_t,h_t\rangle_\sigma &= 2 \left\langle \frac{h_t}{\sn{h_t}{\sigma}},\cLs\left(\frac{h_t}{\sn{h_t}{\sigma} } \right)\right\rangle_\sigma \cdot \langle h_t,h_t\rangle_\sigma\\
	&\leq -2 \Upsilon \cdot \langle h_t,h_t\rangle_\sigma.
\end{align}
Then by Gronwall's inequality we have
\begin{align}
	&\langle h_t,h_t\rangle_\sigma \leq \langle h_0,h_0\rangle_\sigma \cdot \exp(-2\Upsilon t ).\label{eq:decay}
\end{align}
Recall that $h_0=\rho-\rho_\cLt$. Thus we have
\begin{align}
\tn{e^{t\cLt}(\rho)-\rho_\cLt}	&\leq  2^{n/2} \sn{e^{t\cLt}(\rho)-\rho_\cLt}{\sigma}\\
& \leq  2^{n/2} \sn{\rho-\rho_\cLt}{\sigma} \cdot \exp(-\Upsilon t )\\
&\leq  2^{n/2}  \cdot \sqrt{tr(\sigma^{\frac{1}{2}}\rho\sigma^{\frac{1}{2}}\rho)}\cdot \exp(-\Upsilon t ).
\end{align}
where the first inequality comes from connecting trace-norm and $\sn{\cdot}{\sigma}$ by Lemma \ref{lem:sig_tn} in Appendix \ref{appendix:Matrix_Norm}. Note that $\sigma\geq I$ since $\sigma=\rho_\cLt^{-1}$ and $\rho_\cLt$ is  a quantum state. The second inequality comes from Eq.~(\ref{eq:decay}). The last inequality comes from $\sigma=\rho_\cLt^{-1}$ and $tr(\rho-\rho_\cLt)=0$ since both $\rho$ and $\rho_\cLt$ are quantum states, and 
\begin{align}
    \sn{\rho-\rho_\cLt}{\sigma}^2 = tr(\sigma^{\frac{1}{2}} (\rho-\rho_\cLt) \sigma^{\frac{1}{2}}(\rho-\rho_\cLt)) = tr(\sigma^{\frac{1}{2}}\rho \sigma^{\frac{1}{2}} \rho)-1.
\end{align}
\end{proof}

\bibliographystyle{alpha}
\bibliography{ref.bib}

\end{document}